\documentclass[english,reqno]{article}
\usepackage{fullpage}
\usepackage{appendix}

\usepackage{amssymb}
\usepackage{amsmath}
\usepackage{amsthm}
\usepackage{dsfont}

\allowdisplaybreaks

\usepackage{xcolor}

\usepackage{mathrsfs}
\usepackage{enumerate}
\usepackage{comment}

\usepackage{physics}

\theoremstyle{remark}
\newtheorem{Remark}{Remark}[section]
\newtheorem{Remarks}{Remarks}

\theoremstyle{plain}
\newtheorem{Theorem}{Theorem}[section]
\newtheorem{Lemma}{Lemma}[section]

\newtheorem{Proposition}{Proposition}[section]
\newtheorem{Assumption}{Assumption}

\DeclareMathOperator{\supp}{supp}

\renewcommand{\tilde}{\widetilde}

\newcommand{\cF}{\mathcal{F}}

\newcommand{\tp}{\widetilde{p}}
\newcommand{\tq}{\widetilde{q}}
\newcommand{\tu}{\widetilde{u}}
\newcommand{\tv}{\widetilde{v}}
\newcommand{\tw}{\widetilde{w}}
\newcommand{\tx}{\widetilde{x}}
\newcommand{\ty}{\widetilde{y}}

\newcommand{\txi}{\widetilde{\xi}}
\newcommand{\teta}{\widetilde{\eta}}
\newcommand{\dx}{\mathrm{d}x}
\newcommand{\dy}{\mathrm{d}y}
\newcommand{\dtx}{\mathrm{d}\tx}
\newcommand{\dty}{\mathrm{d}\ty}
\newcommand{\ddp}{\mathrm{d}p}
\newcommand{\dtp}{\mathrm{d}\tp}
\newcommand{\dq}{\mathrm{d}q}
\newcommand{\dtq}{\mathrm{d}\tq}
\newcommand{\dw}{\mathrm{d}w}
\newcommand{\du}{\mathrm{d}u}
\newcommand{\ddv}{\mathrm{d}v}
\newcommand{\dt}{\mathrm{d}t}
\newcommand{\ds}{\mathrm{d}s}
\newcommand{\dtw}{\mathrm{d}\tw}
\newcommand{\dtu}{\mathrm{d}\tu}

\newcommand{\ii}{\mathrm{i}}

\newcommand{\fkk}{\mathfrak{k}}
\newcommand{\fkl}{\mathfrak{l}}
\newcommand{\fks}{\mathfrak{s}}
\newcommand{\fkK}{\mathfrak{K}}
\newcommand{\fkL}{\mathfrak{L}}

\newcommand{\cR}{\mathcal{R}}

\newcommand{\cN}{\mathcal{N}}

\DeclareRobustCommand{\rchi}{{\mathpalette\irchi\relax}}
\newcommand{\irchi}[2]{\raisebox{\depth}{$#1\chi$}}

\newcommand{\bR}{\mathbb{R}}
\newcommand{\bN}{\mathbb{N}}

\newcommand{\p}{\partial}

\numberwithin{equation}{section} \numberwithin{Lemma}{section}
\numberwithin{Theorem}{section} \numberwithin{Proposition}{section}
\numberwithin{Corollary}{section} \numberwithin{Remark}{section}

\usepackage{enumerate}
\usepackage{hyperref}

\allowdisplaybreaks
\usepackage{mathrsfs}  

\title{Derivation of Vlasov Equations for Multi-Species Fermions}

\date{}

\author{
	Hongshuo Chen\thanks{College of Mathematics and Statistics, Chongqing University, Chongqing 401331, China, \href{mailto:hongshuo.chen@gmail.com}{hongshuo.chen@gmail.com}} 
	\and 
	Li Chen\thanks{Institut f\"ur Mathematik, Universit\"at Mannheim, B6, 68159 Mannheim, Germany, \href{mailto:chen@math.uni-mannheim.de}{chen@math.uni-mannheim.de}} 
	\and 
	Jinyeop Lee\thanks{Department of Mathematics, LMU Munich, Theresienstrasse 39, 80333 Munich, Germany, \href{mailto:lee@math.lmu.de}{lee@math.lmu.de}}%
	\thanks{Department of Mathematics and Computer Science, University of Basel, Spiegelgasse 1, 4051 Basel, Switzerland, \href{mailto:jinyeop.lee@unibas.ch}{jinyeop.lee@unibas.ch}} 
	\and 
	Zhiwei Sun\thanks{School of Mathematical Sciences, Soochow University, 1 Shi-Zi Street, Suzhou 215006, Jiangsu Province, China, \href{mailto:20194007008@stu.suda.edu.cn}{20194007008@stu.suda.edu.cn}}
}

\begin{document}
	
	\maketitle
	
	\begin{abstract}
		We explore a system of two-species fermions with $N$ particles. 
		Through this exploration, we establish a two-species Husimi measure and construct a two-species Bogoliubov–Born–Green–Kirkwood–Yvon (BBGKY) hierarchy hierarchy with error terms.
		We provide proofs of the smallness of these errors in the semiclassical and mean-field limits, fixing $\hbar=N^{-1/3}$. 
		After that, we rigorously establish the uniqueness of this hierarchy. 
		This allows us to conclude that the two-species Husimi measure from the solution of $N$-particle Schr\"odinger equation approximates the solution of the Vlasov equation for two-species systems.
	\end{abstract}
	
	\section{Introduction}
	
	The study of multi-species fermions has been extensive, with numerous applications spanning across plasma physics, solid-state physics, and astrophysics, as documented in various literary works \cite{PhysRevA.78.013613,buchsbaum1960resonance,jungel2009transport,PhysRevLett.104.053202,wang2013decay,PhysRevLett.100.053201}. 
	We consider a system composed of two distinct species. 
	The phase space (normalised) densities for the first and second species are represented by $ \mathrm{f}_{1,t}(q,p) $ and $ \mathrm{f}_{2,t}(\tq,\tp) $, respectively, where these densities are defined at a specific point $ (q,p) $ or $ (\tq,\tp) $ within the phase space $ \mathbb{R}^3 \times \mathbb{R}^3 $ at a given time $ t $ in $ \mathbb{R} $. 
	It is known that the dynamics of $ \mathrm{f}_{1,t}(q,p), \mathrm{f}_{2,t}(\tq,\tp) \in L^1(\mathbb{R}^3\times\mathbb{R}^3) $ are governed by the Vlasov equations:
	\[
	\begin{split}
		\partial_{t}\mathrm{f}_{1,t}(q,p)+p\cdot\nabla_{q}\mathrm{f}_{1,t}(q,p)
		&=\left(n_1(\nabla V_{11}*\varrho_{1,t})(q)+n_2(\nabla V_{12}*\varrho_{2,t})(q)\right)\cdot\nabla_{p}\mathrm{f}_{1,t}(q,p)\\
		\partial_{t}\mathrm{f}_{2,t}(\tq,\tp)+\tp\cdot\nabla_{\tq}\mathrm{f}_{2,t}(\tq,\tp)
		&=\left(n_2(\nabla V_{22}*\varrho_{2,t})(\tq)+n_1(\nabla V_{21}*\varrho_{1,t})(\tq)\right)\cdot\nabla_{\tp}\mathrm{f}_{2,t}(\tq,\tp),
	\end{split}
	\]
	where $ V_{11} $, $ V_{22} $, and $ V_{12} = V_{21} $ describe the interactions within and between the species, while $ \varrho_{1,t} $ and $ \varrho_{2,t} $ are the spatial density distributions of the species, respectively, i.e., $ \varrho_{\alpha,t}(q) =(2\pi)^{-3} \int \mathrm{d}p\, \mathrm{f}_{\alpha,t}(q,p) $ for $ \alpha = 1,2 $.
	Moreover, $ n_1 $ and $ n_2 $ denote the fraction of the first and the second species, respectively.
	
	For notational simplicity, we define $ m_{1,t} := n_1 \mathrm{f}_{1,t} $, $ m_{2,t} := n_2 \mathrm{f}_{2,t} $, $ \rho_{1,t}(q) := n_1\varrho_{1,t}(q) $, and $ \rho_{2,t}(q) := n_2\varrho_{2,t}(q) $. Then we have
	\begin{equation}\label{eq:Vlasov}
		\begin{split}
			\partial_{t}m_{1,t}(q,p)+p\cdot\nabla_{q}m_{1,t}(q,p)&=\left((\nabla V_{11}*\rho_{1,t})(q)+(\nabla V_{12}*\rho_{2,t})(q)\right)\cdot\nabla_{p}m_{1,t}(q,p)\\
			\partial_{t}m_{2,t}(\tq,\tp)+\tp\cdot\nabla_{\tq}m_{2,t}(\tq,\tp)&=\left((\nabla V_{22}*\rho_{2,t})(\tq)+(\nabla V_{21}*\rho_{1,t})(\tq)\right)\cdot\nabla_{\tp}m_{2,t}(\tq,\tp).
		\end{split}
	\end{equation}
	Note that, in this form, the information of the fraction of each species $ n_1 $ and $ n_2 $ is embedded in \eqref{eq:Vlasov} such that $ \|\rho_{1,t}\| = n_1 $ and $ \|\rho_{2,t}\| = n_2 $.
	
	Our goal is to rigorously derive the above Vlasov equations \eqref{eq:Vlasov} from a high-dimensional (many-body) Schrödinger equation, establishing a clear and direct link between microscopic quantum mechanics and the macroscopic behaviors observed in plasma (as well as in stellar systems). 
	
	Now, we consider the following system with two-species fermions:
	\begin{equation}\label{eq:Schrodinger}
		\begin{cases}
			\displaystyle \mathrm{i} \hbar \partial_t \Psi_{N_1, N_2, t} = H_{N_1,N_2} \Psi_{N_1, N_2, t}, \\
			\Psi_{N_1, N_2, 0} = \Psi_{N_1,N_2},
		\end{cases}
	\end{equation}
	where the mean-field Hamiltonian of the system has the form
	
	\begin{equation}\begin{split}
			H_{N_1,N_2} 
			&=
			\frac{\hbar^2}{2}\bigg(\sum_{j=1}^{N_{1}}(-\Delta_{x_{1j}})
			+\sum_{j=1}^{N_{2}}(-\Delta_{x_{2j}})\bigg)\\
			&\qquad+\frac{1}{N
			}\bigg(\sum_{1\leq j<k\leq N_{1}}V_{11}(x_{1 j}-x_{1 k})
			+\sum_{1\leq j<k\leq N_{2}}V_{22}(x_{2 j}-x_{2 k})
			+\sum_{j=1}^{N_{1}}\sum_{k=1}^{N_{2}}V_{12}(x_{1 j}-x_{2 k})
			\bigg)
		\end{split}
	\end{equation}
	where $\hbar=N^{-1/3}$ with $N=N_1+N_2$.
	Here, we have used $x_{\alpha k}$ to denote the position variable of the $k$-th particle of the $\alpha$-th species.
	
	Note that, since we are considering a fermionic system, $\Psi_{N_1, N_2, t}$ is in the partially anti-symmetric subspace $L^2_a(\bR^{3N_1})\otimes L^2_a(\bR^{3N_2})$ of $L^2(\bR^{3(N_1+N_2)})$.
	
	For instance, consider the initial data represented as a Slater determinant:
	\begin{equation}\label{eq:Slater-initial-data}
		\Psi_{N_1,N_2} = \prod_{\alpha=1}^{2} \frac{1}{\sqrt{N_\alpha}} \det\{e_{\alpha i}(x_{\alpha j})\}_{i,j=1}^{N_\alpha}
	\end{equation}
	where $\{e_{\alpha j}\}_{j=1}^{N_\alpha}$ is a family of orthonormal bases in $L^2(\bR^3)$ for each $\alpha=1,2$.
	
	Given that $\Psi_{N_1, N_2, t}$ belongs to the space 
	$L^2_a(\bR^{3N_1})\otimes L^2_a(\bR^{3N_2})$
	and $m_{\alpha,t}$ is an element of $L^1(\bR^6)$, a transformation of $\Psi_{N_1, N_2, t}$ is required to establish a relationship between these two functions. To draw a comparison between the solutions derived from the Schr\"odinger equation and the Vlasov equations, we introduce the \emph{$(k,\ell)$-particle reduced density matrix}. The integral kernel of this matrix is defined as follows:
	\begin{equation}\label{eq:gammakl}
		\begin{aligned}
			&\gamma_{N_1, N_2, t}^{(k,\ell)} (x_1,\dots,x_k,\tilde{x}_1,\dots,\tilde{x}_\ell;y_1,\dots,y_k,\tilde{y}_1,\dots,\tilde{y}_\ell)\\
			&:= \frac{N_1!\,N_2!}{(N_1-k)!\,(N_2-\ell)!} \int_{\bR^{3(N_1-k)}\times\bR^{3(N_2-\ell)}} \dd Z \dd \tilde{Z}\, \\
			&\qquad\overline{\Psi_{N_1, N_2, t}(y_1,\dots,y_k,Z,\tilde{y}_1,\dots,\ty_{\ell},\tilde{Z})} \Psi_{N_1, N_2, t}(x_1,\dots,x_k,Z,\tilde{x}_1,\dots,\tx_{\ell},\tilde{Z}).
		\end{aligned}
	\end{equation}
	This formulation enables a meticulous comparison between the quantum mechanical behavior described by the Schr\"odinger equation and the classical dynamics depicted by the Vlasov equations.
	
	Using the definition of $(k,\ell)$-particle reduced density matrix, we define our \emph{$(k,\ell)$-particle Husimi measure} for two species as
	\begin{equation}\label{def:m_kl}
		\begin{aligned}
			&m^{(k,\ell)}_{N_1,N_2,t} (q_1, p_1, \dots, q_k, p_k,\tq_1, \tp_1, \dots, \tq_\ell, \tp_\ell)\\
			&:=
			\int (\dw\du)^{\otimes k}(\dtw\dtu)^{\otimes \ell} 
			\left( f^\hbar_{q,p}(w) \overline{f^\hbar_{q,p}(u)} \right)^{\otimes k}
			\left( f^\hbar_{\tq,\tp}(\tw)
			\overline{f^\hbar_{\tq,\tp}(\tu)} \right)^{\otimes \ell}\\
			&\qquad\qquad\qquad\times\gamma_{N_1,N_2,t}^{(k,\ell)}(u_1,\dots,u_k,\tu_1, \dots, \tu_\ell; w_1,\dots, w_k, \tw_1, \dots, \tw_\ell)
		\end{aligned}
	\end{equation}
	where the coherent state $f^\hbar_{q,p} $is defined as
	\begin{equation}\label{def_coherent}
		f^{\hbar}_{q, p} (y) := \hbar^{-\frac{3}{4}} f \left(\frac{y-q}{\sqrt{\hbar}} \right) e^{\frac{\mathrm{i}}{\hbar} p \cdot y }
	\end{equation}
	with some real-valued normalized function $f$ in an appropriate Hilbert space.
	
	\medskip
	
	Now, we aim to demonstrate that the expression \( m^{(k,\ell)}_{N_1,N_2,t} \) adheres to the two-species Vlasov hierarchy in the large \( N_1,N_2 \) limit. To accomplish this, we make the following assumptions:
	\begin{Assumption}\label{ass:main}\phantom{ }
		
		\begin{enumerate}[1.]
			\item {\normalfont(Interaction potential)} For $\alpha,\beta\in\{1,2\}$, 
			$\widehat{V}_{\alpha\beta}\in C_0$
			and 
			$V_{\alpha\beta}(x)=V_{\alpha\beta}(-x)$.
			\label{item:Interaction}
			
			\item {\normalfont(Coherent state)} $f \in H^1(\bR^3)\cap L^\infty(\bR^3)$ satisfies $\norm{f}_2 = 1$, and has compact support in $B_{R_1}$ {for a given $R_1>0$}.
			\label{item:Coherent}
			
			\item There exist $n_1,n_2>0$ with $n_1+n_2=1$ such that 
			\[
			\left|\frac{N_\alpha}{N} - n_\alpha\right| \to 0 \quad\text{as }N:=N_1+N_2\to\infty
			\]
			for $\alpha=1,2$.
			\item {\normalfont(Initial data)} $m_{N_1,N_2}^{(1,0)}$ converges weakly to $m_{1}$ in  $L^1(\bR^{6})$ and $m_{N_1,N_2}^{(0,1)}$ converges weakly to $m_{2}$ in  $L^1(\bR^{6})$. Furthermore, they satisfy
			\begin{equation}\label{eq:inip2q_ben}
				\begin{split}
					\int \dq \ddp\; (\,|p|^2 + |q|\,) \, m_{N_1,N_2}^{(1,0)}(q,p) &<\infty\\
					\int \dtq \dtp\; (\,|\tp|^2 + |\tq|\,)\,  m_{N_1,N_2}^{(0,1)}(\tq,\tp) &<\infty
				\end{split}  
			\end{equation}
			uniformly for all $N_1,N_2$.
			\label{item:initialdata}
		\end{enumerate}
	\end{Assumption}
	
	\begin{Remarks}\phantom{ }
		\begin{enumerate}
			\item 
			In Assumption \ref{ass:main}.\ref{item:Interaction}, the condition $\widehat{V}_{\alpha\beta}\in C_0$ is utilized to establish the uniqueness of the infinite hierarchy which is only used in the proof of Proposition \ref{prop:uniqueness_of_solution}. For those interested in addressing more general potentials, we recommend referring to, for example, \cite{Spohn1981}.
			
			\item
			It is likely that Assumption \ref{ass:main}.\ref{item:Coherent} can be extended to a larger class of functions. Specifically, by decomposing $f = f\chi_{|x|\leq R} + f\chi_{|x|>R}$, carrying out the same proof for $f\chi_{|x|\leq R}$, and using the “smallness” of the tail $f\chi_{|x|>R}$, it is conceivable that the same result can be achieved.
		\end{enumerate}
	\end{Remarks}
	
	\begin{Theorem}\label{thm:main}
		Consider the many-body Schrödinger equation \eqref{eq:Schrodinger} with initial data $\Psi_{N_1,N_2}$, yielding the solution $\Psi_{N_1,N_2,t}$ at time $t$. Define the $(k,\ell)$-particle Husimi measure $m_{N_1,N_2,t}^{(k,\ell)}$ associated with $\Psi_{N_1,N_2,t}$ as per \eqref{def:m_kl}. Assuming Assumption \ref{ass:main}, there exists a subsequence such that the two-species Husimi measure $m_{N_1,N_2,t}^{(k,\ell)}$ weakly converges in $L^1(\bR^{6k}\times\bR^{6\ell})$, as $N\rightarrow\infty$, to $m_{\infty,\infty,t}^{(k,\ell)}$, the unique weak solution of the following infinite Vlasov hierarchy in the distributional sense:
		\begin{equation} \label{eq:BBGKY_limit}
			\begin{aligned}
				&\partial_t m_{\infty,\infty,t}^{(k,\ell)}(q_1, p_1, \dots, q_k, p_k,\tq_1, \tp_1, \dots, \tq_\ell, \tp_\ell)\\
				&+ \mathbf{p}_k \cdot \nabla_{\mathbf{q}_k}m_{\infty,\infty,t}^{(k,\ell)}(q_1, p_1, \dots, q_k, p_k,\tq_1, \tp_1, \dots, \tq_\ell, \tp_\ell)
				{ + \mathbf{\tp}_{\ell} \cdot \nabla_{\mathbf{\tq}_{\ell}}m_{\infty,\infty,t}^{(k,\ell)}(q_1, p_1, \dots, q_k, p_k,\tq_1, \tp_1, \dots, \tq_\ell, \tp_\ell)}  \\
				&= \frac{1}{(2\pi)^3} \sum_{j=1}^{k} \nabla_{\mathbf{p}_j} \cdot \iint \dd q_{k+1}\dd p_{k+1} \nabla V_{11}(q_j - q_{k+1})m_{\infty,\infty,t}^{(k+1,\ell)}
				(q_1, p_1, \dots, q_{k+1}, p_{k+1},\tq_1, \tp_1, \dots, \tq_\ell, \tp_\ell)\\
				&+\frac{1}{(2\pi)^3} \sum_{j=1}^{k} \nabla_{\mathbf{p}_j} \cdot \iint \dd \tilde q_{\ell+1}\dd \tilde p_{\ell+1} \nabla V_{12}(q_j - \tilde q_{\ell+1})m_{\infty,\infty,t}^{(k,\ell+1)}
				(q_1, p_1, \dots, q_k, p_k,\tq_1, \tp_1, \dots, \tq_{\ell+1}, \tp_{\ell+1})\\
				&+ \frac{1}{(2\pi)^3} \sum_{j=1}^{\ell}  \nabla_{\mathbf{\tp}_j} \cdot \iint \dd q_{k+1}\dd p_{k+1} \nabla V_{21}(q_{k+1} - \tq_{j})m_{\infty,\infty,t}^{(k+1,\ell)}
				(q_1, p_1, \dots, q_{k+1}, p_{k+1},\tq_1, \tp_1, \dots, \tq_\ell, \tp_\ell)\\
				&+\frac{1}{(2\pi)^3} \sum_{j=1}^{\ell} \nabla_{\mathbf{\tp}_j} \cdot \iint \dd \tilde q_{\ell+1}\dd \tilde p_{\ell+1}\nabla V_{22}(\tq_j - \tilde q_{\ell+1})m_{\infty,\infty,t}^{(k,\ell+1)}
				(q_1, p_1, \dots, q_k, p_k,\tq_1, \tp_1, \dots, \tq_{\ell+1}, \tp_{\ell+1}).
			\end{aligned}
		\end{equation}
	\end{Theorem}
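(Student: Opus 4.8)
The plan is to run the by-now-standard hierarchy-comparison argument for the semiclassical mean-field limit, adapted to the partially antisymmetric two-species setting. \textbf{Step 1 (finite-$N$ hierarchy with errors).} I would differentiate $m_{N_1,N_2,t}^{(k,\ell)}$ in $t$ and insert \eqref{eq:Schrodinger}. The kinetic part of $H_{N_1,N_2}$, after integrating by parts against the coherent states \eqref{def_coherent}, reproduces the free-transport terms $\mathbf{p}_k\cdot\nabla_{\mathbf{q}_k}$ and $\mathbf{\tp}_\ell\cdot\nabla_{\mathbf{\tq}_\ell}$ of \eqref{eq:BBGKY_limit} up to a remainder of order $\sqrt{\hbar}$ controlled by $\norm{f}_{H^1}$. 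The three interaction groups $V_{11}$, $V_{22}$, $V_{12}=V_{21}$ are treated by expanding the commutator of each pair potential with the $(k+\ell)$ coherent-state projections and using \eqref{eq:gammakl}: each produces a main term of exactly the form on the right side of \eqref{eq:BBGKY_limit}, with $\nabla V_{\alpha\beta}$ convolved against the one-particle Husimi density carried inside $m^{(k+1,\ell)}$ or $m^{(k,\ell+1)}$, plus an error $\mathcal{E}_{N_1,N_2,t}^{(k,\ell)}$ collecting the anti-Wick/Weyl quantization mismatch, the self-interaction and fermionic exchange terms, and the discrepancy between $\nabla V_{\alpha\beta}$ evaluated at a coherent-state--smeared position versus the true position. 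This is the ``two-species BBGKY hierarchy with error terms'' announced in the abstract.

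\textbf{Step 2 (uniform estimates, compactness, and smallness of errors).} Using \eqref{eq:inip2q_ben} I would propagate the kinetic energy and the first spatial moment along the Schrödinger flow to obtain, uniformly in $N_1,N_2$ and locally uniformly in $t$, a bound of the form $\sup_{t\le T}\int(|p_1|^2+\cdots+|\tp_\ell|^2+|q_1|+\cdots)\,m_{N_1,N_2,t}^{(k,\ell)}\le C(T)$. Together with the $L^1\cap L^\infty$ bounds on $m^{(k,\ell)}$ coming from $\norm{f}_\infty<\infty$ and the operator bounds on $\gamma_{N_1,N_2,t}^{(k,\ell)}$, the Dunford--Pettis criterion yields weak-$L^1$ relative compactness at fixed $t$, the hierarchy of Step 1 yields equicontinuity in $t$, and a diagonal extraction over all $(k,\ell)$ and a dense set of times gives a subsequence along which $m_{N_1,N_2,t}^{(k,\ell)}\rightharpoonup m_{\infty,\infty,t}^{(k,\ell)}$ in $L^1(\bR^{6k}\times\bR^{6\ell})$. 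For the errors, testing against $\varphi\in C_c^\infty$, the choice $\hbar=N^{-1/3}$ is essential: the self-interaction terms are $O(N^{-1})$, the genuine fermionic exchange terms are $o(1)$, the quantization mismatch is $O(\sqrt{\hbar})$ against $\norm{\varphi}_{H^1}$ and the derivatives of $f$, and the position-smearing error is $O(\sqrt{\hbar})$ by a Taylor expansion controlled by $\nabla^2 V_{\alpha\beta}$ --- all finite thanks to $\widehat{V}_{\alpha\beta}\in C_0$ and the compact support of $f$ --- while the momentum prefactors in the transport remainder are absorbed using the moment bounds just obtained.

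\textbf{Step 3 (limit identification and uniqueness).} Along the subsequence I would pass to the limit in the tested hierarchy: the transport terms converge because $\mathbf{p}_k\cdot\nabla_{\mathbf{q}_k}\varphi\in C_c^\infty\subset L^\infty$; the interaction main terms converge because $\nabla V_{\alpha\beta}$ is bounded and continuous (again from $\widehat{V}_{\alpha\beta}\in C_0$) so it pairs continuously with the weak-$L^1$ limit of the $(k+1,\ell)$- or $(k,\ell+1)$-marginal; the errors vanish by Step 2; and $\partial_t m^{(k,\ell)}$ converges in the distributional sense. Hence $m_{\infty,\infty,t}^{(k,\ell)}$ is a distributional solution of \eqref{eq:BBGKY_limit} with initial data determined by Assumption \ref{ass:main}.\ref{item:initialdata} (and, for Slater-type data \eqref{eq:Slater-initial-data}, factorising as $m_1^{\otimes k}\otimes m_2^{\otimes\ell}$). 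Finally, Proposition \ref{prop:uniqueness_of_solution}, which exploits $\widehat{V}_{\alpha\beta}\in C_0$ in a Dyson-series/Grönwall argument on \eqref{eq:BBGKY_limit}, shows this solution is unique; in particular it does not depend on the chosen subsequence, which completes the proof.

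\textbf{Main obstacle.} I expect the crux to be the interaction error analysis in Step 2: one must show that the commutator of a pair potential with the coherent-state projections, traced against $\gamma^{(k+1,\ell)}$ or $\gamma^{(k,\ell+1)}$, reproduces the classical mean-field force to leading order with genuinely $o(1)$ --- not merely $O(1)$ --- remainders, i.e.\ that the quantum commutator converges to the Poisson bracket. The two-species structure makes this harder than the single-species case: the inter-species terms generated by $V_{12}$ produce cross-exchange contributions that must be controlled through the \emph{partial} antisymmetry of $\Psi_{N_1,N_2,t}$, and every estimate has to be made uniform in $N_1$ and $N_2$ separately rather than only in $N=N_1+N_2$, which is precisely where Assumption \ref{ass:main}.\ref{item:Coherent} together with the hypothesis on the species fractions $n_1,n_2$ enters.
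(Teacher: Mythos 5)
Your proposal follows essentially the same route as the paper: deriving the two-species BBGKY hierarchy with remainder terms (the paper's Propositions \ref{prop:vla_k=1}--\ref{prop:vla_bbgky_hierarchy}), propagating kinetic-energy and moment bounds to get uniform integrability and tightness so that Dunford--Pettis plus a diagonal extraction yields the weakly convergent subsequence, showing the remainders vanish as $\hbar=N^{-1/3}\to 0$ when tested against smooth functions (Section \ref{sec:Bound of errors}), passing to the limit in the weak formulation, and concluding with the uniqueness argument of Proposition \ref{prop:uniqueness_of_solution}. The structure and the key ingredients match the paper's proof, so the plan is correct as outlined.
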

	
	\begin{Theorem}\label{thm:main2}
		Under the assumptions previously in Theorem \ref{thm:main}, we additionally consider the factorization of the initial data, i.e., for all $(k,\ell) \in \bN \times \bN \setminus \{(0,0)\}$,
		\begin{equation}
			\label{IC}
			\|m_{N_1,N_2}^{(k,\ell)}- {n_1^k n_2^\ell} m_{1}^{\otimes k}\otimes m_{2}^{\otimes \ell}\|_{L^1}\rightarrow 0, \quad \text{as } N\rightarrow\infty,
		\end{equation}
		the $(1,0)$-particle Husimi measure $m_{N_1,N_2,t}^{(1,0)}$ and the $(0,1)$-particle Husimi measure $m_{N_1,N_2,t}^{(0,1)}$ associated with $\Psi_{N_1,N_2,t}$ converge in the 1-Wasserstein distance to the solutions $m_{1,t}$ and $m_{2,t}$ of 
		\eqref{eq:Vlasov}
		
		with initial data $m_{1,0}=m_1$ and $m_{2,0}=m_2$, respectively.
	\end{Theorem}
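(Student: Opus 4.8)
The plan is to combine Theorem~\ref{thm:main} with a propagation-of-chaos argument and then to quantify the resulting convergence. Once the $(k,\ell)$-Husimi measures are known to converge to the unique solution of the infinite hierarchy \eqref{eq:BBGKY_limit}, two things remain: (a) identify that solution as the tensor product of the Vlasov solutions, using the factorised initial data \eqref{IC} and the hierarchy uniqueness; and (b) upgrade the ensuing weak $L^1$ convergence of $m^{(1,0)}_{N_1,N_2,t}$ and $m^{(0,1)}_{N_1,N_2,t}$ to convergence in the $1$-Wasserstein distance. For (a), note first that under Assumption~\ref{ass:main}.\ref{item:Interaction} the forces $\nabla V_{\alpha\beta}$ are bounded, so \eqref{eq:Vlasov} has, for the prescribed initial data $(m_1,m_2)$, a unique global-in-time solution $(m_{1,t},m_{2,t})$ with time-independent masses (a classical Dobrushin-type fact; see e.g.\ \cite{Spohn1981}). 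A direct computation then shows that the factorised family
\[
\mu^{(k,\ell)}_t:=m_{1,t}^{\otimes k}\otimes m_{2,t}^{\otimes \ell}
\]
is a distributional solution of \eqref{eq:BBGKY_limit}: inserting the tensor product on the right-hand side, integrating out the $(k{+}1)$-st (resp.\ $(\ell{+}1)$-st) pair of variables, and using $\rho_{1,t}(q)=(2\pi)^{-3}\int\mathrm{d}p\,m_{1,t}(q,p)$ (and its species-$2$ analogue) together with the evenness $V_{\alpha\beta}(x)=V_{\alpha\beta}(-x)$ to orient the convolutions, each of the four collision terms reduces exactly to the corresponding mean-field force in \eqref{eq:Vlasov}, so that \eqref{eq:BBGKY_limit} becomes $k+\ell$ superimposed copies of \eqref{eq:Vlasov}, valid by the Leibniz rule; and by \eqref{IC} the initial datum $\mu^{(k,\ell)}_0$ is the $L^1$-limit of $m^{(k,\ell)}_{N_1,N_2}$.

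By Theorem~\ref{thm:main}, along a subsequence $m^{(k,\ell)}_{N_1,N_2,t}\rightharpoonup m^{(k,\ell)}_{\infty,\infty,t}$ weakly in $L^1$, where $m^{(k,\ell)}_{\infty,\infty,t}$ is the \emph{unique} weak solution of \eqref{eq:BBGKY_limit} with that initial datum (uniqueness being Proposition~\ref{prop:uniqueness_of_solution}); hence $m^{(k,\ell)}_{\infty,\infty,t}=\mu^{(k,\ell)}_t$. As the limit does not depend on the subsequence, the whole sequence converges, and in particular
\[
m^{(1,0)}_{N_1,N_2,t}\rightharpoonup m_{1,t},\qquad m^{(0,1)}_{N_1,N_2,t}\rightharpoonup m_{2,t}\qquad\text{weakly in }L^1(\bR^6),
\]
with $(m_{1,t},m_{2,t})$ the solution of \eqref{eq:Vlasov} with data $(m_1,m_2)$.

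To reach $W_1$, normalise to probability densities and recall that $W_1$-convergence is equivalent to narrow convergence together with convergence of the first moments $\int(|q|+|p|)\,\mathrm{d}m$. Narrow convergence follows from the weak $L^1$ convergence above (testing against $C_b(\bR^6)\subset L^\infty$, the masses being fixed). For the moments I would propagate \eqref{eq:inip2q_ben}: from the first equation of the finite BBGKY hierarchy for $m^{(1,0)}_{N_1,N_2,t}$ obtained in the paper --- including its $O(\hbar)$ and mean-field error terms --- one gets $\frac{\mathrm{d}}{\mathrm{d}t}\int|p|^2\,m^{(1,0)}_{N_1,N_2,t}\le C\big(1+\int|p|^2\,m^{(1,0)}_{N_1,N_2,t}\big)$ (kinetic-energy control from the conserved total energy and the bounded potential) and $\frac{\mathrm{d}}{\mathrm{d}t}\int|q|\,m^{(1,0)}_{N_1,N_2,t}\le\int|p|\,m^{(1,0)}_{N_1,N_2,t}$, hence a bound on $\int(|p|^2+|q|)\,m^{(1,0)}_{N_1,N_2,t}$ that is uniform in $N_1,N_2$ and locally uniform in $t$ (and likewise for $m^{(0,1)}_{N_1,N_2,t}$). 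The $|p|^2$ bound gives uniform integrability of the momentum tails; combined with the weak $L^1$ convergence and passage to the limit in the moment-evolution identities, this yields $\int(|q|+|p|)\,m^{(1,0)}_{N_1,N_2,t}\to\int(|q|+|p|)\,m_{1,t}$, and similarly for species $2$. One must also track the coherent-state smearing at scale $\sqrt\hbar$ in $q$ and $p$ built into \eqref{def:m_kl}--\eqref{def_coherent}, which contributes only corrections vanishing as $\hbar=N^{-1/3}\to0$. Altogether, $W_1\big(m^{(1,0)}_{N_1,N_2,t},m_{1,t}\big)\to0$ and $W_1\big(m^{(0,1)}_{N_1,N_2,t},m_{2,t}\big)\to0$.

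The main obstacle is this last step: securing the momentum and position moment bounds \emph{uniformly in $N_1,N_2$} and propagating them through the Husimi transform, so that neither the coherent-state width nor the already-controlled hierarchy error terms spoil the estimate, and then converting ``weak $L^1$ plus moment bounds'' into genuine first-moment convergence. The $|q|$-moment is the delicate one, since \eqref{eq:inip2q_ben} controls only its first power, so uniform integrability of $|q|\,m^{(1,0)}_{N_1,N_2,t}$ has to be extracted from the transport structure $\dot q=p$ rather than assumed; alternatively one may avoid the abstract characterisation altogether and close a Dobrushin-type Gronwall estimate for the coupled Vlasov system directly, using that $m^{(1,0)}_{N_1,N_2,t}$ and $m^{(0,1)}_{N_1,N_2,t}$ solve \eqref{eq:Vlasov} up to the errors controlled earlier and that the initial $W_1$-distance tends to $0$ by \eqref{IC}. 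The identification in the first two paragraphs is, by contrast, routine once the hierarchy uniqueness of Theorem~\ref{thm:main} is available, the only mild extra point being the well-posedness of \eqref{eq:Vlasov} under Assumption~\ref{ass:main}.\ref{item:Interaction}.
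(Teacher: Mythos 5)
Your proposal follows essentially the same route as the paper: the tensorized family $m_{1,t}^{\otimes k}\otimes m_{2,t}^{\otimes \ell}$ is checked by direct computation to solve the infinite hierarchy \eqref{eq:BBGKY_limit}, the uniqueness of Proposition \ref{prop:uniqueness_of_solution} identifies it with the limit $m^{(k,\ell)}_{\infty,\infty,t}$ from Theorem \ref{thm:main}, and the weak $L^1$ convergence of $m^{(1,0)}_{N_1,N_2,t}$ and $m^{(0,1)}_{N_1,N_2,t}$ is upgraded to $1$-Wasserstein convergence via the ``weak convergence plus moment control'' characterization, which the paper invokes by citing \cite{Villani2003} together with the uniform-in-$N$ moment bounds it has already established. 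The moment propagation you outline (kinetic-energy control plus transport of the $|q|$-moment) is precisely the content of Lemma \ref{lem:kinetic_finite} and Proposition \ref{prop:finite-moment}, so no new estimate beyond what the paper provides is required.
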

	
	The theorem demonstrates that in an $N$-particle quantum system, encompassing electrons and ions as quantum particles, the model conforms to Vlasov equations expressed through Husimi measures as $N\rightarrow\infty$. In fact, the theorem extends its applicability to any finite multi-species system. For brevity and clarity, however, our presentation and discussion will concentrate on the context of a two-species system.
	
	\vspace{1em}
	
	\subsection*{History}
	As has been mentioned, the multi-species Vlasov equations have been studied in many previous works. In this part, we aim to focus on the historical evolution of deriving this system from a microscopic framework. Therefore, only the mean-field limit and the semi-classical limit, as well as simultaneous limits, are going to be discussed.
	
	By applying the mean-field limit, i.e., taking $N\to\infty$ and fixing $\hbar=1$, for one-species case, it is possible to derive a Hartree-Fock $N$-body quantum system from the interacting $N$-body quantum system. It is noteworthy that, at this stage, the system is not yet classical but retains its quantum nature. Nonetheless, comprehending the workings of the mean-field limit is crucial.
	The mean-field limit results have been established in \cite{ELGART20041241} for fixed values of $\hbar$, focusing on a short-time scenario. The convergence rates pertaining to both the trace norm and the Hilbert-Schmidt norm were accomplished in \cite{benedikter2014mean}, particularly when the initial data approximates the Slater determinant. In subsequent developments, the exploration expands to encompass mixed state initial data, as investigated in \cite{benediktermixed,benedikter2014rel}. Furthermore, the pursuit of convergence rates has extended its scope to include Coulomb and Riesz potentials, as in \cite{Porta2017,saffirio2017mean}.
	Recently, there were series of works for high density regime \cite{fresta2023cmp,fresta2024effective}.
	For a more comprehensive understanding and a broader array of references regarding this subject, we encourage an exploration of \cite{Frohlich2011,petrat2014derivation,Petrat2017,Petrat2016ANM,benedikter2022effective}, along with the associated references.
	
	The emergence of the Vlasov equation is rooted in the semi-classical limit, i.e., taking $\hbar\to0$, employed either in the context of the Hartree or the Hartree-Fock equations. The foundational exploration of this process involves the utilization of the Wigner measure for smooth potentials, with initial contributions found in \cite{Lions1993}. Notably, the Wigner transform of the Hartree (or Hartree-Fock) equation aligns in structure with the Vlasov equation, as seen in \cite[Eq. (6.15)]{Benedikter2016book}. The convergence rate analysis in \cite{benedikter2016hartree} were given  in the trace norm and Hilbert-Schmidt norm, incorporating regularity assumptions for mixed state initial data and a specific potential class. The employed $k$-particle Wigner measure is central to works such as \cite{Lions1993,benedikter2016hartree}.
	Recent researches provide the semi-classical limit field for inverse power law potentials \cite{Saffirio2019}, rates of convergence within Schatten norms \cite{lafleche2020strong}, and the convergence of mixed states within the influence of Coulomb potentials \cite{saffirio2019hartree}. These developments have been extended also to the relativistic fermionic systems \cite{Dietler2018}. Further analyses of the semi-classical limit are comprehensively presented in \cite{amour2013classical,amour2013,Athanassoulis2011,benedikter2022effective,Gasser1998, Markowich1993}.
	The Vlasov equation with Coulomb potential has been also derived from a Newtonian $N$-particle system  in \cite{lazarovici2017mean}. 
	
	For the one-species case, derivations of the Vlasov equations \eqref{eq:Vlasov} from the $N$-body Schrödinger equation \eqref{eq:Schrodinger} with the combined mean-field and the semi-classical limits were accomplished by Narnhofer and Sewell, as well as by Spohn, in \cite{Narnhofer1981,Spohn1981}, with $\hbar = N^{-1/3}$. The combined limit's rate of convergence was investigated in \cite{Golse2017,Golse2021,golse:hal-01334365} through the utilization of Wasserstein (pseudo-) distance. A rate of convergence was established in \cite{graffi2003mean}, operating within a weak formulation framework and avoiding any assumptions on $\hbar = N^{-1/3}$. While the Wigner measure lacks the status of a proper probability measure due to the potential presence of negative points, the Husimi measure, formed through the convolution of the Wigner measure with a Gaussian function, emerges as a nonnegative probability measure \cite{Combescure2012,Fournais2018,Zhang2008}. More precisely, as has been explained in \cite[p.21]{Fournais2018}, the relationship between the Husimi and Wigner measures can be expressed through convolution, where specific Gaussian coherent states are involved.  Furthermore, under a generalized Husimi measure framework, \cite{Chen2021} achieved convergence results for regular potentials. 
	Recently, \cite{Chong2021} managed to tackle the combined limit for cases involving singular potentials based on regular mixed state initial data.
	
	The Boltzmann equation for a mixture has been rigorously derived for two types of hard sphere gases, as detailed in \cite{ampatzoglou2022rigorous}. However, a comprehensive derivation of the multi-species Vlasov system starting from the many-body Schrödinger equation has not yet been explored in existing literature, to the best of our knowledge.
	
	\subsection*{Organization of the paper}
	This paper is organized as follows: In Section \ref{sec:FockSpace}, we provide a comprehensive review and formalization of the multi-species Fock space. Building on this foundation, we define the two-species $(k,\ell)$-particle Husimi measure, intricately linked to the two-species wave function. Subsequently, we present properties of the $(k,\ell)$-particle Husimi measure and offer bounds on the number operator and kinetic energy operator at time $t$ for the wave function $\Psi_{N_1,N_2,t}$. In Section \ref{sec:BBGKY}, we establish the two-species Bogoliubov–Born–Green–Kirkwood–Yvon (BBGKY) hierarchy of $(k,\ell)$-particle Husimi measures. Leveraging the tools developed in Section \ref{sec:FockSpace}, we prove the smallness of errors as $N\to\infty$ in Section \ref{sec:Bound of errors}. The final section, Section \ref{sec:Uniqueness Vlasov hierarchy}, is dedicated to proving the two main theorems, including the uniqueness of the two-species BBGKY hierarchy of $(k,\ell)$-particle Husimi measures. 
	
	\section{Fock space formalism and uniform estimates}\label{sec:FockSpace}
	
	\subsection{Fock space formalism for fermions with two-species}
	
	To avoid a lengthy paper, we cite the single-species Fock space
	formalism from previous works (e.g. \cite{Chen2021AHP}).
	
	We now consider the fermionic Fock space $\mathcal{F}_a=\bigoplus_{n=0}^{\infty}L^{2}(\mathbb{R}^{3})^{\otimes_{a}n}$:
	
	\[
	\cF^{(N)}_a = 0 \oplus \dots \oplus 0 \oplus L^{2}(\mathbb{R}^{3})^{\otimes_{a}N} \oplus 0 \oplus \dots \subset \cF_a.
	\]
	In our model, the state space for fermions with two species is  $\mathcal{F}_a {\otimes} \mathcal{F}_a.$
	Here, we use the standard tensor product of two Hilbert spaces.
	
	For $f\in L^{2}(\mathbb{R}^{3})$, we denote creation and annihilation
	operators for the first species to be
	$a^{*}(f)$ and $a(f)$, respectively on $\mathcal{F}_a$ such that
	\begin{align*}
		\big(a(f) \Psi \big)^{(n)} &:= \sqrt{n+1} \int \dd{x} \overline{f(x)} \psi^{(n+1)} (x, x_1, \dots, x_n),\\
		\big(a^*(f) \Psi  \big)^{(n)} &:= \frac{1}{\sqrt{n}} \sum_{j=1}^n f(x_j) \psi^{(n-1)} (x_1, \dots, {x}_{j-1}, {x}_{j+1} , \dots , x_n).
	\end{align*}
	Here, $\psi^{(n)}$ denotes the $n$-th particle sector of $\Psi = \{\psi^{(n)}\}_{n\geq0} \in \mathcal{F}_a$. 
	Following the notations from \cite{benedikter2014mean}, we will use the operator-valued distributions $a^*_x$ and $a_x$, to represent the creation and annihilation operators:
	\begin{equation}\label{distrb_annil_creation_def}
		a^*(f) = \int \dd x\, f(x) a^*_x, \quad a(f) = \int \dd x\  \overline{f(x)}a_x.
	\end{equation}
	Note that the operator-valued distribution $a^*_x$ formally creates a particle at position $x\in \bR^3$, whilst the operator-value distribution $a_x$ annihilates a particle at $x$.
	For the second species, we denote creation and annihilation
	operators to be $b^{*}(f)$, $b(f)$, respectively.
	
	The operator-valued distributions within the same species adhere to canonical anti-commutation relations (CAR) as follows:
	\begin{equation}\label{eq:CAR}
		\begin{split}
			0&=\{a^*_x,a^*_y\}=\{a_x,a_y\}=\{b^*_{\tx},b^*_{\ty}\}=\{b_{\tx},b_{\ty}\},\\
			\delta_{x=y}&=\{a^*_x,a_y\},\ \delta_{\tx=\ty}=\{b^*_{\tx},b_{\ty}\},
		\end{split}
	\end{equation}
	for $x,y,\tx,\ty\in \bR^3$, where the anti-commutator is defined as $\{A, B\} = AB + BA$.
	Conversely, for those between different species, they commute, i.e.,
	\begin{equation}\label{eq:CR}
		0=[a^*_x,b^*_{\ty}]=[a_x,b_{\ty}]=[a^*_x,b_{\ty}]=[b^*_{\tx},a_y],
	\end{equation}
	for $x,y,\tx,\ty\in \bR^3$, where the commutator is defined as $[A, B] = AB - BA$.
	
	Following the notation in \cite{benedikter2014mean}, we can write the corresponding Hamiltonian in terms of the operator-valued distributions $a_x$, $a^*_x$, $b_{\ty}$, and $b_{\ty}^*$ by
	\begin{equation}\label{eq:Fock_Hamil}
		\mathcal{H}_N =\sum_{\alpha=1}^{2} \mathcal{H}_{\alpha}+\frac{1}{N} \int \dx \dty\, V_{12}(x-\ty) a^*_x b^*_{\ty}\, b_{\ty} a_x,
	\end{equation}
	where
	\begin{align*}
		\mathcal{H}_1
		&= \frac{\hbar^2}{2} \int \dx\, \nabla_x a^*_x \nabla_x a_x+\frac{1}{2N} \int \dx\dy \, V_{11}(x-y) a^*_x a^*_y a_y a_x,\\
		\mathcal{H}_2
		&= \frac{\hbar^2}{2} \int \dtx\, \nabla_{\tx} b^*_{\tx} \nabla_{\tx} b_{\tx}+\frac{1}{2N} \int \dtx\dty\, V_{22}(\tx-\ty) b^*_{\tx} b^*_{\ty}\, b_{\ty} b_{\tx}.
	\end{align*}
	Next, we let $\mathcal{N}_{\alpha}$
	to denote the number of particles of {species} $\alpha$, i.e.,
	\begin{equation}\label{eq:number-op}
		\mathcal{N}_1 = \int \dx\, a_{x}^{*}a_{x}
		\quad\text{and}\quad
		\mathcal{N}_2 = \int \dtx\, b_{\tx}^{*}b_{\tx}.
	\end{equation}
	Then, we define
	\[
	\mathcal{N}_{\mathrm{total}}:=\mathcal{N}_1+\mathcal{N}_2.
	\]
	We also define the kinetic energy operators as follows:
	\begin{align}\label{eq:kinetic-op}
		\mathcal{K}_1 = \frac{\hbar^2}{2} \int \dx\, a_{x}^{*}(-\Delta)a_{x}
		\quad\text{and}\quad
		\mathcal{K}_2 = \frac{\hbar^2}{2} \int \dtx\, b_{\tx}^{*}(-\Delta)b_{\tx}.
	\end{align}
	Then, we define
	\[
	\mathcal{K}_{\mathrm{total}}:=\mathcal{K}_1+\mathcal{K}_2.
	\]
	Moreover, with this second quantization formalism, for $\Psi_{N_1, N_2, t} \in \mathcal{F}_a^{(N_1)}\otimes \mathcal{F}_a^{(N_2)}$, we define 
	
	\begin{equation}
		\begin{aligned}
			&\gamma^{(k,\ell)}_{N_1,N_2,t}(u_1,\dots,u_k,\tu_1, \dots, \tu_\ell; w_1,\dots, w_k, \tw_1, \dots, \tw_\ell)\\
			&:= \langle\Psi_{N_1, N_2, t}, a^*_{w_1} \cdots a^*_{w_k} b^*_{\tw_1} \cdots b^*_{\tw_k} b_{\tu_k} \cdots b_{\tu_1} a_{u_k} \cdots a_{u_1}\Psi_{N_1, N_2, t}\rangle.
		\end{aligned}
	\end{equation}
	For example, one can express
	\begin{equation}
		\begin{split}
			\gamma^{(1,0)}_{N_1,N_2,t}(u;w)
			:=\langle\Psi_{N_1, N_2, t},a_{w}^{*}a_{u}\Psi_{N_1, N_2, t}\rangle
			\quad\text{and}\quad
			\gamma^{(0,1)}_{N_1,N_2,t}(\tu;\tw)
			:=\langle\Psi_{N_1, N_2, t},b_{\tw}^{*}b_{\tu}\Psi_{N_1, N_2, t}\rangle.
		\end{split}
	\end{equation}
	
	\subsection{Husimi measure for multi-species fermions}
	Recall from \eqref{def_coherent} that the coherent state $f^\hbar_{q,p}$ is given by
	\begin{equation}
		f^{\hbar}_{q, p} (y) := \hbar^{-\frac{3}{4}} f \left(\frac{y-q}{\sqrt{\hbar}} \right) e^{\frac{\mathrm{i}}{\hbar} p \cdot y }.
	\end{equation}
	Then, the projection of a coherent state is given by
	\begin{equation*}\label{projection_f}
		\frac{1}{(2\pi \hbar)^3} \int \dq \ddp\, |f^{\hbar}_{q, p}\rangle\langle{f^{\hbar}_{q, p} }|  = \mathds{1}.
	\end{equation*}
	For any $\Psi_{N_1, N_2, t} \in \mathcal{F}_a^{(N_1)}\otimes \mathcal{F}_a^{(N_2)}$, $1 \leq k \leq N_1$ and $t \geq 0$, the $k$-particle Husimi measure for the first species is defined as 
	\begin{equation}\label{husimi_def_1}
		\begin{aligned}
			&m^{(k,0)}_{N_1,N_2,t} (q_1, p_1, \dots, q_k, p_k)\\
			&
			:= \int (\dw\du)^{\otimes k} \left( f^\hbar_{q,p}(w) \overline{f^\hbar_{q,p}(u)} \right)^{\otimes k}  \langle \Psi_{N_1, N_2, t},  a^*_{w_1} \cdots a^*_{w_k} a_{u_k} \cdots a_{u_1} \Psi_{N_1, N_2, t} \rangle\\
			&
			=\int (\dw\du)^{\otimes k} \left( f^\hbar_{q,p}(w) \overline{f^\hbar_{q,p}(u)} \right)^{\otimes k} \gamma_{N_1,N_2,t}^{(k,0)}(u_1, \dots, u_k; w_1, \dots, w_k),
		\end{aligned}
	\end{equation}
	where we use the short notations
	\[
	(\dw\du)^{\otimes k} := \dw_1 \du_1 \cdots \dw_k \du_k, \text{ and }
	\left( f^\hbar_{q,p}(w) \overline{f^\hbar_{q,p}(u)} \right)^{\otimes k} := \prod_{j=1}^k f^\hbar_{q_j,p_j}(w_j) \overline{f^\hbar_{q_j,p_j}(u_j)}.
	\]
	Intuitively speaking, one can understand that the Husimi measure defined in \eqref{husimi_def_1} measures how many of the first-species particles are in the $k$-semiclassical boxes with a length scale of $\sqrt{\hbar}$ centered at its respective phase-space pairs, $(q_1, p_1), \dots, (q_k, p_k)$. 
	Similarly, for $1 \leq k \leq N_2$, Husimi measure for the second species is:
	\begin{equation}\label{husimi_def_2}
		\begin{aligned}
			&m^{(0,\ell)}_{N_1,N_2,t} (\tq_1, \tp_1, \dots, \tq_k, \tp_k)\\
			&
			:= \int (\dtw\dtu)^{\otimes \ell} \left( f^\hbar_{\tq,\tp}(\tw) \overline{f^\hbar_{\tq,\tp}(\tu)} \right)^{\otimes \ell}  \langle \Psi_{N_1, N_2, t},  b^*_{\tw_1} \cdots b^*_{\tw_\ell} b_{\tu_\ell} \cdots b_{\tu_1} \Psi_{N_1, N_2, t} \rangle\\
			&
			= \int (\dtw\dtu)^{\otimes \ell} \left( f^\hbar_{\tq,\tp}(\tw) \overline{f^\hbar_{\tq,\tp}(\tu)} \right)^{\otimes \ell} \gamma_{N_1,N_2,t}^{(0,\ell)}(\tu_1, \dots, \tu_\ell; \tw_1, \dots, \tw_\ell).
		\end{aligned}
	\end{equation}
	Then we define our $(k,\ell)$-particle Husimi measure for two species as
	\begin{equation}\label{husimi_def_3}
		\begin{aligned}
			&m^{(k,\ell)}_{N_1,N_2,t} (q_1, p_1, \dots, q_k, p_k,\tq_1, \tp_1, \dots, \tq_\ell, \tp_\ell)\\
			&:=
			\int (\dw\du)^{\otimes k}(\dtw\dtu)^{\otimes \ell} 
			\left( f^\hbar_{q,p}(w) \overline{f^\hbar_{q,p}(u)} \right)^{\otimes k}
			\left( f^\hbar_{\tq,\tp}(\tw)
			\overline{f^\hbar_{\tq,\tp}(\tu)} \right)^{\otimes \ell}\\
			&\qquad\qquad\qquad\times\gamma_{N_1,N_2,t}^{(k,\ell)}(u_1,\dots,u_k,\tu_1, \dots, \tu_\ell; w_1,\dots, w_k, \tw_1, \dots, \tw_\ell).
		\end{aligned}
	\end{equation}
	
	The following lemma provides properties of Husimi measure for multi-species fermions.
	\begin{Lemma} \label{lem:prop_kHusimi}
		Suppose that $\Psi_{N_1, N_2, t} \in \mathcal{F}_a^{(N_1)}\otimes \mathcal{F}_a^{(N_2)}$ is normalized for any $t \geq 0$. Then, the following properties hold for $m^{(k,\ell)}_{N_1,N_2,t}$, for $0 \leq k \leq N_1$ and $0 \leq \ell \leq N_2$ except $k = \ell = 0$,
		\begin{enumerate}
			\item (Indistinguishability for each species)
			\[m^{(k,\ell)}_{N_1,N_2,t}(q_1,p_1,\dots,q_k,p_k,\tq_1,\tp_1,\dots,\tq_\ell,\tp_\ell)\] is symmetric with respect to both non-tilde and tilde variables,
			\item (Passivity and bound)
			\begin{align*}0 \leq  m^{(k,\ell)}_{N_1,N_2,t}(q_1,p_1,\dots,q_k,p_k,\tq_1,\tp_1,\dots,\tq_\ell,\tp_\ell) \leq 1 \text{ a.e.},
			\end{align*}
			\item ($L^1$-norm)\label{L^1-norm}
			\begin{align*}
				&\frac{1}{(2\pi)^{3(k+\ell)}} \int (\dq\ddp)^{\otimes k} (\dtq\dtp)^{\otimes \ell}\, m^{(k,\ell)}_{N_1,N_2,t}(q_1,p_1,\dots,q_k,p_k,\tq_1,\tp_1,\dots,\tq_\ell,\tp_\ell)\\
				&= \frac{N_1 (N_1-1)\cdots (N_1-k+1)}{N^k} \frac{N_2 (N_2-1)\cdots (N_2-\ell+1)}{N^\ell},
			\end{align*}
			\item (Recursion formula) \label{prop_k_1}
			\begin{align*}
				&\frac{1}{(2\pi)^{3}} \int \dq_k  \ddp_k\, m^{(k,\ell)}_{N_1,N_2,t}(q_1,p_1,\dots,q_k,p_k,\tq_1,\tp_1,\dots,\tq_\ell,\tp_\ell)\\
				&= \frac{N_1-k+1}{N}  m^{(k-1,\ell)}_{N_1,N_2,t}(q_1,p_1,\dots,q_{k-1},p_{k-1},\tq_1,\tp_1,\dots,\tq_\ell,\tp_\ell),
			\end{align*}
			\begin{align*}
				&\frac{1}{(2\pi)^{3}} \int \dtq_\ell \dtp_\ell\, m^{(k,\ell)}_{N_1,N_2,t}(q_1,p_1,\dots,q_k,p_k,\tq_1,\tp_1,\dots,\tq_\ell,\tp_\ell)\\
				&= \frac{N_2-\ell+1}{N} m^{(k,\ell-1)}_{N_1,N_2,t}(q_1,p_1,\dots,q_k,p_k,\tq_1,\tp_1,\dots,\tq_{\ell-1},\tp_{\ell-1}).
			\end{align*}
			
		\end{enumerate}
	\end{Lemma}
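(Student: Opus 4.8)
The plan is to derive all four properties from a single representation of $m^{(k,\ell)}_{N_1,N_2,t}$ as the expectation, in the state $\Psi_{N_1,N_2,t}$, of a product of \emph{smeared} creation and annihilation operators, combined with the (anti-)commutation relations \eqref{eq:CAR}--\eqref{eq:CR}, the coherent-state resolution of the identity, standard number-operator identities, and the normalisation $\|f^\hbar_{q,p}\|_2=\|f\|_2=1$ from Assumption \ref{ass:main}.\ref{item:Coherent}. For Step 1, from \eqref{distrb_annil_creation_def} one has $\int\dw\,f^\hbar_{q,p}(w)\,a^*_w=a^*(f^\hbar_{q,p})$ and $\int\du\,\overline{f^\hbar_{q,p}(u)}\,a_u=a(f^\hbar_{q,p})$, with the analogous identities for $b^*,b$. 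Inserting them into \eqref{husimi_def_3} and using \eqref{eq:CR} to commute first- and second-species operators past one another gives
\[
m^{(k,\ell)}_{N_1,N_2,t}=\big\langle\Psi_{N_1,N_2,t},\,B^*B\,\Psi_{N_1,N_2,t}\big\rangle=\big\|B\,\Psi_{N_1,N_2,t}\big\|^2, \qquad B:=b(f^\hbar_{\tq_\ell,\tp_\ell})\cdots b(f^\hbar_{\tq_1,\tp_1})\,a(f^\hbar_{q_k,p_k})\cdots a(f^\hbar_{q_1,p_1}).
\]
On the $N$-particle sector all operators involved are bounded, so no domain issues arise.

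\textbf{Step 2 (indistinguishability, passivity, bound).} Passivity $m^{(k,\ell)}_{N_1,N_2,t}\ge0$ is immediate from $\|B\Psi_{N_1,N_2,t}\|^2\ge0$. Exchanging the labels $(q_i,p_i)\leftrightarrow(q_j,p_j)$ permutes two of the $a$-factors of $B$; since $\{a(f),a(g)\}=0$ for all $f,g$, this replaces $B$ by $-B$ and hence leaves $\|B\Psi_{N_1,N_2,t}\|^2$ unchanged, giving symmetry in the non-tilde variables, and the tilde case is identical using $\{b(\tilde f),b(\tilde g)\}=0$. For the upper bound, \eqref{eq:CAR} gives $a^*(f)a(f)+a(f)a^*(f)=\|f\|_2^2\,\unity$ with both summands nonnegative, so $a^*(f)a(f)\le\|f\|_2^2\,\unity$ and therefore $\|a(f)\Phi\|\le\|f\|_2\|\Phi\|$ for every $\Phi$ (and likewise for $b$). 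Applying this bound to the $k+\ell$ factors of $B$ one at a time and using $\|f^\hbar_{q,p}\|_2=1$ yields $\|B\Psi_{N_1,N_2,t}\|\le\|\Psi_{N_1,N_2,t}\|=1$, i.e. $m^{(k,\ell)}_{N_1,N_2,t}\le1$ a.e.

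\textbf{Step 3 (recursion and $L^1$-norm).} Writing the coherent-state resolution of the identity as the kernel identity $\frac{1}{(2\pi\hbar)^3}\int\dq\,\ddp\;f^\hbar_{q,p}(w)\overline{f^\hbar_{q,p}(u)}=\delta(w-u)$ and using $\hbar^3=N^{-1}$, so that $\tfrac1{(2\pi)^3}=\tfrac1N\tfrac1{(2\pi\hbar)^3}$, we see that applying $\tfrac1{(2\pi)^3}\int\dq_k\,\ddp_k$ to \eqref{husimi_def_3} sets $u_k=w_k$ and produces a factor $\tfrac1N\int\dw_k$. Inside the correlation function one moves $a^*_{w_k}$ to the right past every $b^{(*)}$ (which it commutes with), applies $\int\dw_k\,a^*_{w_k}a_{w_k}=\mathcal{N}_1$, then the intertwining relation $\mathcal{N}_1\,a_{u_{k-1}}\cdots a_{u_1}=a_{u_{k-1}}\cdots a_{u_1}(\mathcal{N}_1-(k-1))$, and finally $\mathcal{N}_1\Psi_{N_1,N_2,t}=N_1\Psi_{N_1,N_2,t}$; this extracts the factor $N_1-k+1$ and reduces $\gamma^{(k,\ell)}_{N_1,N_2,t}$ to $\gamma^{(k-1,\ell)}_{N_1,N_2,t}$, which is exactly the first recursion of Property \ref{prop_k_1}. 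The second recursion is the mirror computation with $b$, $\mathcal{N}_2$, $N_2$. Iterating both recursions all the way down to $m^{(0,0)}_{N_1,N_2,t}=\|\Psi_{N_1,N_2,t}\|^2=1$ and multiplying the accumulated factors yields the formula in Property \ref{L^1-norm}.

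\textbf{Expected main obstacle.} None of this is deep, but the bound $m^{(k,\ell)}_{N_1,N_2,t}\le1$ is the Pauli-exclusion input of the whole derivation, and it is the one place where the order of operations matters: one must peel off the factors of $B$ one at a time with the estimate $\|a(f)\Phi\|\le\|f\|_2\|\Phi\|$ (and its $b$-analogue), using the inter-species commutation \eqref{eq:CR} to decouple the $a$- and $b$-blocks. Tracking the combinatorial prefactor $\tfrac{N_1!\,N_2!}{(N_1-k)!\,(N_2-\ell)!}$ and the powers of $\hbar$ through the number-operator identities in Step 3 is routine bookkeeping.
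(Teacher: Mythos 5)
Your proof is correct and is essentially the argument the paper invokes: the paper simply cites the one-species proof (Fournais et al.) and only cautions that the denominator $N$ comes from $\hbar^{3}=N^{-1}$ rather than from $N_\alpha$, and your Step 3 handles exactly this point correctly via the factor $\tfrac1{(2\pi)^3}=\hbar^{3}\tfrac1{(2\pi\hbar)^3}=\tfrac1N\tfrac1{(2\pi\hbar)^3}$ before extracting $N_1-k+1$ (resp. $N_2-\ell+1$) with the number-operator intertwining. The representation $m^{(k,\ell)}=\|B\Psi_{N_1,N_2,t}\|^2$, the anticommutation argument for symmetry, the bound $\|a(f)\Phi\|\le\|f\|_2\|\Phi\|$ for passivity/boundedness, and the iteration of the recursion for the $L^1$-norm all match the standard route.
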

	
	\begin{proof}
		The proofs are not so different from the one-species case \cite{Fournais2018}. We only need to be careful that the $N$ in the denominators in items \ref{L^1-norm} and \ref{prop_k_1} are from $\hbar^3$. So it is not $N_\alpha$ but $N$.
	\end{proof}
	
	\subsection{Uniform in $N$ estimates}\label{sec:uniform_estimates}
	
	In this section, we will discuss several fundamental properties of the number operator and kinetic operator. These operators operate on the solution of \eqref{eq:Schrodinger} at a specific time, denoted as $t$.
	
	According to the following lemma, we can verify that each $k$-th moment of the number operator acting on the solution of the Schr\"odinger equation \eqref{eq:Schrodinger} equals $N^k$.
	
	Because $\Psi_{N_1,N_2,t}$ are the eigenfunctions of the number operators, we have the following lemma:
	\begin{Lemma}\label{lem:estimate-num-op-k}
		Let $\Psi_{N_1, N_2, t} \in \mathcal{F}_a^{(N_1)}\otimes \mathcal{F}_a^{(N_2)}$
		be the solution of \eqref{eq:Schrodinger} with initial data
		\[
		\Psi_{N_1,N_2} = (0,\dots,0,\psi_{N_1},0,\dots)\otimes (0,\dots,0,\psi_{N_2},0,\dots)
		\]
		where
		$\psi_{N_\alpha}\in L^2_a(\bR^{3N_\alpha})$ for $\alpha=1,2$ and
		$\|\Psi_{N_1,N_2}\|=1$.
		Refer to the number operator $\cN_\alpha$ defined in \eqref{eq:number-op}. Then, for finite $0 \leq k \leq N_1$ and $0 \leq \ell \leq N_2$,
		we have
		\[
		\left< \Psi_{N_1, N_2, t}, \frac{\cN_1^k}{N_1^k}\frac{\cN_2^\ell}{N_2^\ell} \Psi_{N_1, N_2, t} \right>  = 1.
		\]
	\end{Lemma}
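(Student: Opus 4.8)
The plan is to exploit the fact that the Fock-space Hamiltonian $\mathcal{H}_N$ preserves the particle number of each species separately, so that $\mathcal{N}_1$ and $\mathcal{N}_2$ are conserved along the flow and act as scalars on the sector carrying the initial datum.

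\textbf{Step 1: conservation of $\mathcal{N}_1$ and $\mathcal{N}_2$.} I would first check, on the dense domain of finite-particle vectors with smooth rapidly-decaying components, that $[\mathcal{N}_1,\mathcal{H}_N] = [\mathcal{N}_2,\mathcal{H}_N] = 0$. This is a direct computation with the CAR \eqref{eq:CAR} and the mixed commutation relations \eqref{eq:CR}: every monomial appearing in $\mathcal{H}_N$ — namely $\nabla_x a^*_x\nabla_x a_x$ and $a^*_x a^*_y a_y a_x$ in $\mathcal{H}_1$, the analogous terms in $\mathcal{H}_2$, and the cross term $a^*_x b^*_{\ty}b_{\ty}a_x$ — contains the same number of creation and annihilation operators of species $1$, and, separately, of species $2$; hence it commutes with both $\mathcal{N}_1$ and $\mathcal{N}_2$ as defined in \eqref{eq:number-op}. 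Equivalently (and this is the formulation I would actually use), the sector $\mathcal{F}_a^{(N_1)}\otimes\mathcal{F}_a^{(N_2)}$ is invariant under $\mathcal{H}_N$, on which $\mathcal{H}_N$ restricts exactly to the $N$-body operator $H_{N_1,N_2}$ of \eqref{eq:Schrodinger}.

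\textbf{Step 2: the solution stays on the sector and conclusion.} Since $\Psi_{N_1,N_2} \in \mathcal{F}_a^{(N_1)}\otimes\mathcal{F}_a^{(N_2)}$ and the unitary group $e^{-\mathrm{i}t\mathcal{H}_N/\hbar}$ leaves this sector invariant (equivalently, commutes with the spectral projections of $\mathcal{N}_1$ and $\mathcal{N}_2$ associated with the eigenvalues $N_1$ and $N_2$), the solution $\Psi_{N_1,N_2,t} = e^{-\mathrm{i}t\mathcal{H}_N/\hbar}\Psi_{N_1,N_2}$ remains in $\mathcal{F}_a^{(N_1)}\otimes\mathcal{F}_a^{(N_2)}$ for all $t$. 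On that sector $\mathcal{N}_1$ acts as multiplication by $N_1$ and $\mathcal{N}_2$ by $N_2$, so $\mathcal{N}_1^k\Psi_{N_1,N_2,t} = N_1^k\Psi_{N_1,N_2,t}$ and $\mathcal{N}_2^\ell\Psi_{N_1,N_2,t} = N_2^\ell\Psi_{N_1,N_2,t}$. Combining this with $\|\Psi_{N_1,N_2,t}\| = \|\Psi_{N_1,N_2}\| = 1$ (conservation of the $L^2$-norm under the unitary flow),
\[
\left\langle\Psi_{N_1,N_2,t},\, \frac{\mathcal{N}_1^k}{N_1^k}\frac{\mathcal{N}_2^\ell}{N_2^\ell}\Psi_{N_1,N_2,t}\right\rangle = \langle\Psi_{N_1,N_2,t},\Psi_{N_1,N_2,t}\rangle = 1.
\]

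\textbf{Main obstacle.} The only genuine technical content is the functional-analytic bookkeeping: one needs $\mathcal{H}_N$ to be (essentially) self-adjoint on a suitable domain so that Stone's theorem yields a well-defined unitary evolution, and one needs to upgrade the formal identity $[\mathcal{N}_\alpha,\mathcal{H}_N]=0$ to a genuine operator statement — most cleanly by verifying it on the finite-particle dense core and using that $\mathcal{N}_\alpha$ has pure point spectrum, so that commutation on the core forces commutation with $e^{-\mathrm{i}t\mathcal{H}_N/\hbar}$. Because the model is set up from the start on the fixed sector $\mathcal{F}_a^{(N_1)}\otimes\mathcal{F}_a^{(N_2)}$, however, all of this reduces to the classical fact that the $N$-body Schrödinger flow generated by $H_{N_1,N_2}$ preserves $L^2_a(\bR^{3N_1})\otimes L^2_a(\bR^{3N_2})$ together with its norm, which renders the claim essentially immediate.
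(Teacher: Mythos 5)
Your argument is correct and is essentially the paper's own (implicit) justification: the paper simply notes that $\Psi_{N_1,N_2,t}$ remains an eigenfunction of the number operators $\cN_1,\cN_2$ with eigenvalues $N_1,N_2$, which together with unitarity of the flow gives the identity. Your Steps 1--2 just spell out this conservation-of-species-number observation in more detail.
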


	Now we prove that the kinetic energy for typical particles is not growing too fast.
	
	\begin{Lemma}\label{lem:kinetic_finite}
		Let $\alpha=1$ and $\beta=2$ or vice versa. Assume $\nabla V_{\alpha\alpha}$ and $\nabla V_{\alpha\beta}$ are bounded, then the kinetic energy is bounded such that
		\begin{equation}\label{k_kinetic_bounded}
			\left<\Psi_{N_1, N_2, t}, \frac{\mathcal{K}_\alpha}{N} \Psi_{N_1, N_2, t}\right> 
			\leq C_\alpha \left<\Psi_{N_1,N_2}, \frac{\mathcal{K}_\alpha}{N} \Psi_{N_1,N_2}\right> + C_\alpha{t^2},
		\end{equation}
		
		where the constants only depend on $\|\nabla V_{\alpha\alpha}\|_{\infty}$, $\|\nabla V_{\alpha\beta}\|_{\infty}$.
	\end{Lemma}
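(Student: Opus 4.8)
The plan is to work on the $(N_1,N_2)$-particle sector, where $\Psi_{N_1,N_2,t}$ stays for all $t$ since the flow \eqref{eq:Schrodinger} conserves the particle number of each species, and to track the growth of the \emph{semiclassical momentum} operators $P_{\alpha j}:=-\ii\hbar\nabla_{x_{\alpha j}}$. With this notation $\mathcal{K}_\alpha=\tfrac12\sum_{j=1}^{N_\alpha}P_{\alpha j}^2$, so it suffices to bound $\|P_{\alpha j}\Psi_{N_1,N_2,t}\|$ for each fixed $j$. The key heuristic is that, under the Heisenberg flow, $P_{\alpha j}$ obeys an operator analogue of Newton's equation with a force that is \emph{bounded} and \emph{mean-field scaled}, hence can grow at most linearly in $t$.

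Next I would fix $\alpha=1$ (the case $\alpha=2$ follows by swapping $1\leftrightarrow2$) and establish the commutator identity
\[
[H_{N_1,N_2},P_{1j}]=\frac{\ii\hbar}{N}\,F_{1j},\qquad
F_{1j}:=\sum_{k\neq j}(\nabla V_{11})(x_{1j}-x_{1k})+\sum_{k=1}^{N_2}(\nabla V_{12})(x_{1j}-x_{2k}),
\]
which holds because the kinetic operators and the $V_{22}$ potential commute with $P_{1j}$, while $[\,V_{\alpha\beta}(x_{1j}-\cdot),-\ii\hbar\nabla_{x_{1j}}]=\ii\hbar(\nabla V_{\alpha\beta})(x_{1j}-\cdot)$ and $V_{\alpha\beta}(x)=V_{\alpha\beta}(-x)$ lets one symmetrize the pair sums. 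Crucially the factor $\hbar$ cancels here, which is exactly what makes the estimate $\hbar$-independent. Setting $\Phi_{j,t}:=P_{1j}\Psi_{N_1,N_2,t}$ and differentiating in time gives the inhomogeneous equation $\ii\hbar\,\partial_t\Phi_{j,t}=H_{N_1,N_2}\Phi_{j,t}-\tfrac{\ii\hbar}{N}F_{1j}\Psi_{N_1,N_2,t}$, and Duhamel's formula together with unitarity of $e^{-\ii H_{N_1,N_2}t/\hbar}$ and the triangle inequality yields
\[
\|\Phi_{j,t}\|\le\|P_{1j}\Psi_{N_1,N_2}\|+\frac1N\int_0^t\|F_{1j}\Psi_{N_1,N_2,s}\|\,\ds.
\]
Since $\nabla V_{11},\nabla V_{12}$ are bounded, $\|F_{1j}\Psi\|\le\big((N_1-1)\|\nabla V_{11}\|_\infty+N_2\|\nabla V_{12}\|_\infty\big)\|\Psi\|\le N\,C_V$ with $C_V:=\|\nabla V_{11}\|_\infty+\|\nabla V_{12}\|_\infty$, so $\|\Phi_{j,t}\|\le\|P_{1j}\Psi_{N_1,N_2}\|+C_V t$.

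Finally I would square, use $(a+b)^2\le2a^2+2b^2$, sum over $j=1,\dots,N_1$, and recall $\mathcal{K}_1=\tfrac12\sum_j P_{1j}^2$ to get
\[
\big\langle\Psi_{N_1,N_2,t},\mathcal{K}_1\Psi_{N_1,N_2,t}\big\rangle
=\tfrac12\sum_{j=1}^{N_1}\|\Phi_{j,t}\|^2
\le2\big\langle\Psi_{N_1,N_2},\mathcal{K}_1\Psi_{N_1,N_2}\big\rangle+N_1C_V^2\,t^2,
\]
and dividing by $N$ and using $N_1\le N$ gives \eqref{k_kinetic_bounded} with $C_1=\max\{2,C_V^2\}$. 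I do not expect a substantive obstacle: the only technical care needed is the operator-domain bookkeeping (differentiating $t\mapsto P_{1j}\Psi_{N_1,N_2,t}$ and invoking the inhomogeneous Duhamel formula), which one handles by a standard regularization argument — or directly for Slater-determinant initial data as in \eqref{eq:Slater-initial-data} — the bound being vacuous when $\langle\Psi_{N_1,N_2},\mathcal{K}_1\Psi_{N_1,N_2}\rangle=\infty$. Note that only $\nabla V_{\alpha\beta}\in L^\infty$ enters (and Assumption \ref{ass:main}.\ref{item:Interaction}, $\widehat V_{\alpha\beta}\in C_0$, is far more than enough), and that the mean-field $1/N$ in $H_{N_1,N_2}$ is precisely what absorbs the $O(N)$ bounded summands in $F_{1j}$.
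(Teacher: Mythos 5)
Your proposal is correct, but it takes a genuinely different route from the paper. The paper stays in second quantization and differentiates the kinetic-energy expectation directly: it computes $\frac{\dd}{\dt}\langle\Psi_{N_1,N_2,t},\mathcal{K}_1\Psi_{N_1,N_2,t}\rangle$ via the commutator $[\mathcal{K}_1,\mathcal{H}_N]$, bounds the two resulting interaction terms by Cauchy--Schwarz together with the number-operator identities of Lemma \ref{lem:estimate-num-op-k}, arriving at the differential inequality $\frac{\dd}{\dt}\langle\Psi_t,\tfrac{\mathcal{K}_1}{N}\Psi_t\rangle\le C\,\langle\Psi_t,\tfrac{\mathcal{K}_1}{N}\Psi_t\rangle^{1/2}$, which upon integration yields the stated $t^2$ growth. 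You instead work in first quantization on the fixed $(N_1,N_2)$-sector, propagate each single-particle momentum $P_{1j}=-\mathrm{i}\hbar\nabla_{x_{1j}}$ through the commutator identity $[H_{N_1,N_2},P_{1j}]=\tfrac{\mathrm{i}\hbar}{N}F_{1j}$ (correct, including the use of $V_{\alpha\beta}(x)=V_{\alpha\beta}(-x)$ to symmetrize the pair sum and the crucial cancellation of $\hbar$), and use Duhamel plus $\|F_{1j}\Psi\|\le N C_V\|\Psi\|$ to get $\|P_{1j}\Psi_t\|\le\|P_{1j}\Psi_{N_1,N_2}\|+C_V t$, then square and sum; the bookkeeping ($\mathcal{K}_1=\tfrac12\sum_j P_{1j}^2$, $N_1\le N$) is right and yields \eqref{k_kinetic_bounded} with explicit constants depending only on $\|\nabla V_{11}\|_\infty,\|\nabla V_{12}\|_\infty$. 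Your argument is more elementary — no differential inequality, no second-quantized Cauchy--Schwarz, and it gives a pointwise-in-$j$ linear-in-$t$ momentum bound that is slightly stronger than the averaged statement — at the price of the domain/regularization care you correctly flag for differentiating $t\mapsto P_{1j}\Psi_{N_1,N_2,t}$; the paper's Fock-space route matches the formalism used in the rest of the estimates and avoids singling out particle labels, but both hinge on exactly the same input, the boundedness of $\nabla V_{\alpha\alpha}$ and $\nabla V_{\alpha\beta}$ and the mean-field factor $1/N$ absorbing the $O(N)$ force terms.
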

	
	\begin{proof}
		
		To demonstrate the proof, let's begin by considering the specific case where $\alpha=1$ and $\beta=2$. The proof for the case when $\alpha=2$ and $\beta=1$ can be easily derived by following the steps outlined below:
		
		Starting from the Schrödinger equation, we obtain
		\begin{equation}\label{eq:kinetic_finite_2}
			\begin{aligned}
				&i \hbar \frac{\dd }{\dt} \left<\Psi_{N_1, N_2, t} , \mathcal{K}_1 \Psi_{N_1, N_2, t} \right>\\
				&= \left<\Psi_{N_1, N_2, t}, [\mathcal{K}_1,\mathcal{H}_N] \Psi_{N_1, N_2, t} \right>\\
				&= \left<\Psi_{N_1, N_2, t}, \left[\frac{\hbar^2}{2}\sum_{l=1}^{N_{1}}(-\Delta_{x_{1 l}}),\frac{1}{N}\sum_{1\leq j<k\leq N_{1}}V_{1 1}(x_{1 j}-x_{1 k})+\frac{1}{N}\sum_{j=1}^{N_{1}}\sum_{k=1}^{N_{2}}V_{1 2}(x_{1 j}-\tx_{2 k})\right] \Psi_{N_1, N_2, t} \right>\\
				&= \frac{i\hbar^2}{N} \Im \sum_{l=1}^{N_{1}}\left<\nabla_{x_{1 l}}\Psi_{N_1, N_2, t},\sum_{1\leq j<k\leq N_{1}}\left(\nabla_{x_{1 l}}V_{1 1}(x_{1 j}-x_{1 k}) \right)\Psi_{N_1, N_2, t} \right>\\
				& \quad+ \frac{i\hbar^2}{N} \Im \sum_{l=1}^{N_{1}}\left<\nabla_{x_{1 l}}\Psi_{N_1, N_2, t},\sum_{j=1}^{N_{1}}\sum_{k=1}^{N_{2}}\left(\nabla_{x_{1 l}}V_{1 2}(x_{1 j}-\tx_{2 k}) \right)\Psi_{N_1, N_2, t} \right>\\
				&= \frac{i\hbar^2}{N} \Im  \int \dx\dy \left<a_y\nabla_x a_x\Psi_{N_1, N_2, t},\left(\nabla_x V_{1 1}(x-y) \right)a_y a_x\Psi_{N_1, N_2, t} \right>\\
				& \quad+ \frac{i\hbar^2}{N} \Im \int \dx\dy  \left<b_{\ty}\nabla_x a_x\Psi_{N_1, N_2, t},\left(\nabla_x V_{1 2}(x-\ty)\right)b_{\ty} a_x \Psi_{N_1, N_2, t} \right>\\
			\end{aligned}
		\end{equation}
		Then we have that
		\begin{align*}
			&\frac{1}{N} \frac{\dd }{\dt}  \left<\Psi_{N_1, N_2, t} , \mathcal{K}_1 \Psi_{N_1, N_2, t} \right>\\
			&= \frac{\hbar}{N^2} \Im  \int \dx\dy \left<a_y\nabla_x a_x\Psi_{N_1, N_2, t},\left(\nabla_x V_{1 1}(x-y) \right)a_y a_x\Psi_{N_1, N_2, t} \right>\\
			&\quad+ \frac{\hbar}{N^2} \Im \int \dx\dy  \left<b_{\ty}\nabla_x a_x\Psi_{N_1, N_2, t},\left(\nabla_x V_{1 2}(x-\ty)\right)b_{\ty} a_x \Psi_{N_1, N_2, t} \right>.
		\end{align*}
		Now, we compute
		\begin{align*}
			& \left|\frac{\hbar}{N^2} \int \dx\dy \left<a_y\nabla_x a_x\Psi_{N_1, N_2, t},\left(\nabla_x V_{1 1}(x-y) \right)a_y a_x\Psi_{N_1, N_2, t} \right>\right| \\
			&\leq  \frac{C\hbar}{N^2}  \norm{\nabla V_{11}}_{L^\infty} 
			\left( \int \dx\dy  
			\left< \Psi_{N_1, N_2, t},\nabla_x a^*_x a^*_y   a_y \nabla_x a_x \Psi_{N_1, N_2, t} \right>
			\right)^\frac{1}{2} 
			\left(\int \dx\dy  
			\left< \Psi_{N_1, N_2, t}, a^*_x a^*_y   a_y  a_x \Psi_{N_1, N_2, t} \right> 
			\right)^\frac{1}{2} \\
			&\leq C \norm{\nabla V_{11}}_{L^\infty} \left( \frac{\hbar^2 }{N} \int \dx  \left< \Psi_{N_1, N_2, t},\nabla_x a^*_x \frac{\cN_1}{N} \nabla_x a_x \Psi_{N_1, N_2, t} \right> \right)^\frac{1}{2} 
			\left< \Psi_{N_1, N_2, t}, \frac{\cN_1^2}{N_1^2} \Psi_{N_1, N_2, t} \right>^\frac{1}{2}\\
			&\leq C \norm{\nabla V_{11}}_{L^\infty}  \left< \Psi_{N_1, N_2, t},\frac{\mathcal{K}_1}{N} \Psi_{N_1, N_2, t} \right> ^\frac{1}{2}.
		\end{align*}
		Similarly,
		\begin{align*}
			& \left| \frac{\hbar}{N^2} \int \dx\dy  \left<b_{\ty}\nabla_x a_x\Psi_{N_1, N_2, t},\left(\nabla_x V_{1 2}(x-\ty)\right)b_{\ty} a_x \Psi_{N_1, N_2, t} \right> \right| \\
			&\leq C \norm{\nabla V_{12}}_{L^\infty} \left( \frac{\hbar^2 }{N} \int \dx  \left< \Psi_{N_1, N_2, t},\nabla_x a^*_x \frac{\cN_2}{N} \nabla_x a_x \Psi_{N_1, N_2, t} \right> \right)^\frac{1}{2} \left< \Psi_{N_1, N_2, t}, \frac{\cN_1 \cN_2}{N^2} \Psi_{N_1, N_2, t} \right>^\frac{1}{2}\\
			&\leq C \norm{\nabla V_{12}}_{L^\infty} \left< \Psi_{N_1, N_2, t},\frac{\mathcal{K}_1}{N} \Psi_{N_1, N_2, t} \right>^\frac{1}{2}.
		\end{align*}
		Therefore, we get
		\[
		\left< \Psi_{N_1, N_2, t},\frac{\mathcal{K}_1}{N} \Psi_{N_1, N_2, t} \right>
		\leq 
		C \left< \Psi_{N_1,N_2},\frac{\mathcal{K}_1}{N} \Psi_{N_1,N_2} \right>
		+
		C  
		(\norm{\nabla V_{11}}_{L^\infty}^2+\norm{\nabla V_{12}}_{L^\infty}^2)\,{t^2}.
		\]
		
		This gives the desired result.
		
	\end{proof}
	
	\begin{Proposition} \label{prop:finite-moment} For $t \geq 0$, let $m^{(k,\ell)}_{N_1, N_2, t}$ to be the $(k,\ell)$-particle Husimi measure. Assume $\nabla V_{\alpha\beta}$ is bounded for $\alpha, \beta \in \{1, 2\}$. Denoting the phase-space vectors $\mathbf{q}_k = (q_1, \dots, q_k)$ and $\mathbf{p}_k = (p_1, \dots, p_k)$, we have the following finite moments,
		\[
		\int(\dd q\dd p)^{\otimes k}(\dd \tq\dd \tp)^{\otimes \ell}\big(|\mathbf{q}_{k}|+|\mathbf{p}_{k}|^{2}+|\tilde{\mathbf{q}}_{\ell}|+|\tilde{\mathbf{p}}_{\ell}|^{2}\big)m_{N_1, N_2, t}^{(k,\ell)}(q_1,p_1,\dots,q_k,p_k,\tq_1,\tp_1,\dots,\tq_\ell,\tp_\ell)\leq C(1+t^{3})
		\]
		where $C$ is a constant dependent on $k$, $\iint \dq_1 \ddp_1 (|q_1| + |p_1|^2 ) m^{(1,0)}_{N_1,N_2}(q_1, p_1) $, $\iint \dtq_1 \dtp_1 (|\tq_1| + |\tp_1|^2) m^{(0,1)}_{N_1,N_2}(\tq_1, \tp_1) $, and $\|\nabla V_{\alpha\beta}\|_\infty$ for $\alpha, \beta \in \{1, 2\}$.
	\end{Proposition}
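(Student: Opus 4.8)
The plan is to reduce the $(k,\ell)$-particle moment to two single-particle moments and then control those via Lemma~\ref{lem:estimate-num-op-k} and Lemma~\ref{lem:kinetic_finite}. For the reduction, I would use $|\mathbf q_k|+|\mathbf p_k|^2+|\widetilde{\mathbf q}_\ell|+|\widetilde{\mathbf p}_\ell|^2\le\sum_{j=1}^k(|q_j|+|p_j|^2)+\sum_{j=1}^\ell(|\tq_j|+|\tp_j|^2)$ together with the symmetry of $m^{(k,\ell)}_{N_1,N_2,t}$ in the non-tilde and in the tilde variables (item~1 of Lemma~\ref{lem:prop_kHusimi}), so that every summand contributes the same amount; then, integrating out all phase-space pairs but one by iterating the recursion formula (item~\ref{prop_k_1} of Lemma~\ref{lem:prop_kHusimi}), whose prefactors $(N_\alpha-j+1)/N$ are all $\le1$, the problem reduces — up to a constant depending only on $k$ and $\ell$ — to bounding $\int\dd q\,\dd p\,(|q|+|p|^2)\,m^{(1,0)}_{N_1,N_2,t}$ and $\int\dd\tq\,\dd\tp\,(|\tq|+|\tp|^2)\,m^{(0,1)}_{N_1,N_2,t}$. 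The two species are symmetric, so I describe only the first.

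For the momentum moment I would write $m^{(1,0)}_{N_1,N_2,t}(q,p)=\langle\Psi_{N_1,N_2,t},a^*(f^\hbar_{q,p})a(f^\hbar_{q,p})\Psi_{N_1,N_2,t}\rangle$ and use $p\,f^\hbar_{q,p}=(-\ii\hbar\nabla)f^\hbar_{q,p}+\ii\sqrt{\hbar}\,g^\hbar_{q,p}$, where $g^\hbar_{q,p}$ is the coherent-state function built from $\nabla f$, together with the (generalized) resolution of the identity for coherent states and $\langle f,\nabla f\rangle_{L^2}=0$, to obtain the standard identity
\[
\int \dd q\,\dd p\;|p|^2\,m^{(1,0)}_{N_1,N_2,t}(q,p)=\frac{(2\pi)^3}{N}\Big(2\big\langle\Psi_{N_1,N_2,t},\mathcal{K}_1\Psi_{N_1,N_2,t}\big\rangle+\hbar\norm{\nabla f}_2^2\,\big\langle\Psi_{N_1,N_2,t},\cN_1\Psi_{N_1,N_2,t}\big\rangle\Big).
\]
Here $\langle\Psi_{N_1,N_2,t},\cN_1\Psi_{N_1,N_2,t}\rangle=N_1\le N$ by Lemma~\ref{lem:estimate-num-op-k}; the same identity at $t=0$ bounds $\tfrac1N\langle\Psi_{N_1,N_2},\mathcal{K}_1\Psi_{N_1,N_2}\rangle$ by $\tfrac1{2(2\pi)^3}\int\dd q\,\dd p\,|p|^2 m^{(1,0)}_{N_1,N_2}$, which is finite uniformly in $N$ by Assumption~\ref{ass:main}.\ref{item:initialdata}; and Lemma~\ref{lem:kinetic_finite} (with $\alpha=1$, $\beta=2$, using that $\nabla V_{11},\nabla V_{12}$ are bounded) then gives $\tfrac1N\langle\Psi_{N_1,N_2,t},\mathcal{K}_1\Psi_{N_1,N_2,t}\rangle\le C(1+t^2)$. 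Thus $\int\dd q\,\dd p\,|p|^2 m^{(1,0)}_{N_1,N_2,t}\le C(1+t^2)$, and by Young's inequality also $\int\dd q\,\dd p\,|p|\,m^{(1,0)}_{N_1,N_2,t}\le C(1+t^2)$.

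For the position moment I would run a Gronwall argument on the weight $W(q):=\sqrt{1+|q|^2}$ (so $|q|\le W(q)$ and $\norm{\nabla W}_\infty,\norm{\Delta W}_\infty<\infty$), first with a smooth truncation $W_M\uparrow W$ whose $\norm{\nabla W_M}_\infty,\norm{\Delta W_M}_\infty$ are bounded uniformly in $M$. Performing the $p$-integration in the coherent states gives
\[
\int \dd q\,\dd p\;W_M(q)\,m^{(1,0)}_{N_1,N_2,t}(q,p)=\frac{(2\pi)^3}{N}\Big\langle\Psi_{N_1,N_2,t},\Big(\int \dd x\;\widetilde{W}_M(x)\,a^*_x a_x\Big)\Psi_{N_1,N_2,t}\Big\rangle,\quad \widetilde{W}_M:=W_M*\big(\hbar^{-3/2}|f(\tfrac{\cdot}{\sqrt{\hbar}})|^2\big).
\]
The crucial observation is that $\int\widetilde{W}_M(x)\,a^*_x a_x\,\dd x$ is, in first quantization, multiplication by $\sum_j\widetilde{W}_M(x_{1j})$, hence commutes with every interaction term of $H_{N_1,N_2}$ and with $\mathcal{K}_2$; so differentiating in $t$ via \eqref{eq:Schrodinger} leaves only the commutator with $\mathcal{K}_1$, which is the second quantization of $[\widetilde{W}_M,\tfrac{\hbar^2}{2}(-\Delta)]=\tfrac{\hbar^2}{2}\Delta\widetilde{W}_M+\hbar^2\nabla\widetilde{W}_M\cdot\nabla$. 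Bounding this by Cauchy--Schwarz at the one-body level, using $\int a^*_x(-\hbar^2\Delta)a_x\,\dd x=2\mathcal{K}_1$, $\norm{\nabla\widetilde{W}_M}_\infty\le\norm{\nabla W}_\infty$, $\norm{\Delta\widetilde{W}_M}_\infty\le\norm{\Delta W}_\infty$, $\langle\Psi_{N_1,N_2,t},\cN_1\Psi_{N_1,N_2,t}\rangle=N_1\le N$, and the momentum-moment bound of the previous paragraph, gives $\big|\tfrac{\dd}{\dt}\int\dd q\,\dd p\,W_M(q)\,m^{(1,0)}_{N_1,N_2,t}\big|\le C(1+t)$ uniformly in $M$; integrating in time (with $\int W_M\,m^{(1,0)}_{N_1,N_2}\le(2\pi)^3+\int\dd q\,\dd p\,|q|\,m^{(1,0)}_{N_1,N_2}<\infty$) and letting $M\to\infty$ by monotone convergence then yields $\int\dd q\,\dd p\,|q|\,m^{(1,0)}_{N_1,N_2,t}\le C(1+t^2)$.

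Combining the two single-particle bounds with the reduction (and using the species-$2$ counterparts, where Lemma~\ref{lem:kinetic_finite} is applied with $\alpha=2$, $\beta=1$), and absorbing $C(1+t^2)\le C(1+t^3)$, completes the proof, with $C$ depending on $k$, $\ell$, on the initial moments $\int\dd q\,\dd p\,(|q|+|p|^2)m^{(1,0)}_{N_1,N_2}$ and $\int\dd\tq\,\dd\tp\,(|\tq|+|\tp|^2)m^{(0,1)}_{N_1,N_2}$, and on $\norm{\nabla V_{\alpha\beta}}_\infty$. I expect the position-moment step to be the main obstacle: the decisive point is recognizing that the anti-Wick (Husimi) quantization of a position weight is a pure multiplication operator, which is what forces all the interaction contributions to the time derivative to drop out and reduces everything to the transport commutator controlled by the kinetic-energy bound; the unboundedness of $W$ is a lesser technical nuisance handled by truncation and monotone convergence, and one must also verify the two elementary identities rewriting the weighted Husimi integrals as one-body Fock-space expectations.
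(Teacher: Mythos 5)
Your proposal is correct, and its outer skeleton coincides with the paper's: bound the weight by $\sum_j(|q_j|+|p_j|^2)+\sum_j(|\tq_j|+|\tp_j|^2)$, use the symmetry and recursion properties of Lemma \ref{lem:prop_kHusimi} (whose prefactors are $\leq 1$) to reduce everything to the $(1,0)$- and $(0,1)$-moments, and control those through the kinetic bound of Lemma \ref{lem:kinetic_finite} and Assumption \ref{ass:main}.\ref{item:initialdata}. Where you genuinely diverge is in the single-particle inputs: the paper imports both the kinetic identity \eqref{p_1^2_bdd} and the differential inequality \eqref{q_1_bdd} for $\partial_t\iint|q_1|\,m^{(1,0)}_{N_1,N_2,t}$ wholesale from \cite[Proposition 2.3]{Chen2021} (legitimate here because only species $1$ enters those two estimates), whereas you re-derive them inside the two-species framework. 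Your anti-Wick computation of the $|p|^2$-moment (with the factor $2\langle\mathcal{K}_1\rangle$ coming from the $\tfrac12$ in \eqref{eq:kinetic-op}, and the cross term killed by $\int f\nabla f=0$) is used only in the two one-sided directions that both proofs need, so the difference in constants from the quoted \eqref{p_1^2_bdd} is immaterial. For the position moment, your key observation -- that $\iint W_M(q)\,m^{(1,0)}_{N_1,N_2,t}$ equals $\tfrac{(2\pi)^3}{N}$ times the expectation of the second quantization of multiplication by $W_M*\big(\hbar^{-3/2}|f(\cdot/\sqrt{\hbar})|^2\big)$, a multiplication operator commuting with every interaction term and with $\mathcal{K}_2$ -- correctly isolates the transport commutator with $\mathcal{K}_1$, which Cauchy--Schwarz plus the kinetic bound controls by $C(1+t)$ uniformly in the truncation parameter, and the truncation/monotone-convergence step is an adequate fix for the unbounded weight. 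What your route buys is a self-contained argument (no appeal to the one-species paper) and in fact the slightly sharper growth $C(1+t^2)$, at the cost of length; the paper's route is shorter but relies on the cited one-species estimates, which is also why its final bound carries the extra power $t^3$.
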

	
	\begin{proof}
		For the case of $(k,\ell)=(1,0)$, we can apply the findings from \cite[Proposition 2.3]{Chen2021} and obtain the following:
		\begin{equation}\label{p_1^2_bdd}
			\begin{aligned}
				\left<\Psi_{N_1, N_2, t},\frac{\mathcal{K}_1}{N}\Psi_{N_1, N_2, t} \right> =  \frac{1}{(2\pi)^3}\iint \dd q_1\dd p_1\ |p_1|^2  m_{N_1, N_2, t}^{(1,0)}(q_1, p_1)
				+ \hbar \int \dd q \left| \nabla f\left(q\right) \right|^2
			\end{aligned}
		\end{equation}
		and
		\begin{equation}\label{q_1_bdd}
			\p_t \iint \dd q_1 \dd p_1\ |q_1| m_{N_1, N_2, t}^{(1,0)}(q_1,p_1) \leq (2\pi)^3  \left(1 + 2\left<\Psi_{N_1,N_2},\frac{\mathcal{K}_1}{N} \Psi_{N_1,N_2} \right> + Ct^2 + C \sqrt{\hbar} \right).
		\end{equation}
		Utilizing \eqref{p_1^2_bdd} and referencing Lemma \ref{lem:kinetic_finite}, we can deduce:
		\begin{equation}
			\begin{aligned}
				&\frac{1}{(2\pi)^3}\iint \dd q_1\dd p_1\ |p_1|^2  m_{N_1, N_2, t}^{(1,0)}(q_1, p_1)\leq\left<\Psi_{N_1, N_2, t}, \frac{\mathcal{K}_1}{N} \Psi_{N_1, N_2, t}\right> 
				\leq C \left<\Psi_{N_1,N_2}, \frac{\mathcal{K}_1}{N} \Psi_{N_1,N_2}\right> + C{t^2}\\
				&= C\iint \dd q_1\dd p_1\ |p_1|^2  m_{N_1, N_2}^{(1,0)}(q_1, p_1)
				+ C\hbar \int \dd q \left| \nabla f\left(q\right) \right|^2  + C{t^2} \leq C + C t^2.
			\end{aligned}
		\end{equation}
		Integrate both sides of inequality \eqref{q_1_bdd} with respect to time $t$, and we obtain:
		\begin{equation}
			\begin{aligned}
				&\iint \dd q_1 \dd p_1\ |q_1| m_{N_1, N_2, t}^{(1,0)}(q_1,p_1) \\
				&\leq (2\pi)^3  \left(1 + 2\left<\Psi_{N_1,N_2},\frac{\mathcal{K}_1}{N} \Psi_{N_1,N_2} \right> + Ct^2 + C \sqrt{\hbar} \right)t + \iint \dd q_1 \dd p_1\ |q_1| m_{N_1, N_2}^{(1,0)}(q_1,p_1)\leq C (1 + t^3).
			\end{aligned}
		\end{equation}
		
		By applying Lemma \ref{lem:prop_kHusimi} \eqref{prop_k_1}, we notice that for fixed $1\leq k\leq N_1$ and $0\leq \ell \leq N_2$, the following holds:
		\begin{align*}
			&\int(\dd q\dd p)^{\otimes k}(\dd \tq\dd \tp)^{\otimes \ell}\sum_{j=1}^k |p_j|^2 m_{N_1, N_2, t}^{(k,\ell)}(q_1,p_1,\dots,q_k,p_k,\tq_1,\tp_1,\dots,\tq_\ell,\tp_\ell)\\
			&= \sum_{j=1}^k \iint  \dd q_j\dd p_j\ |p_j|^2 \int  \dd q_1\dd p_1 \cdots \widehat{\dd q_j}\widehat{\dd p_j} \cdots \dd q_k \dd p_k \int(\dd \tq\dd \tp)^{\otimes \ell}m_{N_1, N_2, t}^{(k,\ell)}(q_1,p_1,\dots,q_k,p_k,\tq_1,\tp_1,\dots,\tq_\ell,\tp_\ell)\\
			&= k \iint  \dd q\dd p\ |p|^2 \int  ( \dd q\dd p)^{\otimes k-1}  \int(\dd \tq\dd \tp)^{\otimes \ell} m_{N_1, N_2, t}^{(k,\ell)}(q,p, q_1,p_1\dots,q_{k-1},p_{k-1},\tq_1,\tp_1,\dots,\tq_\ell,\tp_\ell)\\
			&	= (2\pi)^{3(k-1+\ell)}k \frac{(N_1-1)\cdots(N_1-k+1)}{N^{k-1}} \frac{N_2\cdots(N_2-\ell+1)}{N^{\ell}} \iint \dd q\dd p\ |p|^2 m_{N_1, N_2, t}^{(1,0)} (q,p)\\
			&	\leq C(1+t^2),
		\end{align*}
		where we make use of the symmetry of $m_{N_1, N_2, t}^{(k,\ell)}$ in the second equality. Using a similar strategy, we have:
		\begin{align*}
			&\int(\dd q\dd p)^{\otimes k}(\dd \tq\dd \tp)^{\otimes \ell}\sum_{j=1}^k |q_j| m_{N_1, N_2, t}^{(k,\ell)}(q_1,p_1,\dots,q_k,p_k,\tq_1,\tp_1,\dots,\tq_\ell,\tp_\ell)\\
			&	= (2\pi)^{3(k-1+\ell)}k \frac{(N_1-1)\cdots(N_1-k+1)}{N^{k-1}} \frac{N_2\cdots(N_2-\ell+1)}{N^{\ell}} \iint \dd q\dd p\ |q| m_{N_1, N_2, t}^{(1,0)} (q,p)\\
			&
			\leq C(1+t^3).
		\end{align*}
		
		By reversing the roles of the two species, we also obtain:
		\[
		\int(\dd q\dd p)^{\otimes k}(\dd \tq\dd \tp)^{\otimes \ell}\big(|\tilde{\mathbf{q}}_{\ell}|+|\tilde{\mathbf{p}}_{\ell}|^{2}\big)m_{N_1, N_2, t}^{(k,\ell)}(q_1,p_1,\dots,q_k,p_k,\tq_1,\tp_1,\dots,\tq_\ell,\tp_\ell)\leq C(1+t^{3}).
		\]
		As a result, we have successfully shown what we aimed to prove.
	\end{proof}
	
	\section{Constructing BBGKY Hierarchy}\label{sec:BBGKY}
	
	In this section, we construct the two-species BBGKY Hierarchy for the $(k,\ell)$-particle Husimi measure for multi-species fermions. 
	To accomplish this, we initially observe that $m^{(1,0)}$ can be represented in terms of $m^{(1,1)}$ and $m^{(2,0)}$, incorporating certain error terms. Subsequently, an iterative process will allow us to express each $m^{(k,0)}$ in terms of $m^{(k+1,0)}$ and $m^{(k,1)}$, with the inclusion of error terms. 
	Furthermore, we demonstrate the approximate representation of $m^{(k,\ell)}$ in terms of $m^{(k+1,\ell)}$ and $m^{(k,\ell+1)}$. 
	In the upcoming sections, we provide the smallness of the error terms.
	
	In the following two propositions, we only consider $(1,0)$- and $(k,0)$-particle Husimi measures. 
	Regarding the investigation of the dynamics inherent in the $(0,\ell)$-particle Husimi measure, where $1\leq \ell \leq N_2$, it's noteworthy that we can switch the roles of the two species in 
	these propositions,
	
	taking advantage of their equivalent status. This reciprocal role exchange paves the way for a streamlined derivation of the ensuing dynamics. 
	
	Therefore, in the proposition afterward, we will exclude this particular scenario for general $(k,\ell)$ with $1\leq k\leq N_1$ and $1\leq \ell \leq N_2$.
	
	\begin{Proposition} \label{prop:vla_k=1}
		Suppose $\Psi_{N_1, N_2, t} \in \mathcal{F}_a^{(N_1)} \otimes \mathcal{F}_a^{(N_2)}$ is $(N_1,N_2)$-particle state satisfying the Schr\"odinger equation in \eqref{eq:Schrodinger}. Moreover, if $V(-x)=V(x)$ then we have the following equation for $k=1$ and $\ell = 0$,
		\begin{equation} \label{BBGKY_k=1}
			\begin{aligned}
				&\partial_t m_{N_1,N_2,t}^{(1,0)}(q_1,p_1) + p_1 \cdot \nabla_{q_1} m_{N_1,N_2,t}^{(1,0)}(q_1,p_1)\\
				&=\frac{1}{(2\pi)^3}\nabla_{p_1} \cdot  \iint \dd q_2\dd p_2 \nabla V_{11}(q_1-q_2) m_{N_1,N_2,t}^{(2,0)}(q_1,q_2,p_1,p_2)\\
				&\quad+ \frac{1}{(2\pi)^3} {\nabla_{p_1}\cdot}\iint \dd \tq_1\dd \tp_1 \nabla V_{12}(q_1-\tq_1) m_{N_1,N_2,t}^{(1,1)}(q_1,p_1,\tq_1,\tp_1)\\
				&\quad + \nabla_{q_1}\cdot \mathcal{R}_1 +\nabla_{p_1}\cdot \widetilde{\mathcal{R}}_{11} +\nabla_{p_1}\cdot \widetilde{\mathcal{R}}_{12},
			\end{aligned}
		\end{equation}
		where the remainder terms $\mathcal{R}_1$ and $\widetilde{\mathcal{R}}_1$, are given by
		\begin{equation}\label{bbgky_remainder_1}
			\begin{aligned}
				\mathcal{R}_1 :=  &  \hbar \Im \left< \nabla_{q_1} a (f^\hbar_{q_1,p_1}) \Psi_{N_1, N_2, t}, a (f^\hbar_{q_1,p_1}) \Psi_{N_1, N_2, t} \right>,\\
				\widetilde{\mathcal{R}}_{11}  := &\frac{1}{(2\pi)^3} \cdot \iint \dw\du \iint \dy\dd v \iint \dd q_2 \dd p_2 \int_0^1 \dd s\,\\
				&\hspace{1cm}\nabla V_{11}\big(su+(1-s)w - y \big) f_{q_1,p_1}^\hbar (w) \overline{f_{q_1,p_1}^\hbar (u)} f_{q_2,p_2}^\hbar (y) \overline{f_{q_2,p_2}^\hbar (v)}  \left< a_y a_w \Psi_{N_1, N_2, t}, a_v a_u \Psi_{N_1, N_2, t} \right>\\
				& -\frac{1}{(2\pi)^3}\nabla_{p_1} \cdot  \iint \dd q_2\dd p_2 \nabla V_{11}(q_1-q_2) m_{N_1,N_2,t}^{(2,0)}(q_1,q_2,p_1,p_2)\\
				\widetilde{\mathcal{R}}_{12}  :=&\frac{1}{(2\pi)^3} \cdot  \iint \dw\du \iint \dd \ty\dd \tv \iint \dd \tq_1 \dd \tp_1 \int_0^1 \dd s\,\\
				&\hspace{1cm}\nabla V_{12}\big(su+(1-s)w - \ty \big) f_{q_1,p_1}^\hbar (w) \overline{f_{q_1,p_1}^\hbar (u)} f_{\tq_1,\tp_1}^\hbar (\ty) \overline{f_{\tq_1,\tp_1}^\hbar (\tv)}  \left< a_w b_{\ty} \Psi_{N_1, N_2, t}, a_u b_{\tv} \Psi_{N_1, N_2, t} \right>\\
				&- \frac{1}{(2\pi)^3}\iint \dd \tq_1\dd \tp_1 \nabla V_{12}(q_1-\tq_1) m_{N_1,N_2,t}^{(1,1)}(q_1,p_1,\tq_1,\tp_1).
			\end{aligned}
		\end{equation}
	\end{Proposition}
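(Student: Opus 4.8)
The plan is to differentiate the $(1,0)$-particle Husimi measure in time via the Schrödinger equation and evaluate the resulting Fock-space commutator term by term. Writing $g:=f^\hbar_{q_1,p_1}$, one has $m^{(1,0)}_{N_1,N_2,t}(q_1,p_1)=\langle\Psi_{N_1,N_2,t},a^*(g)a(g)\Psi_{N_1,N_2,t}\rangle$ by \eqref{husimi_def_1}, and \eqref{eq:Schrodinger} with $\mathcal{H}_N$ as in \eqref{eq:Fock_Hamil} gives $\partial_t m^{(1,0)}_{N_1,N_2,t}(q_1,p_1)=\tfrac{1}{\ii\hbar}\langle\Psi_{N_1,N_2,t},[a^*(g)a(g),\mathcal{H}_N]\Psi_{N_1,N_2,t}\rangle$ (differentiating through the inner product being legitimate since $a^*(g)a(g)$ is bounded on the $(N_1,N_2)$-particle sector). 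I would split $\mathcal{H}_N=\mathcal{K}_1+\mathcal{V}_{11}+\mathcal{H}_2+\mathcal{V}_{12}$ with $\mathcal{V}_{11}=\tfrac{1}{2N}\int\dd x\,\dd y\,V_{11}(x-y)a^*_xa^*_ya_ya_x$ and $\mathcal{V}_{12}=\tfrac1N\int\dd x\,\dd\ty\,V_{12}(x-\ty)a^*_xb^*_{\ty}b_{\ty}a_x$; since the $a$-operators commute with the $b$-operators by \eqref{eq:CR}, $[a^*(g)a(g),\mathcal{H}_2]=0$, so only the commutators with $\mathcal{K}_1$, $\mathcal{V}_{11}$, $\mathcal{V}_{12}$ survive, and I would treat each in turn.

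For the kinetic part, $\mathcal{K}_1$ (which equals the kinetic part of $\mathcal{H}_1$) is the second quantization of $-\tfrac{\hbar^2}{2}\Delta$, so $[a^*(g)a(g),\mathcal{K}_1]=a^*(g)a(-\tfrac{\hbar^2}{2}\Delta g)-a^*(-\tfrac{\hbar^2}{2}\Delta g)a(g)$, whose expectation equals $2\ii\,\Im\langle a(g)\Psi_{N_1,N_2,t},a(-\tfrac{\hbar^2}{2}\Delta g)\Psi_{N_1,N_2,t}\rangle$. The key input is the coherent-state identity $(\nabla_x+\nabla_{q_1})f^\hbar_{q_1,p_1}=\tfrac{\ii}{\hbar}p_1 f^\hbar_{q_1,p_1}$, applied twice, which rewrites $-\tfrac{\hbar^2}{2}\Delta_x f^\hbar_{q_1,p_1}$ as $\tfrac{|p_1|^2}{2}f^\hbar_{q_1,p_1}$ (real, hence annihilated by $\Im$) plus terms proportional to $p_1\cdot\nabla_{q_1}f^\hbar_{q_1,p_1}$ and $\Delta_{q_1}f^\hbar_{q_1,p_1}$. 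Pushing these through $a(\cdot)$ and watching its antilinearity, then using $\Im\|\nabla_{q_1}a(g)\Psi_{N_1,N_2,t}\|^2=0$ and $2\Re\langle a(g)\Psi_{N_1,N_2,t},\nabla_{q_1}a(g)\Psi_{N_1,N_2,t}\rangle=\nabla_{q_1}m^{(1,0)}_{N_1,N_2,t}$, this collapses to $-p_1\cdot\nabla_{q_1}m^{(1,0)}_{N_1,N_2,t}+\nabla_{q_1}\cdot\mathcal{R}_1$ with $\mathcal{R}_1$ exactly as in \eqref{bbgky_remainder_1}; moving the transport term to the left produces the left-hand side of \eqref{BBGKY_k=1} together with $\nabla_{q_1}\cdot\mathcal{R}_1$.

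For $\mathcal{V}_{11}$, the CAR \eqref{eq:CAR} give $[a^*(g)a(g),a^*_x]=\overline{g(x)}a^*(g)$ and $[a^*(g)a(g),a_x]=-g(x)a(g)$; expanding $[a^*(g)a(g),a^*_xa^*_ya_ya_x]$ and using the evenness of $V_{11}$ to recombine the four resulting terms reduces $\tfrac1{\ii\hbar}\langle\Psi_{N_1,N_2,t},[a^*(g)a(g),\mathcal{V}_{11}]\Psi_{N_1,N_2,t}\rangle$ to $\tfrac{2}{\hbar N}\Im\int\dd x\,\dd y\,V_{11}(x-y)\overline{g(x)}\langle\Psi_{N_1,N_2,t},a^*(g)a^*_ya_ya_x\Psi_{N_1,N_2,t}\rangle$. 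I would then insert $a^*(g)=\int\dd w\,g(w)a^*_w$, rename $x\mapsto u$, and symmetrize the integrand in $u\leftrightarrow w$, which turns the imaginary part into the telescoping difference $V_{11}(u-y)-V_{11}(w-y)=\int_0^1\dd s\,\nabla V_{11}(su+(1-s)w-y)\cdot(u-w)$; the identity $(u-w)f^\hbar_{q_1,p_1}(w)\overline{f^\hbar_{q_1,p_1}(u)}=\ii\hbar\,\nabla_{p_1}\!\big(f^\hbar_{q_1,p_1}(w)\overline{f^\hbar_{q_1,p_1}(u)}\big)$ absorbs the $\hbar^{-1}$ into an outer $\nabla_{p_1}\cdot$; and inserting the completeness relation $\delta(y-v)=\tfrac{1}{(2\pi\hbar)^3}\int\dd q_2\,\dd p_2\,f^\hbar_{q_2,p_2}(y)\overline{f^\hbar_{q_2,p_2}(v)}$ for the spectator variable, together with $\hbar^{-3}=N$, produces precisely $\nabla_{p_1}\cdot$ of the smeared-force expression in $\widetilde{\mathcal{R}}_{11}$. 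Adding and subtracting the clean term $\tfrac{1}{(2\pi)^3}\iint\dd q_2\,\dd p_2\,\nabla V_{11}(q_1-q_2)m^{(2,0)}_{N_1,N_2,t}$ yields the $V_{11}$-line of \eqref{BBGKY_k=1} plus $\nabla_{p_1}\cdot\widetilde{\mathcal{R}}_{11}$. The $\mathcal{V}_{12}$ computation runs identically, except that the commutator touches only the $a$-operators, $\langle\Psi_{N_1,N_2,t},a^*(g)b^*_{\ty}b_{\ty}a_x\Psi_{N_1,N_2,t}\rangle=\langle a_wb_{\ty}\Psi_{N_1,N_2,t},a_xb_{\ty}\Psi_{N_1,N_2,t}\rangle$ after inserting $a^*(g)$, and completeness is applied in the $b$-mode at $(\tq_1,\tp_1)$; this gives the $V_{12}$-line of \eqref{BBGKY_k=1} and $\nabla_{p_1}\cdot\widetilde{\mathcal{R}}_{12}$. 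Collecting the three contributions gives \eqref{BBGKY_k=1}.

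The hard part will be the interaction step: one must legitimize the distributional manipulations — pairing products of $a_x,a^*_x,b_{\tx},b^*_{\tx}$ against $\Psi_{N_1,N_2,t}$, inserting the completeness relation, and interchanging the $s$-integral with the phase-space integrals — and keep exact track of the constants $\hbar$, $N=N_1+N_2$ and $(2\pi)^3$ via $\hbar^3=N^{-1}$; the $u\leftrightarrow w$ symmetrization producing the telescoping integral and the coherent-state $\nabla_{p_1}$-identity are the two computational hinges on which the precise shape of $\widetilde{\mathcal{R}}_{11}$ and $\widetilde{\mathcal{R}}_{12}$ depends. The kinetic part is less subtle but still demands care with the antilinearity of $a(\cdot)$ and with the second-order coherent-state identity. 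Finally, I would note that \eqref{BBGKY_k=1} is an \emph{exact} identity, to be read in the distributional sense in $(q_1,p_1)$; no smallness of $\mathcal{R}_1$, $\widetilde{\mathcal{R}}_{11}$, $\widetilde{\mathcal{R}}_{12}$ is asserted at this stage, that being carried out in Section \ref{sec:Bound of errors}.
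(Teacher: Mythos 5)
Your proposal is correct and follows essentially the same route as the paper: time-differentiate the Husimi measure, note the pure second-species terms vanish, handle the $\mathcal{K}_1$ and $\mathcal{V}_{11}$ contributions as in the one-species case, and treat the cross term $\mathcal{V}_{12}$ via the CAR/commutation identity, the mean-value theorem, the $\nabla_{p_1}$ coherent-state identity, and the completeness relation with $\hbar^3=N^{-1}$. The only difference is presentational: you organize the computation through commutators with the smeared operator $a^*(g)a(g)$ and spell out the kinetic and $V_{11}$ steps, whereas the paper works with operator kernels (interchanging $w\leftrightarrow u$) and cites the one-species result of Chen--Lee--Liew for those parts.
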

	
	\begin{Proposition}\label{prop:vla_k>1}
		For every $1 \leq i,j \leq k$ and $q_j,p_j \in \bR^3$, denote $\mathbf{q}_k = (q_1,\dots,q_k)$ and  $\mathbf{p}_k = (p_1,\dots,p_k)$. Under the assumption in Proposition \ref{prop:vla_k=1}, we have the following equation for $1<k\leq N_1$,
		\begin{align}
			&\partial_t m_{N_1,N_2,t}^{(k,0)}(q_1, p_1, \dots, q_k, p_k) + \mathbf{p}_k \cdot \nabla_{\mathbf{q}_k} m_{N_1,N_2,t}^{(k,0)}(q_1, p_1, \dots, q_k, p_k)\nonumber\\
			&= \frac{1}{(2\pi)^3}\nabla_{\mathbf{p}_k}\cdot\iint\dd q_{k+1}\dd p_{k+1}\,\nabla V_{11}(q_{j}-q_{k+1})m_{N_{1},N_{2},t}^{(k+1,0)}(q_1, p_1, \dots, q_{k+1}, p_{k+1})\nonumber\\
			&\quad+ \frac{1}{(2\pi)^{3}}\nabla_{\mathbf{p}_k}\cdot \iint\dd \widetilde{q}_1 \dd \widetilde{p}_1\,\nabla V_{12}(q_{j}-\tq_1)m_{N_{1},N_{2},t}^{(k,1)}(q_1, p_1, \dots, q_k, p_k,\tq_1,\tp_1)\nonumber\\
			&\quad+ \nabla_{\mathbf{q}_k}\cdot \mathcal{R}_{1,k} +\nabla_{\mathbf{p}_k}\cdot \widetilde{\mathcal{R}}_{11,k} +\nabla_{\mathbf{p}_k}\cdot \widetilde{\mathcal{R}}_{12,k} + \widehat{\mathcal{R}}_{11,k}, \label{BBGKY_k>1}
		\end{align}
		where the remainder terms $\mathcal{R}_{1,k}$, $\widetilde{\mathcal{R}}_{11,k}$, $\widetilde{\mathcal{R}}_{12,k}$ and $\widehat{\mathcal{R}}_{11,k}$ are given by
		\begin{align}
			\mathcal{R}_{1,k}:=&\hbar\Im\left<\nabla_{\mathbf{q}_{k}}\big(a(f_{q_{k},p_{k}}^{\hbar})\cdots a(f_{q_{1},p_{1}}^{\hbar})\big)\Psi_{N_{1},N_{2},t},a(f_{q_{k},p_{k}}^{\hbar})\cdots a(f_{q_{1},p_{1}}^{\hbar})\Psi_{N_{1},N_{2},t}\right>,\nonumber\\
			(\widetilde{\mathcal{R}}_{11,k})_{j}:=
			&\frac{1}{(2\pi)^3}\int(\dw\du)^{\otimes k}\int\dy\left[\int_{0}^{1}\dd s\,\nabla V_{11}(su_{j}+(1-s)w_{j}-y)\right] \left(f_{q,p}^{\hbar}(w)\overline{f_{q,p}^{\hbar}(u)}\right)^{\otimes k}\nonumber\\
			&\iint\dd q_{k+1}\dd p_{k+1}\ f_{q_{k+1},p_{k+1}}^{\hbar}(y)\int\dd v\ \overline{f_{{q_{k+1}},{p_{k+1}}}^{\hbar}(v)}\left<a_{w_{k}}\cdots a_{w_{1}}a_{y}\Psi_{N_{1},N_{2},t}a_{u_{k}}\cdots a_{u_{1}}a_{v}\Psi_{N_{1},N_{2},t}\right>\nonumber\\
			&-\frac{1}{(2\pi)^3}\iint\dd q_{k+1}\dd p_{k+1}\,\nabla V_{11}(q_{j}-q_{k+1})m_{N_{1},N_{2},t}^{(k+1,0)}(q_1, p_1, \dots, q_{k+1}, p_{k+1}),\nonumber\\
			(\widetilde{\mathcal{R}}_{12,k})_{j}:=
			&\frac{1}{(2\pi)^{3}}\int(\dw\du)^{\otimes k}\int\dd \ty\left[\int_{0}^{1}\dd s\,\nabla V_{12}(su_{j}+(1-s)w_{j}-\ty)\right] \left(f_{q,p}^{\hbar}(w)\overline{f_{q,p}^{\hbar}(u)}\right)^{\otimes k}\nonumber\\
			&\iint\dd \widetilde{q}_1 \dd \widetilde{p}_1\ f_{\widetilde{q}_1,\widetilde{p}_1}^{\hbar}(\ty)\int\dd \tv\ \overline{f_{\widetilde{q}_1,\widetilde{p}_1}^{\hbar}(\tv)}\left<b_{\ty}a_{w_{k}}\cdots a_{w_{1}}\Psi_{N_{1},N_{2},t},b_{\tv}a_{u_{k}}\cdots a_{u_{1}}\Psi_{N_{1},N_{2},t}\right>\nonumber\\
			&-\frac{1}{(2\pi)^{3}}\iint\dd \widetilde{q}_1 \dd \widetilde{p}_1\,\nabla V_{12}(q_{j}-\tq_1)m_{N_{1},N_{2},t}^{(k,1)}(q_1, p_1, \dots, q_k, p_k,\tq_1,\tp_1),\nonumber\\
			\widehat{\mathcal{R}}_{11,k}:=
			&\frac{\mathrm{i}\hbar^{2}}{2}\int(\dw\du)^{\otimes k}\sum_{j\neq i}^{k}\bigg[V_{11}(u_{j}-u_{i})-V_{11}(w_{j}-w_{i})\bigg]\nonumber\\
			&\left(f_{q,p}^{\hbar}(w)\overline{f_{q,p}^{\hbar}(u)}\right)^{\otimes k}\left<a_{w_{k}}\cdots a_{w_{1}}\Psi_{N_{1},N_{2},t},a_{u_{k}}\cdots a_{u_{1}}\Psi_{N_{1},N_{2},t}\right>.\label{BBGKY_remainder_k>1}
		\end{align}
	\end{Proposition}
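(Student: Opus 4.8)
The plan is to differentiate the definition \eqref{husimi_def_1} of $m^{(k,0)}_{N_{1},N_{2},t}$ in $t$, substitute the Schr\"odinger evolution \eqref{eq:Schrodinger} in its second-quantised form \eqref{eq:Fock_Hamil}, and compute the resulting commutator term by term. Writing $A_{k}:=a^{*}_{w_{1}}\cdots a^{*}_{w_{k}}a_{u_{k}}\cdots a_{u_{1}}$, one has
\[
\partial_{t}\bigl\langle\Psi_{N_{1},N_{2},t},A_{k}\Psi_{N_{1},N_{2},t}\bigr\rangle=\frac{1}{\mathrm{i}\hbar}\bigl\langle\Psi_{N_{1},N_{2},t},[A_{k},\mathcal{H}_{N}]\Psi_{N_{1},N_{2},t}\bigr\rangle,
\]
and then I would integrate against $\bigl(f^{\hbar}_{q,p}(w)\overline{f^{\hbar}_{q,p}(u)}\bigr)^{\otimes k}$. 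Since by \eqref{eq:CR} the $b$- and $a$-operators commute, the parts $\mathcal{K}_{2}$ and $\tfrac{1}{2N}\int V_{22}\,b^{*}b^{*}bb$ of $\mathcal{H}_{N}$ commute with $A_{k}$ and drop out; only $\mathcal{K}_{1}$, $\tfrac{1}{2N}\int V_{11}\,a^{*}a^{*}aa$, and the cross term $\tfrac{1}{N}\int V_{12}(x-\ty)a^{*}_{x}b^{*}_{\ty}b_{\ty}a_{x}$ contribute, and each is treated exactly as in the $k=1$ case of Proposition~\ref{prop:vla_k=1}, now with $k$ marked pairs $(w_{j},u_{j})$.

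For the kinetic part, the CAR \eqref{eq:CAR} give $[A_{k},\mathcal{K}_{1}]=\tfrac{\hbar^{2}}{2}\sum_{j=1}^{k}(\Delta_{w_{j}}-\Delta_{u_{j}})A_{k}$ in the distributional sense in $w_{j},u_{j}$. I would then integrate by parts twice to move the Laplacians onto the coherent states and use the identity $\nabla_{y}f^{\hbar}_{q,p}(y)=-\nabla_{q}f^{\hbar}_{q,p}(y)+\tfrac{\mathrm{i}}{\hbar}p\,f^{\hbar}_{q,p}(y)$: the $\hbar^{-2}|p_{j}|^{2}$-terms cancel, the $\hbar^{-1}p_{j}\!\cdot\!\nabla_{q_{j}}$-terms combine (after pulling $\nabla_{q_{j}}$ out of the $w_{j},u_{j}$-integral, legitimate because $\gamma^{(k,0)}$ does not depend on $q_{j}$) into the transport term $\mathbf{p}_{k}\cdot\nabla_{\mathbf{q}_{k}}m^{(k,0)}_{N_{1},N_{2},t}$, and the remaining $O(\hbar)$ real piece is exactly $\nabla_{\mathbf{q}_{k}}\cdot\mathcal{R}_{1,k}$ from \eqref{BBGKY_remainder_k>1}.

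For $\tfrac{1}{2N}\int V_{11}\,a^{*}_{x}a^{*}_{y}a_{y}a_{x}$ the commutator $[A_{k},\cdot]$ splits, via repeated use of \eqref{eq:CAR}, into \emph{external} terms in which exactly one of the two operators of $V_{11}$ is contracted with a marked operator $a^{*}_{w_{j}}$ or $a_{u_{j}}$ — producing $\tfrac{1}{N}\sum_{j=1}^{k}\int\dd y\,[V_{11}(u_{j}-y)-V_{11}(w_{j}-y)]$ times a string carrying one extra pair $a^{*}_{y}\cdots a_{y}$ — and \emph{internal} terms in which both are contracted with marked operators, giving $\tfrac{1}{2N}\sum_{i\neq j}[V_{11}(u_{j}-u_{i})-V_{11}(w_{j}-w_{i})]A_{k}$, which after dividing by $\mathrm{i}\hbar$ and using $\tfrac{1}{\hbar N}=\hbar^{2}$ is exactly $\widehat{\mathcal{R}}_{11,k}$. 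In the external terms I would Taylor-expand, $V_{11}(u_{j}-y)-V_{11}(w_{j}-y)=(u_{j}-w_{j})\cdot\int_{0}^{1}\dd s\,\nabla V_{11}(su_{j}+(1-s)w_{j}-y)$, use $(u_{j}-w_{j})f^{\hbar}_{q_{j},p_{j}}(w_{j})\overline{f^{\hbar}_{q_{j},p_{j}}(u_{j})}=\mathrm{i}\hbar\,\nabla_{p_{j}}[f^{\hbar}_{q_{j},p_{j}}(w_{j})\overline{f^{\hbar}_{q_{j},p_{j}}(u_{j})}]$ to turn the prefactor $\tfrac{1}{\mathrm{i}\hbar N}$ into $\tfrac{1}{N}\nabla_{p_{j}}\cdot$, and finally insert the resolution of identity $\tfrac{1}{(2\pi\hbar)^{3}}\int\dd q_{k+1}\dd p_{k+1}\,|f^{\hbar}_{q_{k+1},p_{k+1}}\rangle\langle f^{\hbar}_{q_{k+1},p_{k+1}}|=\mathds{1}$ on the extra variable; since $\hbar^{3}=N^{-1}$ we get $\tfrac{1}{N}\cdot\tfrac{1}{(2\pi\hbar)^{3}}=\tfrac{1}{(2\pi)^{3}}$, which yields the stated mean-field term $\tfrac{1}{(2\pi)^{3}}\nabla_{\mathbf{p}_{k}}\cdot\iint\dd q_{k+1}\dd p_{k+1}\,\nabla V_{11}(q_{j}-q_{k+1})m^{(k+1,0)}_{N_{1},N_{2},t}$ up to $\nabla_{\mathbf{p}_{k}}\cdot\widetilde{\mathcal{R}}_{11,k}$, which is defined as precisely the difference left over when $su_{j}+(1-s)w_{j}-y$ is replaced by $q_{j}-q_{k+1}$ and the coherent-state-smeared $\gamma^{(k+1,0)}$ by $m^{(k+1,0)}$. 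The cross term is handled identically, but with $a^{*}_{y}a_{y}$ replaced by $b^{*}_{\ty}b_{\ty}$ (which commutes with $A_{k}$, so no internal terms arise), producing the $m^{(k,1)}_{N_{1},N_{2},t}$ mean-field term plus $\nabla_{\mathbf{p}_{k}}\cdot\widetilde{\mathcal{R}}_{12,k}$. Collecting everything, and using $V_{\alpha\beta}(-x)=V_{\alpha\beta}(x)$ to symmetrise the interaction strings, gives \eqref{BBGKY_k>1}.

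The hard part will be the bookkeeping of the fermionic commutators: keeping track of all the anticommutator signs when moving $a_{y}a_{x}$ (respectively $a_{x}$ in the cross term) through the string $a^{*}_{w_{1}}\cdots a^{*}_{w_{k}}a_{u_{k}}\cdots a_{u_{1}}$, cleanly separating the external contributions (marked particle $j$ against a new particle $k+1$) from the internal ones ($i\neq j$), and verifying that the combinatorial factor in front of the $m^{(k+1,0)}$ and $m^{(k,1)}$ terms comes out to be exactly $\tfrac{1}{(2\pi)^{3}}$ — the step where the scaling $\hbar=N^{-1/3}$ is essential. The remaining ingredients — the two integrations by parts in the kinetic term, the first-order Taylor expansion of the potentials, and the insertion of the coherent-state resolution of identity — are routine and parallel the proof of Proposition~\ref{prop:vla_k=1}.
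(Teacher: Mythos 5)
Your proposal follows essentially the same route as the paper: differentiate the Husimi measure in time, note that the second-species kinetic and $V_{22}$ contributions drop out by the inter-species commutation relations \eqref{eq:CR}, treat the kinetic and $V_{11}$ contributions exactly as in the one-species case (which the paper simply imports from \cite{Chen2021}, while you re-derive them), and handle the cross term by expanding the commutator (where $b^*_{\ty}b_{\ty}$ commutes with the $a$-string, so no internal $V_{12}$ term of $\widehat{\mathcal{R}}$-type arises at $\ell=0$), applying the mean value theorem, the identity \eqref{eq:grad_p_fhbar}, and the coherent-state resolution of identity together with $\hbar^{3}=N^{-1}$. The sign and combinatorial bookkeeping you flag is precisely what the paper's displayed computations \eqref{anil_magic_K_new}--\eqref{V_12k} carry out, and it yields the stated remainders.
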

	
	\begin{Proposition} \label{prop:vla_bbgky_hierarchy}
		Under the assumption in Proposition \ref{prop:vla_k=1}, then for all $1 \leq k \leq N_1$ and $1\leq\ell\leq N_2$, we have the following hierarchy
		\begin{equation} \label{eq:BBGKY_k}
			\begin{aligned}
				&\partial_t m_{N_1,N_2,t}^{(k,\ell)}(q_1, p_1, \dots, q_k, p_k,\tq_1, \tp_1, \dots, \tq_\ell, \tp_\ell)\\
				&+ \mathbf{p}_k \cdot \nabla_{\mathbf{q}_k}m_{N_1,N_2,t}^{(k,\ell)}(q_1, p_1, \dots, q_k, p_k,\tq_1, \tp_1, \dots, \tq_\ell, \tp_\ell)
				{ + \mathbf{\tp}_{\ell} \cdot \nabla_{\mathbf{\tq}_{\ell}}m_{N_1,N_2,t}^{(k,\ell)}(q_1, p_1, \dots, q_k, p_k,\tq_1, \tp_1, \dots, \tq_\ell, \tp_\ell)}  \\
				&= \frac{1}{(2\pi)^3} \nabla_{\mathbf{p}_k} \cdot \iint \dd q_{k+1}\dd p_{k+1} \nabla V_{11}(q_j - q_{k+1})m_{N_1,N_2,t}^{(k+1,\ell)}
				(q_1, p_1, \dots, q_{k+1}, p_{k+1},\tq_1, \tp_1, \dots, \tq_\ell, \tp_\ell)\\
				&+\frac{1}{(2\pi)^3} \nabla_{\mathbf{p}_k} \cdot \iint \dd \tilde q_{\ell+1}\dd \tilde p_{\ell+1} \nabla V_{12}(q_j - \tilde q_{\ell+1})m_{N_1,N_2,t}^{(k,\ell+1)}
				(q_1, p_1, \dots, q_k, p_k,\tq_1, \tp_1, \dots, \tq_{\ell+1}, \tp_{\ell+1})\\
				&+ \frac{1}{(2\pi)^3} \nabla_{\mathbf{\tp}_\ell} \cdot \iint \dd q_{k+1}\dd p_{k+1} \nabla V_{12}(q_{k+1} - \tq_{j})m_{N_1,N_2,t}^{(k+1,\ell)}
				(q_1, p_1, \dots, q_{k+1}, p_{k+1},\tq_1, \tp_1, \dots, \tq_\ell, \tp_\ell)\\
				&+\frac{1}{(2\pi)^3}  \nabla_{\mathbf{\tp}_\ell} \cdot \iint \dd \tilde q_{\ell+1}\dd \tilde p_{\ell+1}\nabla V_{22}(\tq_j - \tilde q_{\ell+1})m_{N_1,N_2,t}^{(k,\ell+1)}
				(q_1, p_1, \dots, q_k, p_k,\tq_1, \tp_1, \dots, \tq_{\ell+1}, \tp_{\ell+1})\\
				&\hspace{1cm} + \nabla_{\mathbf{q}_k}  \cdot \mathcal{R}_{1,k,\ell}
				+\nabla_{\mathbf{\tq}_\ell} \cdot \mathcal{R}_{2,k,\ell}
				+  \nabla_{\mathbf{p}_k} \cdot \widetilde{\mathcal{R}}_{11, k,\ell} 
				+  \nabla_{\mathbf{p}_k} \cdot \widetilde{\mathcal{R}}_{12, 1, k,\ell} 
				+ \nabla_{\mathbf{\tp}_\ell} \cdot \widetilde{\mathcal{R}}_{12, 2, k,\ell}
				+ \nabla_{\mathbf{\tp}_\ell} \cdot \widetilde{\mathcal{R}}_{22, k,\ell}\\
				&\hspace{1cm} + \widehat{\mathcal{R}}_{11, k,\ell} + \widehat{\mathcal{R}}_{22, k,\ell} + \widehat{\mathcal{R}}_{12, k,\ell},
			\end{aligned}
		\end{equation}
		where the remainder terms are denoted as
		\begingroup
		\allowdisplaybreaks
		\begin{align*}
			\mathcal{R}_{1,k,\ell}:=&\hbar\Im\left<b(f_{\widetilde{q}_{\ell},\widetilde{p}_{\ell}}^{\hbar})\cdots b(f_{\widetilde{q}_{1},\widetilde{p}_{1}}^{\hbar})\nabla_{\mathbf{q}_{k}}\big(a(f_{q_{k},p_{k}}^{\hbar})\cdots a(f_{q_{1},p_{1}}^{\hbar})\big)\Psi_{N_1, N_2, t},\right.\\
			&\hspace{15em}\left.b(f_{\widetilde{q}_{\ell},\widetilde{p}_{\ell}}^{\hbar})\cdots b(f_{\widetilde{q}_{1},\widetilde{p}_{1}}^{\hbar})a(f_{q_{k},p_{k}}^{\hbar})\cdots a(f_{q_{1},p_{1}}^{\hbar})\Psi_{N_1, N_2, t}\right>,\\
			\mathcal{R}_{2,k,\ell}:=&\hbar\Im\left<\nabla_{\mathbf{\tq}_{\ell}}\big(b(f_{\widetilde{q}_{\ell},\widetilde{p}_{\ell}}^{\hbar})\cdots b(f_{\widetilde{q}_{1},\widetilde{p}_{1}}^{\hbar})a(f_{q_{k},p_{k}}^{\hbar})\cdots a(f_{q_{1},p_{1}}^{\hbar})\big)\Psi_{N_1, N_2, t},\right.\\
			&\hspace{15em}\left.b(f_{\widetilde{q}_{\ell},\widetilde{p}_{\ell}}^{\hbar})\cdots b(f_{\widetilde{q}_{1},\widetilde{p}_{1}}^{\hbar})a(f_{q_{k},p_{k}}^{\hbar})\cdots a(f_{q_{1},p_{1}}^{\hbar})\Psi_{N_1, N_2, t}\right>,\\
			(\widetilde{\mathcal{R}}
			_{11,k,\ell})_{j}:= & \frac{1}{(2\pi)^3}\int(\dw\du)^{\otimes k}(\dtw\dtu)^{\otimes\ell}\int\dy\left[\int_{0}^{1}\dd s\,\nabla V_{11}(su_{j}+(1-s)w_{j}-y)\right]\\
			& \left(f_{q,p}^{\hbar}(w)\overline{f_{q,p}^{\hbar}(u)}\right)^{\otimes k}\left(f_{\tq,\tp}^{\hbar}(\tw)\overline{f_{\tq,\tp}^{\hbar}(\tu)}\right)^{\otimes\ell}
			\iint\dq_{k+1} \dd  p_{k+1} \ f_{q_{k+1},p_{k+1}}^{\hbar}(y)\int\dd v\ \overline{f_{{q_{k+1}},{p_{k+1}}}^{\hbar}(v)}\\
			&\left<b_{\tw_{\ell}}\dots b_{\tw_{1}}a_{w_{k}}\cdots a_{w_{1}}a_{y}\Psi_{N_1, N_2, t}, b_{\tu_{\ell}}\dots b_{\tu_{1}}a_{u_{k}}\cdots a_{u_{1}}a_{v}\Psi_{N_1, N_2, t}\right>\\
			& -\frac{1}{(2\pi)^3}\iint\dd q_{k+1}\dd p_{k+1}\,\nabla V_{11}(q_{j}-q_{k+1})m_{N_{1},N_{2},t}^{(k+1,\ell)}(q_1, p_1, \dots, q_{k+1}, p_{k+1},\tq_1, \tp_1, \dots, \tq_\ell, \tp_\ell),\\
			(\widetilde{\mathcal{R}}_{12,1,k,\ell})_{j}:= & \frac{1}{(2\pi)^3}  \int (\dw\du)^{\otimes k}(\dtw\dtu)^{\otimes \ell}  \int \dty\,  \left[\int_{0}^{1}\dd s\,\nabla V_{12}(su_{j}+(1-s)w_{j}-\ty)\right]\\
			&\left( f^\hbar_{q,p}(w) \overline{f^\hbar_{q,p}(u)} \right)^{\otimes k}
			\left( f^\hbar_{\tq,\tp}(\tw) \overline{f^\hbar_{\tq,\tp}(\tu)} \right)^{\otimes \ell} \iint\dd \tq_{\ell+1}\dd \tp_{\ell+1}\ f_{\tq_{\ell+1},\tp_{\ell+1}}^{\hbar}(\ty)\int\dd \tv\ \overline{f_{\tq_{\ell+1},\tp_{\ell+1}}^{\hbar}(\tv)}\\
			&\langle b_{\tw_\ell}\cdots b_{\tw_1}  b_{\ty} a_{w_k}\cdots  a_{w_1}\Psi_{N_1, N_2, t}, b_{\tu_\ell}\cdots b_{\tu_1} b_{\tv} a_{u_k}\cdots  a_{u_1} \Psi_{N_1, N_2, t}\rangle\\
			& -\frac{1}{(2\pi)^{3}} \iint\dd \tq_{\ell+1}\dd \tp_{\ell+1}\nabla V_{12}(q_{j}-\tq_{\ell+1})m_{N_{1},N_{2},t}^{(k,\ell+1)}(q_1, p_1, \dots, q_k, p_k,\tq_1, \tp_1, \dots, \tq_{\ell+1}, \tp_{\ell+1}),\\
			(\widetilde{\mathcal{R}}_{12,2,k,\ell})_{j}:= & \frac{1}{(2\pi)^3}   \int (\dw\du)^{\otimes k}(\dtw\dtu)^{\otimes \ell}  \int \dx\left[\int_{0}^{1}\dd s\,\nabla V_{12}(s\tu_{j}+(1-s)\tw_{j}-x)\right]\\
			&\left( f^\hbar_{q,p}(w) \overline{f^\hbar_{q,p}(u)} \right)^{\otimes k}
			\left( f^\hbar_{\tq,\tp}(\tw) \overline{f^\hbar_{\tq,\tp}(\tu)} \right)^{\otimes \ell} \iint\dd q_{k+1}\dd p_{k+1}\ f_{q_{k+1},p_{k+1}}^{\hbar}(x)\int\dd v\ \overline{f_{q_{k+1},p_{k+1}}^{\hbar}(v)}\\
			&\langle b_{\tw_\ell}\cdots  b_{\tw_1}   a_{w_k}\cdots a_{w_1} a_x \Psi_{N_1, N_2, t}, b_{\tu_\ell}\cdots  b_{\tu_1}   a_{u_k}\cdots a_{u_1} a_v \Psi_{N_1, N_2, t}\rangle\\
			& -\frac{1}{(2\pi)^{3}}\iint\dd q_{k+1}\dd p_{k+1}\,\nabla V_{12}(q_{k+1}-\tq_{j})m_{N_{1},N_{2},t}^{(k+1,\ell)}(q_1, p_1, \dots, q_{k+1}, p_{k+1},\tq_1, \tp_1, \dots, \tq_\ell, \tp_\ell),\\
			(\widetilde{\mathcal{R}}_{22,k,\ell})_j:=&\frac{1}{(2\pi)^{3}}  \int(\dw\du)^{\otimes k}(\dtw\dtu)^{\otimes\ell}\int\dty\,\left[\int_{0}^{1}\dd s\,\nabla V_{22}(s\tu_{j}+(1-s)\tw_{j}-\ty)\right]\\
			& \left(f_{q,p}^{\hbar}(w)\overline{f_{q,p}^{\hbar}(u)}\right)^{\otimes k}  \left(f_{\tq,\tp}^{\hbar}(\tw)\overline{f_{\tq,\tp}^{\hbar}(\tu)}\right)^{\otimes\ell}\iint\dd \tq_{\ell+1}\dd \tp_{\ell+1}\ f_{\tq_{\ell+1},\tp_{\ell+1}}^{\hbar}(\ty)\int\dd \tv\ \overline{f_{\tq_{\ell+1},\tp_{\ell+1}}^{\hbar}(\tv)}\\
			& \left<b_{\tw_{\ell}}\dots b_{\tw_{1}}b_{\ty}a_{w_{k}}\cdots a_{w_{1}}\Psi_{N_1, N_2, t},b_{\tu_{\ell}}\dots b_{\tu_{1}}b_{\tv}a_{u_{k}}\cdots a_{u_{1}}\Psi_{N_1, N_2, t}\right>\\
			& -\frac{1}{(2\pi)^{3}} \iint\dd \tq_{\ell+1}\dd \tp_{\ell+1}\nabla V_{22}(\tq_{j}-\tq_{\ell+1})m_{N_{1},N_{2},t}^{(k,\ell+1)}(q_1, p_1, \dots, q_k, p_k,\tq_1, \tp_1, \dots, \tq_{\ell+1}, \tp_{\ell+1}),\\
			\widehat{\mathcal{R}}_{11,k,\ell}:= & \frac{\mathrm{i}\hbar^{2}}{2}\int(\dw\du)^{\otimes k}(\dtw\dtu)^{\otimes\ell}\sum_{j\neq i}^{k}\bigg[V_{11}(u_{j}-u_{i})-V_{11}(w_{j}-w_{i})\bigg]
			\left(f_{q,p}^{\hbar}(w)\overline{f_{q,p}^{\hbar}(u)}\right)^{\otimes k}\\
			& \left(f_{\tq,\tp}^{\hbar}(\tw)\overline{f_{\tq,\tp}^{\hbar}(\tu)}\right)^{\otimes\ell}
			\left<b_{\tw_{\ell}}\dots b_{\tw_{1}}a_{w_{k}}\cdots a_{w_{1}}\Psi_{N_1, N_2, t},b_{\tu_{\ell}}\dots b_{\tu_{1}}a_{u_{k}}\cdots a_{u_{1}}\Psi_{N_1, N_2, t}\right>,\\
			\widehat{\mathcal{R}}_{12,k,\ell}:= & \mathrm{i}\hbar^{2}\int(\dw\du)^{\otimes k}(\dtw\dtu)^{\otimes\ell}\sum_{j=1}^{k}\sum_{i=1}^{\ell}\bigg[V_{12}(u_{j}-\tu_{i})-V_{12}(w_{j}-\tw_{i})\bigg]\left(f_{q,p}^{\hbar}(w)\overline{f_{q,p}^{\hbar}(u)}\right)^{\otimes k}\\
			& \left(f_{\tq,\tp}^{\hbar}(\tw)\overline{f_{\tq,\tp}^{\hbar}(\tu)}\right)^{\otimes\ell}
			\left<b_{\tw_{\ell}}\dots b_{\tw_{1}}a_{w_{k}}\cdots a_{w_{1}}\Psi_{N_1, N_2, t},b_{\tu_{\ell}}\dots b_{\tu_{1}}a_{u_{k}}\cdots a_{u_{1}}\Psi_{N_1, N_2, t}\right>,\\
			\widehat{\mathcal{R}}_{22,k,\ell} := &  \frac{\mathrm{i}\hbar^2}{2}\int(\dw\du)^{\otimes k}(\dtw\dtu)^{\otimes\ell}\sum_{j\neq i}^{\ell}\bigg[V_{22}(\tu_{j}-\tu_{i})-V_{22}(\tw_{j}-\tw_{i})\bigg]
			\left(f_{q,p}^{\hbar}(w)\overline{f_{q,p}^{\hbar}(u)}\right)^{\otimes k}\\
			& \left(f_{\tq,\tp}^{\hbar}(\tw)\overline{f_{\tq,\tp}^{\hbar}(\tu)}\right)^{\otimes\ell} \left<b_{\tw_{\ell}}\dots b_{\tw_{1}}a_{w_{k}}\cdots a_{w_{1}}\Psi_{N_1, N_2, t},b_{\tu_{\ell}}\dots b_{\tu_{1}}a_{u_{k}}\cdots a_{u_{1}}\Psi_{N_1, N_2, t}\right>.
		\end{align*}
		\endgroup
	\end{Proposition}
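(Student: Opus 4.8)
The plan is to establish \eqref{eq:BBGKY_k} as an \emph{exact} operator identity, running the arguments of Propositions~\ref{prop:vla_k=1} and~\ref{prop:vla_k>1} simultaneously for both species. Differentiating the kernel $\gamma^{(k,\ell)}_{N_1,N_2,t}$ in time and using~\eqref{eq:Schrodinger} gives
\[
\ii\hbar\,\partial_t\gamma^{(k,\ell)}_{N_1,N_2,t}=\big\langle\Psi_{N_1,N_2,t},\,\big[a^*_{w_1}\cdots a^*_{w_k}b^*_{\tw_1}\cdots b^*_{\tw_\ell}b_{\tu_\ell}\cdots b_{\tu_1}a_{u_k}\cdots a_{u_1},\,\mathcal{H}_N\big]\Psi_{N_1,N_2,t}\big\rangle,
\]
after which one integrates against the coherent-state kernel $\big(f^\hbar_{q,p}(w)\overline{f^\hbar_{q,p}(u)}\big)^{\otimes k}\big(f^\hbar_{\tq,\tp}(\tw)\overline{f^\hbar_{\tq,\tp}(\tu)}\big)^{\otimes\ell}$ and divides by $\ii\hbar$. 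Using~\eqref{eq:Fock_Hamil} we split $\mathcal{H}_N$ into the kinetic terms $\mathcal{K}_1+\mathcal{K}_2$ of~\eqref{eq:kinetic-op} and the three potential terms (the $V_{11}$- and $V_{22}$-interactions inside $\mathcal{H}_1$, $\mathcal{H}_2$ and the cross term $\tfrac{1}{N}\int\dd x\,\dd\ty\,V_{12}(x-\ty)a^*_xb^*_\ty b_\ty a_x$), so that the commutator splits into five pieces; since $a^{(*)}$ and $b^{(*)}$ commute by~\eqref{eq:CR}, each piece sees only the slots of the relevant species.

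For the kinetic piece, commuting the operator string through $\mathcal{K}_1$ and $\mathcal{K}_2$ with the CAR~\eqref{eq:CAR} and then integrating by parts against the coherent states (using $\nabla_{q_j}$, resp. $\nabla_{\tq_j}$, of the coherent-state products) reproduces the free-streaming terms $\mathbf{p}_k\cdot\nabla_{\mathbf{q}_k}+\mathbf{\tp}_\ell\cdot\nabla_{\mathbf{\tq}_\ell}$ together with the $O(\hbar)$ remainders $\nabla_{\mathbf{q}_k}\cdot\mathcal{R}_{1,k,\ell}$ and $\nabla_{\mathbf{\tq}_\ell}\cdot\mathcal{R}_{2,k,\ell}$; this is the computation of Proposition~\ref{prop:vla_k>1} carried out once for each species. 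For the $V_{11}$ piece, commuting $\tfrac{1}{2N}\int V_{11}(x-y)a^*_xa^*_ya_ya_x$ through the string produces two families of terms. In the first, the new integration variable is contracted against the $j$-th non-tilde slot; inserting the resolution of identity $\tfrac{1}{(2\pi\hbar)^3}\int\dd q_{k+1}\dd p_{k+1}\,|f^\hbar_{q_{k+1},p_{k+1}}\rangle\langle f^\hbar_{q_{k+1},p_{k+1}}|=\unity$ to create the $(k+1)$-th phase-space slot, using $V_{11}(w_j-y)-V_{11}(u_j-y)=\int_0^1\nabla V_{11}(su_j+(1-s)w_j-y)\cdot(w_j-u_j)\,\dd s$ and the identity $(w_j-u_j)f^\hbar_{q_j,p_j}(w_j)\overline{f^\hbar_{q_j,p_j}(u_j)}=\tfrac{\hbar}{\ii}\nabla_{p_j}\big[f^\hbar_{q_j,p_j}(w_j)\overline{f^\hbar_{q_j,p_j}(u_j)}\big]$, an integration by parts in $p_j$ extracts the mean-field term $\tfrac{1}{(2\pi)^3}\nabla_{\mathbf{p}_k}\cdot\iint\dd q_{k+1}\dd p_{k+1}\,\nabla V_{11}(q_j-q_{k+1})m^{(k+1,\ell)}_{N_1,N_2,t}$, and the leftover is \emph{declared} to be $\widetilde{\mathcal{R}}_{11,k,\ell}$, i.e. the difference between the exact expression and this clean main term. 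In the second family two non-tilde slots $w_i,w_j$ are contracted against one another, producing $V_{11}(u_i-u_j)-V_{11}(w_i-w_j)$ against the $k$-slot string with the prefactor $\tfrac{1}{N}\cdot\tfrac{1}{\ii\hbar}$ of order $\hbar^2$, which is exactly $\widehat{\mathcal{R}}_{11,k,\ell}$. The $V_{22}$ piece is identical with the species interchanged, yielding $\widetilde{\mathcal{R}}_{22,k,\ell}$ and $\widehat{\mathcal{R}}_{22,k,\ell}$.

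The genuinely new contribution is the cross term $\tfrac{1}{N}\int V_{12}(x-\ty)a^*_xb^*_\ty b_\ty a_x$. When it is commuted through the string, the $a^*_xa_x$ block contracts only against the non-tilde slots and the $b^*_\ty b_\ty$ block only against the tilde slots, and three cases arise: (i) $x$ is contracted against a non-tilde slot $w_j$ with $\ty$ free, so that inserting a fresh tilde slot via $\tfrac{1}{(2\pi\hbar)^3}\int\dd\tq_{\ell+1}\dd\tp_{\ell+1}\,|f^\hbar_{\tq_{\ell+1},\tp_{\ell+1}}\rangle\langle f^\hbar_{\tq_{\ell+1},\tp_{\ell+1}}|$ and repeating the Taylor-expansion and $\nabla_{p_j}$-integration-by-parts step yields $\tfrac{1}{(2\pi)^3}\nabla_{\mathbf{p}_k}\cdot\iint\dd\tq_{\ell+1}\dd\tp_{\ell+1}\,\nabla V_{12}(q_j-\tq_{\ell+1})m^{(k,\ell+1)}_{N_1,N_2,t}$ plus $\widetilde{\mathcal{R}}_{12,1,k,\ell}$; (ii) symmetrically, $\ty$ is contracted against a tilde slot $\tq_j$ with $x$ free, and inserting a fresh non-tilde slot and using $V_{12}(-x)=V_{12}(x)$ yields $\tfrac{1}{(2\pi)^3}\nabla_{\mathbf{\tp}_\ell}\cdot\iint\dd q_{k+1}\dd p_{k+1}\,\nabla V_{12}(q_{k+1}-\tq_j)m^{(k+1,\ell)}_{N_1,N_2,t}$ plus $\widetilde{\mathcal{R}}_{12,2,k,\ell}$; (iii) $x$ is contracted against $w_j$ and $\ty$ against $\tw_i$ at once, so that no new variable appears, and because $x$ and $\ty$ range over the two distinct single-particle spaces there is no $\tfrac{1}{2}$ symmetry factor, whence one obtains $\sum_{j=1}^k\sum_{i=1}^\ell\big[V_{12}(u_j-\tu_i)-V_{12}(w_j-\tw_i)\big]$ against the $(k,\ell)$-slot string with prefactor $\tfrac{1}{N}\cdot\tfrac{1}{\ii\hbar}$ of order $\hbar^2$ --- twice the weight of $\widehat{\mathcal{R}}_{11}$ and $\widehat{\mathcal{R}}_{22}$ --- i.e. $\widehat{\mathcal{R}}_{12,k,\ell}$. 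Summing the five contributions and dividing by $\ii\hbar$ gives \eqref{eq:BBGKY_k}.

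The main obstacle is organizational rather than analytic: \eqref{eq:BBGKY_k} is an exact identity, so there is nothing to estimate here (the smallness of the remainders is deferred to Section~\ref{sec:Bound of errors}); the care required is in tracking, in each of the five commutators, which annihilation operator is contracted against which creation operator, which of the two resolutions of identity is used to manufacture the new phase-space variable, and that every approximation step --- the Taylor expansion of $V_{\alpha\beta}$, the $(w-u)\leftrightarrow\tfrac{\hbar}{\ii}\nabla_p$ substitution, the insertion of $\unity$ --- is recorded exactly, so that the $\widetilde{\mathcal{R}}$'s are honestly defined by~\eqref{eq:BBGKY_k} itself. Since Propositions~\ref{prop:vla_k=1} and~\ref{prop:vla_k>1} already perform this bookkeeping in the case $\ell=0$, the only new computation is the cross term of cases (i)--(iii).
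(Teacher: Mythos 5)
Your proposal is correct and follows essentially the same route as the paper: differentiate in time via the Schr\"odinger equation, split the commutator into the five pieces $\mathscr{K}_{1,k,\ell},\mathscr{V}_{11,k,\ell},\mathscr{K}_{2,k,\ell},\mathscr{V}_{22,k,\ell},\mathscr{V}_{12,k,\ell}$, reuse the single-species computation for the intra-species pieces (exploiting that $a$- and $b$-operators commute), and handle the cross term by CAR bookkeeping, the mean-value expansion of $V_{12}$, the $(w-u)\leftrightarrow\frac{\hbar}{\ii}\nabla_p$ identity, and insertion of the coherent-state resolution of identity, with the $\widetilde{\mathcal{R}}$'s and $\widehat{\mathcal{R}}$'s defined as the exact leftovers. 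Your three cases (i)--(iii) for the cross term correspond precisely to the paper's $\mathscr{J}_1$ (free $\ty$ part), $\mathscr{J}_2$, and the double-contraction part of $\mathscr{J}_1$, including the absence of the factor $\tfrac12$ in $\widehat{\mathcal{R}}_{12,k,\ell}$.
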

	
	\begin{Remark}
		Although it is not explicitly stated in the proposition, when $k = 1$, the sum $\sum_{j\neq i}^{k}$ is an empty sum, and by natural convention, the sum is 0. In other words, when $k = 1$, $\widehat{\mathcal{R}}_{11,k,\ell} = 0$. Similarly, when $\ell = 1$, $\widehat{\mathcal{R}}_{22,k,\ell} = 0$.
	\end{Remark}
	
	\begin{Remark}
		Since the proofs for Propositions \ref{prop:vla_k=1}, \ref{prop:vla_k>1}, and \ref{prop:vla_bbgky_hierarchy} are quite similar, we focus on detailing the proof for the more complex one, Proposition \ref{prop:vla_bbgky_hierarchy}, within the main text to save space. 
	\end{Remark}
	
	\begin{proof}[Proof of Proposition \ref{prop:vla_k=1}]
		First, let's consider taking the time derivative on the Husimi measure and utilizing the Schrödinger equation \eqref{eq:Schrodinger}. We have
		\begin{equation} \label{derive_husimi_10}
			\begin{aligned}
				&2\mathrm{i}\hbar \,\p_tm^{(1,0)}_{N_1,N_2,t}(q_1,p_1)\\
				&=\bigg(\hbar^2 \int \dw\du\dx \, f_{q_1,p_1}^\hbar (w) \overline{f_{q_1,p_1}^\hbar (u)} \left<\Psi_{N_1, N_2, t}, a^*_w a_u \nabla_x a_x^* \nabla_x a_x \Psi_{N_1, N_2, t} \right> \\
				&\qquad\qquad- \hbar^2  \int \dw\du\dx \, \overline{f_{q_1,p_1}^\hbar (w)}f_{q_1,p_1}^\hbar (u)\left< \Psi_{N_1, N_2, t}, \nabla_x a_x^* \nabla_x a_x  a^*_u a_w \Psi_{N_1, N_2, t} \right> \bigg)\\
				&\quad+ \bigg(  \frac{1}{N}  \int\dw\du \dx\dy \,  f_{q_1,p_1}^\hbar (w) \overline{f_{q_1,p_1}^\hbar (u)} \left< \Psi_{N_1, N_2, t},  V_{11}(x-y) a^*_w a_u a^*_x a^*_y a_y a_x \Psi_{N_1, N_2, t} \right>\\
				&\qquad\qquad-  \frac{1}{N}   \int\dw\du \dx\dy \,  \overline{f_{q_1,p_1}^\hbar (w)} f_{q_1,p_1}^\hbar (u) \left< \Psi_{N_1, N_2, t},  V_{11}(x-y) a^*_x a^*_y a_y a_x a^*_u a_w \Psi_{N_1, N_2, t} \right> \bigg)\\
				&\quad+ \bigg(  \frac{2}{N}  \int\dw\du \dx \dty\,\, f_{q_1,p_1}^\hbar (w) \overline{f_{q_1,p_1}^\hbar (u)} \left< \Psi_{N_1, N_2, t},  V_{12}(x-\ty) a^*_w a_u a^*_x b^*_{\ty}\, b_{\ty} a_x \Psi_{N_1, N_2, t} \right>\\
				&\qquad\qquad-  \frac{2}{N}   \int\dw\du \dx \dty\,\, \overline{f_{q_1,p_1}^\hbar (w)} f_{q_1,p_1}^\hbar (u) \left< \Psi_{N_1, N_2, t},  V_{12}(x-\ty) a^*_x b^*_{\ty}\, b_{\ty} a_x a^*_u a_w \Psi_{N_1, N_2, t} \right> \bigg)\\
				&\quad+\bigg(\hbar^2 \int \dw\du\dtx\, f_{q_1,p_1}^\hbar (w) \overline{f_{q_1,p_1}^\hbar (u)} \left<\Psi_{N_1, N_2, t}, a^*_w a_u \nabla_{\tx} b_{\tx}^* \nabla_{\tx} b_{\tx} \Psi_{N_1, N_2, t} \right> \\
				&\qquad\qquad- \hbar^2  \int \dw\du\dx \, \overline{f_{q_1,p_1}^\hbar (w)}f_{q_1,p_1}^\hbar (u)\left< \Psi_{N_1, N_2, t}, \nabla_{\tx} b_{\tx}^* \nabla_{\tx} b_{\tx}  a^*_u a_w \Psi_{N_1, N_2, t} \right> \bigg)\\
				&\quad+ \bigg(  \frac{1}{N}  \int\dw\du \dtx\dty\,\,  f_{q_1,p_1}^\hbar (w) \overline{f_{q_1,p_1}^\hbar (u)} \left< \Psi_{N_1, N_2, t},  V_{22}(\tx-\ty) a^*_w a_u b^*_{\tx} b^*_{\ty}\, b_{\ty} b_{\tx} \Psi_{N_1, N_2, t} \right>\\
				&\qquad\qquad-  \frac{1}{N}   \int\dw\du \dtx\dty\,\,  \overline{f_{q_1,p_1}^\hbar (w)} f_{q_1,p_1}^\hbar (u) \left< \Psi_{N_1, N_2, t},  V_{22}(\tx-\ty) b^*_{\tx} b^*_{\ty}\, b_{\ty} b_{\tx} a^*_u a_w \Psi_{N_1, N_2, t} \right> \bigg)\\
				&=:\mathscr{K}_{1}+\mathscr{V}_{11}+\mathscr{V}_{12}+\mathscr{K}_{2}+\mathscr{V}_{22}.
			\end{aligned}
		\end{equation}
		Since only the first species is actually involved in $\mathscr{K}_{1}$ and $\mathscr{V}_{11}$, they are similar to the one-species case \cite{Chen2021}. So we have
		\begin{equation} \label{BBGKY_k1}
			\begin{aligned}
				&\frac{1}{2\mathrm{i}\hbar}(\mathscr{K}_{1}+\mathscr{V}_{11}) + p_1 \cdot \nabla_{q_1} m_{N_1,N_2,t}^{(1,0)}(q_1,p_1)\\
				&= \frac{1}{(2\pi)^3}\nabla_{p_1} \cdot  \iint \dd q_2\dd p_2 \nabla V_{11}(q_1-q_2) m_{N_1,N_2,t}^{(2,0)}(q_1,q_2,p_1,p_2) + \nabla_{q_1}\cdot \mathcal{R}_1 +\nabla_{p_1}\cdot \widetilde{\mathcal{R}}_{11}.
			\end{aligned}
		\end{equation}
		
		For part $\mathscr{K}_{2}$, we interchange $w$ and $u$ in the second line. Using the commutation relation \eqref{eq:CR}, then the second line is just canceled by the first line. So we get $\mathscr{K}_{2}=0$. For the same reason, we know that $\mathscr{V}_{22}=0$.
		
		For part $\mathscr{V}_{12}$, we also interchange $w$ and $u$ in the second line. Then we observe that
		\begin{align*}
			a^*_w a_u a^*_x b^*_{\ty}\, b_{\ty} a_x-a^*_x b^*_{\ty}\, b_{\ty} a_x a^*_w a_u
			= (\delta_{u=x} a^*_w a_x -  \delta_{w=x} a^*_x a_u) b^*_{\ty}\, b_{\ty}. 
		\end{align*}
		Therefore, we have
		\begin{equation}\label{eq:Fock_Potential_V_12}
			\begin{aligned}
				\mathscr{{V}}_{12} &= \frac{2}{N}\int \dw\du\dty\,\, f_{q_1,p_1}^\hbar (w) \overline{f_{q_1,p_1}^\hbar (u)}\left< \Psi_{N_1, N_2, t},  V_{12}(u-\ty)  a^*_w a_u  b^*_{\ty}\, b_{\ty}\, \Psi_{N_1, N_2, t} \right>\\
				&\quad-   \frac{2}{N}\int \dw\du\dty\,\, f_{q_1,p_1}^\hbar (w) \overline{f_{q_1,p_1}^\hbar (u)} \left< \Psi_{N_1, N_2, t},  V_{12}(w-\ty)  a^*_w a_u  b^*_{\ty}\, b_{\ty}\, \Psi_{N_1, N_2, t} \right>\\
				&=  \frac{2}{N}\int \dw\du\dty\,\, f_{q_1,p_1}^\hbar (w) \overline{f_{q_1,p_1}^\hbar (u)}\bigg( V_{12}(u-\ty)- V_{12}(w-\ty) \bigg)\left< \Psi_{N_1, N_2, t},    a^*_w a_u  b^*_{\ty}\, b_{\ty}\, \Psi_{N_1, N_2, t} \right>.
			\end{aligned}
		\end{equation}
		Then we use the mean value theorem
		\begin{equation}\label{eq:MeanValueTheorem_V}
			V_{12}(u-\ty)- V_{12}(w-\ty) = \int_0^1 \ds\, \nabla V_{12}\big(s(u-\ty)+(1-s)(w-\ty) \big)\cdot (u-w).
		\end{equation}
		Note that due to the property $V_{12}\big(s(u-\tilde{y})+(1-s)(w-\tilde{y}) \big) = V_{12}\big(su+(1-s)w - \tilde{y} \big)$, we can derive the following from \eqref{eq:Fock_Potential_V_12}:
		
		\begin{equation}\label{eq:Fock_Potential_V_12_2}
			\begin{aligned}
				\mathscr{{V}}_{12}
				&= \frac{2}{N} \int \dw\du\dty\,\, f_{q_1,p_1}^\hbar (w) \overline{f_{q_1,p_1}^\hbar (u)}\\
				&\qquad\qquad\qquad\left( \int_0^1 \ds\, \nabla V_{12}\big(su+(1-s)w - \ty \big) \right)\cdot (u-w) \left< \Psi_{N_1, N_2, t}, a^*_w a_u  b^*_{\ty}\, b_{\ty}\Psi_{N_1, N_2, t} \right> \\
				&= \frac{2\mathrm{i}\hbar}{N} \int \dw\du\dty\, \int_0^1 \ds\, \nabla V_{12}\big(su+(1-s)w - \ty \big)\\
				&\qquad\qquad\qquad \cdot \nabla_{p_1} \left( f_{q_1,p_1}^\hbar (w) \overline{f_{q_1,p_1}^\hbar (u)} \right)\left< \Psi_{N_1, N_2, t}, a^*_w a_u  b^*_{\ty}\, b_{\ty}\Psi_{N_1, N_2, t} \right> \\
				&= \frac{2\mathrm{i}\hbar}{N} \int \dw\du\dty\, \int_0^1 \ds\, \nabla V_{12}\big(su+(1-s)w - \ty \big)\\
				&\qquad\qquad\qquad \cdot \nabla_{p_1} \left(f_{q_1,p_1}^\hbar (w) \overline{f_{q_1,p_1}^\hbar (u)}\right)\left<a_w b_{\ty}\, \Psi_{N_1, N_2, t},  a_u b_{\ty}\,\Psi_{N_1, N_2, t} \right>,
			\end{aligned}
		\end{equation}
		where we use the fact that
		\begin{equation}\label{eq:grad_p_fhbar}
			\nabla_{p_1} \left(f_{q_1,p_1}^\hbar (w) \overline{f_{q_1,p_1}^\hbar (u)}\right) = \frac{ \mathrm{i}}{\hbar} (w-u)\cdot f_{q_1,p_1}^\hbar (w) \overline{f_{q_1,p_1}^\hbar (u)}.
		\end{equation}
		Applying the following identity
		\begin{equation} \label{eq:coherent_projection}
			\frac{1}{(2\pi \hbar)^3} \int \dd \tq_2 \dd \tp_2 \ket{f^\hbar_{\tq_2,\tp_2}}\bra{f^\hbar_{\tq_2,\tp_2}} =  \mathds{1},
		\end{equation}
		onto $b_{\ty}\, \Psi_{N_1, N_2, t}$, we get
		\begin{align*}
			b_{\ty}\, \Psi_{N_1, N_2, t} = \frac{1}{(2\pi \hbar)^3} \int \dd \tq_2 \dd \tp_2\, f^\hbar_{\tq_2,\tp_2}(\ty) \int \ddv\, \overline{f^\hbar_{\tq_2,\tp_2}(v)}\, b_{\tv}\, \Psi_{N_1, N_2, t} .
		\end{align*}
		Putting this back into \eqref{eq:Fock_Potential_V_12_2} and using $\hbar^3 = N^{-1}$, we get the following
		\begin{equation}\label{eq:Fock_Potential_V_12_3}
			\begin{aligned}
				\mathscr{{V}}_{12} = & \frac{2\mathrm{i}\hbar}{(2\pi)^3} \int\dw\du  \dty\,\dd \tv \dd \tq_2 \dd \tp_2 \int_0^1 \ds\, \nabla V_{12}\big(su+(1-s)w - \ty \big) \\
				&\qquad\qquad\cdot \nabla_{p_1} \left( f_{q_1,p_1}^\hbar (w) \overline{f_{q_1,p_1}^\hbar (u)} \right) f_{\tq_2,\tp_2}^\hbar (\ty) \overline{f_{\tq_2,\tp_2}^\hbar (\tv)} \left< a_w b_{\ty}\, \Psi_{N_1, N_2, t},  a_u b_{\tv}\, \Psi_{N_1, N_2, t} \right>.
			\end{aligned}
		\end{equation}
		Therefore, we have
		\begin{equation}
			\begin{aligned}
				\label{eq:Fock_Potential_V_12_4}
				\frac{\mathscr{{V}}_{12}}{2i\hbar}  
				&=\frac{1}{(2\pi)^3} \int\dw\du \dty\,\dd \tv \dd \tq_1 \dd \tp_1 \int_0^1 \ds\, \nabla V_{12}\big(su+(1-s)w - \ty \big) \\
				&\qquad\qquad\cdot \nabla_{p_1} \left( f_{q_1,p_1}^\hbar (w) \overline{f_{q_1,p_1}^\hbar (u)} \right) f_{\tq_1,\tp_1}^\hbar (\ty) \overline{f_{\tq_1,\tp_1}^\hbar (\tv)} \left< a_w b_{\ty}\, \Psi_{N_1, N_2, t},  a_u b_{\tv}\, \Psi_{N_1, N_2, t} \right>.
			\end{aligned}
		\end{equation}
		Therefore, we divide \eqref{derive_husimi_10} by $2\mathrm{i}\hbar$ and have
		\begin{equation}
			\begin{aligned}
				&\partial_t m_{N_1,N_2,t}^{(1,0)}(q_1,p_1) + p_1 \cdot \nabla_{q_1} m_{N_1,N_2,t}^{(1,0)}(q_1,p_1)\\
				&=\nabla_{q_1}\cdot  \hbar \Im \left< \nabla_{q_1} a (f^\hbar_{q_1,p_1}) \Psi_{N_1, N_2, t}, a (f^\hbar_{q_1,p_1}) \Psi_{N_1, N_2, t} \right> \\
				&\quad +\frac{1}{(2\pi)^3} \nabla_{p_1}\cdot  \iint \dw\du \iint \dy\dd v \iint \dd q_2 \dd p_2 \int_0^1 \dd s\,\\
				&\hspace{1cm}\nabla V_{11}\big(su+(1-s)w - y \big) f_{q_1,p_1}^\hbar (w) \overline{f_{q_1,p_1}^\hbar (u)} f_{q_2,p_2}^\hbar (y) \overline{f_{q_2,p_2}^\hbar (v)}  \left< a_y a_w \Psi_{N_1, N_2, t}, a_v a_u \Psi_{N_1, N_2, t} \right>\\
				&\quad +\frac{1}{(2\pi)^3} \nabla_{p_1}\cdot  \iint \dw\du \iint \dd \ty\dd \tv \iint \dd \tq_1 \dd \tp_1 \int_0^1 \dd s\,\\
				&\hspace{1cm}\nabla V_{12}\big(su+(1-s)w - \ty \big) f_{q_1,p_1}^\hbar (w) \overline{f_{q_1,p_1}^\hbar (u)} f_{\tq_1,\tp_1}^\hbar (\ty) \overline{f_{\tq_1,\tp_1}^\hbar (\tv)}  \left< a_w b_{\ty} \Psi_{N_1, N_2, t}, a_u b_{\tv} \Psi_{N_1, N_2, t} \right>.
			\end{aligned}
		\end{equation}
		Thus, we obtain the desired equation for $m_{N_1,N_2,t}^{(1,0)}(q_1,p_1)$.
	\end{proof}
	
	\begin{proof}[Proof of Proposition \ref{prop:vla_k>1}]
		Similar to the proof in Proposition \ref{prop:vla_k=1}, when considering the time derivative of the Husimi measure involving only one species of particles, the term $h_2$ within the Hamiltonian $H_N$ exhibits no influence. In accordance with the demonstration found in \cite[Proposition 2.2]{Chen2021}, we obtain
		\begin{equation}\label{eq:derive_husimi_k0}
			2\mathrm{i}\hbar\p_{t}m_{N_{1},N_{2},t}^{(k,0)}(q_1, p_1, \dots, q_k, p_k)=\mathscr{K}_{1,k}+\mathscr{V}_{11,k}+\mathscr{V}_{12,k},  
		\end{equation}
		where
		\begin{align*}
			\mathscr{K}_{1,k}&:=\bigg(-\hbar^{2}\int(\dw\du)^{\otimes k}\left(f_{q,p}^{\hbar}(w)\overline{f_{q,p}^{\hbar}(u)}\right)^{\otimes k}\,\Delta_{x}\langle\Psi_{N_{1},N_{2},t},a_{w_{1}}^{*}\cdots a_{w_{k}}^{*}a_{u_{k}}\cdots a_{u_{1}}a_{x}^{*}a_{x}\Psi_{N_{1},N_{2},t}\rangle\\
			&\qquad+\hbar^{2}\int(\dw\du)^{\otimes k}\left(f_{q,p}^{\hbar}(w)\overline{f_{q,p}^{\hbar}(u)}\right)^{\otimes k}\,\Delta_{x}\langle\Psi_{N_{1},N_{2},t},a_{x}^{*}a_{x}a_{w_{1}}^{*}\cdots a_{w_{k}}^{*}a_{u_{k}}\cdots a_{u_{1}}\Psi_{N_{1},N_{2},t}\rangle\bigg)\\
			\mathscr{V}_{11,k}&:=\bigg(\frac{1}{N}\int(\dw\du)^{\otimes k}\iint\dx\dy\; V_{11}(x-y)\left(f_{q,p}^{\hbar}(w)\overline{f_{q,p}^{\hbar}(u)}\right)^{\otimes k}\\
			&\hspace{20em}\langle\Psi_{N_{1},N_{2},t},a_{w_{1}}^{*}\cdots a_{w_{k}}^{*}a_{u_{k}}\cdots a_{u_{1}}a_{x}^{*}a_{y}^{*}a_{y}a_{x}\Psi_{N_{1},N_{2},t}\rangle\\
			&\qquad\quad-\frac{1}{N}\int(\dw\du)^{\otimes k}\iint\dx\dy\; V_{11}(x-y)\left(f_{q,p}^{\hbar}(w)\overline{f_{q,p}^{\hbar}(u)}\right)^{\otimes k}\\
			&\hspace{20em}\langle\Psi_{N_{1},N_{2},t},a_{x}^{*}a_{y}^{*}a_{y}a_{x}a_{w_{1}}^{*}\cdots a_{w_{k}}^{*}a_{u_{k}}\cdots a_{u_{1}}\Psi_{N_{1},N_{2},t}\rangle\bigg)\\
			\mathscr{V}_{12,k}&:=\bigg(\frac{2}{N}\int(\dw\du)^{\otimes k}\iint\dx\dty\;V_{12}(x-\ty)\left(f_{q,p}^{\hbar}(w)\overline{f_{q,p}^{\hbar}(u)}\right)^{\otimes k}\\
			&\hspace{20em}\langle\Psi_{N_{1},N_{2},t},a_{w_{1}}^{*}\cdots a_{w_{k}}^{*}a_{u_{k}}\cdots a_{u_{1}}a_{x}^{*}b_{\ty}^{*}b_{\ty}a_{x}\Psi_{N_{1},N_{2},t}\rangle\\
			&\qquad\quad-\frac{2}{N}\int(\dw\du)^{\otimes k}\iint\dx\dty\;V_{12}(x-\ty)\left(f_{q,p}^{\hbar}(w)\overline{f_{q,p}^{\hbar}(u)}\right)^{\otimes k}\\
			&\hspace{20em}\langle\Psi_{N_{1},N_{2},t},a_{x}^{*}b_{\ty}^{*}b_{\ty}a_{x}a_{w_{1}}^{*}\cdots a_{w_{k}}^{*}a_{u_{k}}\cdots a_{u_{1}}\Psi_{N_{1},N_{2},t}\rangle\bigg).
		\end{align*}
		Similarly as the proposition before, for $\mathscr{K}_{1,k}$ and $\mathscr{V}_{11,k}$, the analysis is exactly the same as in \cite{Chen2021}. We have
		\begin{equation}\label{K_1k}
			\begin{aligned}
				\mathscr{K}_{1,k} = & - 2 \mathrm{i} \hbar  \mathbf{p}_k \cdot \nabla_{\mathbf{q}_k}m_{N_1,N_2,t}^{(k,0)}(q_1, p_1, \dots, q_k, p_k)\\
				& + 2 \mathrm{i} \hbar^2 \Im\left<\Delta_{\mathbf{q}_{k}}\big(a(f_{q_{k},p_{k}}^{\hbar})\cdots a(f_{q_{1},p_{1}}^{\hbar})\big)\Psi_{N_1, N_2, t}, a(f_{q_{k},p_{k}}^{\hbar})\cdots a(f_{q_{1},p_{1}}^{\hbar})\Psi_{N_1, N_2, t}\right>,
			\end{aligned}
		\end{equation}
		and
		\begin{equation}\label{V_11k}
			\begin{aligned}
				\mathscr{V}_{11,k}  = & \frac{2 \mathrm{i} \hbar}{(2\pi\hbar)^{3}N} \sum_{j=1}^k \int(\dw\du)^{\otimes k}\int\dy\left[\int_{0}^{1}\dd s\,\nabla V_{11}(su_{j}+(1-s)w_{j}-y)\right]\\
				&\cdot \nabla_{p_j} \left(f_{q,p}^{\hbar}(w)\overline{f_{q,p}^{\hbar}(u)}\right)^{\otimes k}\iint\dd q_{k+1}\dd p_{k+1}\ f_{q_{k+1},p_{k+1}}^{\hbar}(y)\int\dd v\ \overline{f_{q_{k+1},p_{k+1}}^{\hbar}(v)}\\
				& \left<a_{w_{k}}\cdots a_{w_{1}}a_{y}\Psi_{N_1, N_2, t}, a_{u_{k}}\cdots a_{u_{1}}a_{v}\Psi_{N_1, N_2, t}\right>\\
				& - \frac{1}{N}\int(\dw\du)^{\otimes k}\sum_{j\neq i}^{k}\bigg[V_{11}(u_{j}-u_{i})-V_{11}(w_{j}-w_{i})\bigg]\\
				& \left(f_{q,p}^{\hbar}(w)\overline{f_{q,p}^{\hbar}(u)}\right)^{\otimes k} \left<a_{w_{k}}\cdots a_{w_{1}}\Psi_{N_1, N_2, t},a_{u_{k}}\cdots a_{u_{1}}\Psi_{N_1, N_2, t}\right>.
			\end{aligned}
		\end{equation}
		The only new term is $\mathscr{V}_{12,k}$. We have
		\begin{equation} \label{anil_magic_K_new}
			\begin{aligned}
				&a^*_{w_1} \cdots a^*_{w_k}  a_{u_k}\cdots a_{u_1} a^*_x b^*_{\ty} b_{\ty} a_x\\
				& =  (-1)^{4k} a^*_x b^*_{\ty} b_{\ty} a_x a^*_{w_1} \cdots a^*_{w_k} a_{u_k}\cdots a_{u_1}\\ 
				&\qquad - a^*_{w_1} \cdots a^*_{w_k} \left( \sum_{j=1}^k (-1)^j \delta_{x=u_j}  a_{u_k}\cdots \widehat{a_{u_j}}\cdots a_{u_1}  \right)  b^*_{\ty} b_{\ty} a_x\\
				&\qquad + a^*_x b^*_{\ty} b_{\ty} \left( \sum_{j=1}^k (-1)^j \delta_{x=w_j} a^*_{w_1} \cdots \widehat{a^*_{w_j}} \cdots a^*_{w_k} \right)   a_{u_k}\cdots a_{u_1},
			\end{aligned}
		\end{equation}
		where the \textit{hat}, $\widehat{\cdot}$, indicates the exclusion of that element. So we get
		\begin{equation}
			\begin{aligned}
				&\iint \dx\dty\; V_{12}(x-\ty)
				(a^*_{w_1} \cdots a^*_{w_k}  a_{u_k}\cdots a_{u_1} a^*_x b^*_{\ty} b_{\ty} a_x \\
				& \hspace{8em} - a^*_x b^*_{\ty} b_{\ty} a_x  a^*_{w_1} \cdots a^*_{w_k} a_{u_k}\cdots a_{u_1})\\
				& = - \sum_{j=1}^k (-1)^j \int \dty\, \bigg[ V_{12}(u_j - \ty) a^*_{w_1} \cdots a^*_{w_k}   a_{u_k}\cdots \widehat{a_{u_j}}\cdots a_{u_1} b^*_{\ty} b_{\ty}  a_{u_j}\\
				&\hspace{8em} - V_{12}(w_j-\ty) a^*_{w_j} b^*_{\ty} b_{\ty} a^*_{w_1} \cdots \widehat{a^*_{w_j}} \cdots a^*_{w_k}   a_{u_k}\cdots a_{u_1} \bigg]\\
				& = \sum_{j=1}^k \int \dty\, (V_{12}(u_j - \ty) - V_{12}(w_j-\ty))   a^*_{w_1} \cdots a^*_{w_k} b^*_{\ty} b_{\ty} a_{u_k}\cdots  a_{u_1}.
			\end{aligned}
		\end{equation}
		In the last equality, we have employed the commutation relation \eqref{eq:CR}. By following the steps from equation \eqref{eq:MeanValueTheorem_V} to equation \eqref{eq:Fock_Potential_V_12_4}, we arrive at
		\begin{equation}\label{V_12k}
			\begin{aligned}
				\mathscr{V}_{12,k} =
				&\frac{2\mathrm{i}\hbar}{(2\pi)^{3}}\int(\dw\du)^{\otimes k}\int\dd \ty\left[\int_{0}^{1}\dd s\,\nabla V_{12}(su_{j}+(1-s)w_{j}-\ty)\right] \left(f_{q,p}^{\hbar}(w)\overline{f_{q,p}^{\hbar}(u)}\right)^{\otimes k}\\
				&\iint\dd \widetilde{q}_1 \dd \widetilde{p}_1\ f_{\widetilde{q}_1,\widetilde{p}_1}^{\hbar}(\ty)\int\dd \tv\ \overline{f_{\widetilde{q}_1,\widetilde{p}_1}^{\hbar}(\tv)}\left<b_{\ty}a_{w_{k}}\cdots a_{w_{1}}\Psi_{N_{1},N_{2},t},b_{\tv}a_{u_{k}}\cdots a_{u_{1}}\Psi_{N_{1},N_{2},t}\right>.
			\end{aligned}
		\end{equation}
		By substituting equations \eqref{K_1k}, \eqref{V_11k}, and \eqref{V_12k} into \eqref{eq:derive_husimi_k0}, we complete the proof.
	\end{proof}
	
	\begin{proof}[Proof of Proposition \ref{prop:vla_bbgky_hierarchy}] For convenience, we rewrite \eqref{husimi_def_3} as
		\begin{equation}\label{husimi_def_4}
			\begin{aligned}
				&m^{(k,\ell)}_{N_1,N_2,t} (q_1, p_1, \dots, q_k, p_k,\tq_1, \tp_1, \dots, \tq_\ell, \tp_\ell)\\
				&=
				\int (\dw\du)^{\otimes k}(\dtw\dtu)^{\otimes \ell} 
				\left( f^\hbar_{q,p}(w) \overline{f^\hbar_{q,p}(u)} \right)^{\otimes k}
				\left( f^\hbar_{\tq,\tp}(\tw)
				\overline{f^\hbar_{\tq,\tp}(\tu)} \right)^{\otimes \ell}\\
				&\qquad\qquad\qquad\times\langle\Psi_{N_1, N_2, t},a^*_{w_1} \cdots a^*_{w_k} b^*_{\tw_1} \cdots b^*_{\tw_\ell} b_{\tu_\ell}\cdots b_{\tu_1} a_{u_k}\cdots a_{u_1}\Psi_{N_1, N_2, t}\rangle.
			\end{aligned}
		\end{equation}
		Then by taking the time derivative on the Husimi measure, we have
		\begin{equation}\label{derive_husimi_kl}
			\begin{aligned}
				&2 \mathrm{i} \hbar \p_t m^{(k,\ell)}_{N_1,N_2,t} (q_1, p_1, \dots, q_k, p_k,\tq_1, \tp_1, \dots, \tq_\ell, \tp_\ell)
				=\mathscr{K}_{1,k,\ell} + \mathscr{V}_{11,k,\ell}  + \mathscr{K}_{2,k,\ell} + \mathscr{V}_{22,k,\ell} + \mathscr{V}_{12,k,\ell}.
			\end{aligned}
		\end{equation}
		where  
		\begin{align*}
			\mathscr{K}_{1,k,\ell}&:=\bigg(- \hbar^2 \int (\dw\du)^{\otimes k}(\dtw\dtu)^{\otimes \ell} 
			{\dx}
			\left( f^\hbar_{q,p}(w) \overline{f^\hbar_{q,p}(u)} \right)^{\otimes k}
			\left( f^\hbar_{\tq,\tp}(\tw)
			\overline{f^\hbar_{\tq,\tp}(\tu)} \right)^{\otimes \ell}\\
			&\qquad\qquad\qquad\times \Delta_x \langle\Psi_{N_1, N_2, t},a^*_{w_1} \cdots a^*_{w_k} b^*_{\tw_1} \cdots b^*_{\tw_\ell} b_{\tu_\ell}\cdots b_{\tu_1} a_{u_k}\cdots a_{u_1} a^*_x a_x \Psi_{N_1, N_2, t}\rangle\\
			&\qquad + \hbar^2 \int (\dw\du)^{\otimes k}(\dtw\dtu)^{\otimes \ell} 
			\left( f^\hbar_{q,p}(w) \overline{f^\hbar_{q,p}(u)} \right)^{\otimes k}
			\left( f^\hbar_{\tq,\tp}(\tw)
			\overline{f^\hbar_{\tq,\tp}(\tu)} \right)^{\otimes \ell}\\
			&\qquad\qquad\qquad\times \Delta_x \langle\Psi_{N_1, N_2, t},a^*_x a_x a^*_{w_1} \cdots a^*_{w_k} b^*_{\tw_1} \cdots b^*_{\tw_\ell} b_{\tu_\ell}\cdots b_{\tu_1} a_{u_k}\cdots a_{u_1} \Psi_{N_1, N_2, t}\rangle \bigg)\\
			\mathscr{V}_{11,k,\ell}&:= \bigg(\frac{1}{N} \int (\dw\du)^{\otimes k}(\dtw\dtu)^{\otimes \ell}  \iint \dx\dy\; V_{11}(x-y)
			\left( f^\hbar_{q,p}(w) \overline{f^\hbar_{q,p}(u)} \right)^{\otimes k}
			\left( f^\hbar_{\tq,\tp}(\tw)
			\overline{f^\hbar_{\tq,\tp}(\tu)} \right)^{\otimes \ell}\\
			&\qquad\qquad\qquad\times \langle\Psi_{N_1, N_2, t},a^*_{w_1} \cdots a^*_{w_k} b^*_{\tw_1} \cdots b^*_{\tw_\ell} b_{\tu_\ell}\cdots b_{\tu_1} a_{u_k}\cdots a_{u_1} a^*_x a^*_y a_y a_x \Psi_{N_1, N_2, t}\rangle\\
			&\qquad - \frac{1}{N} \int (\dw\du)^{\otimes k}(\dtw\dtu)^{\otimes \ell}  \iint \dx\dy\; V_{11}(x-y)
			\left( f^\hbar_{q,p}(w) \overline{f^\hbar_{q,p}(u)} \right)^{\otimes k}
			\left( f^\hbar_{\tq,\tp}(\tw)
			\overline{f^\hbar_{\tq,\tp}(\tu)} \right)^{\otimes \ell}\\
			&\qquad\qquad\qquad\times \langle\Psi_{N_1, N_2, t},a^*_x a^*_y a_y a_x a^*_{w_1} \cdots a^*_{w_k} b^*_{\tw_1} \cdots b^*_{\tw_\ell} b_{\tu_\ell}\cdots b_{\tu_1} a_{u_k}\cdots a_{u_1} \Psi_{N_1, N_2, t}\rangle \bigg)\\
			\mathscr{K}_{2,k,\ell}&:= \bigg(- \hbar^2 \int (\dw\du)^{\otimes k}(\dtw\dtu)^{\otimes \ell} {\mathrm{d}\tilde{x}}
			\left( f^\hbar_{q,p}(w) \overline{f^\hbar_{q,p}(u)} \right)^{\otimes k}
			\left( f^\hbar_{\tq,\tp}(\tw)
			\overline{f^\hbar_{\tq,\tp}(\tu)} \right)^{\otimes \ell}\\
			&\qquad\qquad\qquad\times \Delta_{\tx} \langle\Psi_{N_1, N_2, t},a^*_{w_1} \cdots a^*_{w_k} b^*_{\tw_1} \cdots b^*_{\tw_\ell} b_{\tu_\ell}\cdots b_{\tu_1} a_{u_k}\cdots a_{u_1} b^*_{\tx} b_{\tx} \Psi_{N_1, N_2, t}\rangle\\
			&\qquad + \hbar^2 \int (\dw\du)^{\otimes k}(\dtw\dtu)^{\otimes \ell} 
			\left( f^\hbar_{q,p}(w) \overline{f^\hbar_{q,p}(u)} \right)^{\otimes k}
			\left( f^\hbar_{\tq,\tp}(\tw)
			\overline{f^\hbar_{\tq,\tp}(\tu)} \right)^{\otimes \ell}\\
			&\qquad\qquad\qquad\times \Delta_{\tx} \langle\Psi_{N_1, N_2, t}, b^*_{\tx} b_{\tx} a^*_{w_1} \cdots a^*_{w_k} b^*_{\tw_1} \cdots b^*_{\tw_\ell} b_{\tu_\ell}\cdots b_{\tu_1} a_{u_k}\cdots a_{u_1} \Psi_{N_1, N_2, t}\rangle \bigg)\\
			\mathscr{V}_{22,k,\ell} &:= \bigg(\frac{1}{N} \int (\dw\du)^{\otimes k}(\dtw\dtu)^{\otimes \ell}  \iint \dtx\dty\, V_{22}(\tx-\ty)
			\left( f^\hbar_{q,p}(w) \overline{f^\hbar_{q,p}(u)} \right)^{\otimes k}
			\left( f^\hbar_{\tq,\tp}(\tw)
			\overline{f^\hbar_{\tq,\tp}(\tu)} \right)^{\otimes \ell}\\
			&\qquad\qquad\qquad\times \langle\Psi_{N_1, N_2, t},a^*_{w_1} \cdots a^*_{w_k} b^*_{\tw_1} \cdots b^*_{\tw_\ell} b_{\tu_\ell}\cdots b_{\tu_1} a_{u_k}\cdots a_{u_1} b^*_{\tx} b^*_{\ty} b_{\ty} b_{\tx} \Psi_{N_1, N_2, t}\rangle\\
			&\qquad - \frac{1}{N} \int (\dw\du)^{\otimes k}(\dtw\dtu)^{\otimes \ell}  \iint  \dtx\dty\, V_{22}(\tx-\ty)
			\left( f^\hbar_{q,p}(w) \overline{f^\hbar_{q,p}(u)} \right)^{\otimes k}
			\left( f^\hbar_{\tq,\tp}(\tw)
			\overline{f^\hbar_{\tq,\tp}(\tu)} \right)^{\otimes \ell}\\
			&\qquad\qquad\qquad\times \langle\Psi_{N_1, N_2, t},b^*_{\tx} b^*_{\ty} b_{\ty} b_{\tx} a^*_{w_1} \cdots a^*_{w_k} b^*_{\tw_1} \cdots b^*_{\tw_\ell} b_{\tu_\ell}\cdots b_{\tu_1} a_{u_k}\cdots a_{u_1} \Psi_{N_1, N_2, t}\rangle \bigg)\\
			\mathscr{V}_{12,k,\ell}&:=\bigg(\frac{2}{N} \int (\dw\du)^{\otimes k}(\dtw\dtu)^{\otimes \ell}  \iint \dx\dty\; V_{12}(x-\ty)
			\left( f^\hbar_{q,p}(w) \overline{f^\hbar_{q,p}(u)} \right)^{\otimes k}
			\left( f^\hbar_{\tq,\tp}(\tw)
			\overline{f^\hbar_{\tq,\tp}(\tu)} \right)^{\otimes \ell}\\
			&\qquad\qquad\qquad\times \langle\Psi_{N_1, N_2, t},a^*_{w_1} \cdots a^*_{w_k} b^*_{\tw_1} \cdots b^*_{\tw_\ell} b_{\tu_\ell}\cdots b_{\tu_1} a_{u_k}\cdots a_{u_1} a^*_x b^*_{\ty} b_{\ty} a_x \Psi_{N_1, N_2, t}\rangle\\
			&\qquad - \frac{2}{N} \int (\dw\du)^{\otimes k}(\dtw\dtu)^{\otimes \ell}  \iint  \dx\dty\; V_{12}(x-\ty)
			\left( f^\hbar_{q,p}(w) \overline{f^\hbar_{q,p}(u)} \right)^{\otimes k}
			\left( f^\hbar_{\tq,\tp}(\tw)
			\overline{f^\hbar_{\tq,\tp}(\tu)} \right)^{\otimes \ell}\\
			&\qquad\qquad\qquad\times \langle\Psi_{N_1, N_2, t},a^*_x b^*_{\ty} b_{\ty} a_x  a^*_{w_1} \cdots a^*_{w_k} b^*_{\tw_1} \cdots b^*_{\tw_\ell} b_{\tu_\ell}\cdots b_{\tu_1} a_{u_k}\cdots a_{u_1} \Psi_{N_1, N_2, t}\rangle \bigg).
		\end{align*}
		Since the Hamiltonian $h_1$ acts only on the first species, While the terms $\mathscr{K}_{1,k,\ell}$ and $\mathscr{V}_{11,k,\ell}$ contain $b$ and $b^*$, we can still apply the analysis for a single species to them. By utilizing the commutation relation \eqref{eq:CR} and treating the expression $b_{\tu_\ell}\cdots b_{\tu_1} \Psi_{N_1, N_2, t}$ as a unified entity, we obtain the following result
		\begin{equation}
			\begin{aligned}
				\mathscr{K}_{1,k,\ell} = & \int (\dtw\dtu)^{\otimes \ell}{\dx} \left( f^\hbar_{\tq,\tp}(\tw)
				\overline{f^\hbar_{\tq,\tp}(\tu)} \right)^{\otimes \ell}\\
				&\bigg[- \hbar^2 \int (\dw\du)^{\otimes k}
				\left( f^\hbar_{q,p}(w) \overline{f^\hbar_{q,p}(u)} \right)^{\otimes k}
				\\
				&\qquad\qquad\qquad\times \Delta_x \langle (b_{\tu_\ell}\cdots b_{\tu_1}\Psi_{N_1, N_2, t}),a^*_{w_1} \cdots a^*_{w_k} a_{u_k}\cdots a_{u_1} a^*_x a_x (b_{\tu_\ell}\cdots b_{\tu_1}\Psi_{N_1, N_2, t})\rangle\\
				&\qquad + \hbar^2 \int (\dw\du)^{\otimes k}
				\left( f^\hbar_{q,p}(w) \overline{f^\hbar_{q,p}(u)} \right)^{\otimes k}
				\\
				&\qquad\qquad\qquad\times \Delta_x \langle (b_{\tu_\ell}\cdots b_{\tu_1}\Psi_{N_1, N_2, t}),a^*_x a_x a^*_{w_1} \cdots a^*_{w_k}  a_{u_k}\cdots a_{u_1} (b_{\tu_\ell}\cdots b_{\tu_1}\Psi_{N_1, N_2, t})\rangle\bigg].\\
			\end{aligned}
		\end{equation}
		The part within the square brackets exhibits the same structure as $\mathscr{K}_{1,k}$. Applying equation \eqref{K_1k} to it yields
		\begin{equation}\label{K_1kl}
			\begin{aligned}
				\mathscr{K}_{1,k,\ell} = & - 2 \mathrm{i} \hbar  \mathbf{p}_k \cdot \nabla_{\mathbf{q}_k}m_{N_1,N_2,t}^{(k,\ell)}(q_1, p_1, \dots, q_k, p_k,\tq_1, \tp_1, \dots, \tq_\ell, \tp_\ell)\\
				& + 2 \mathrm{i} \hbar^2 \Im\left<\Delta_{\mathbf{q}_{k}}\big(b(f_{\widetilde{q}_{\ell},\widetilde{p}_{\ell}}^{\hbar})\cdots b(f_{\widetilde{q}_{1},\widetilde{p}_{1}}^{\hbar})a(f_{q_{k},p_{k}}^{\hbar})\cdots a(f_{q_{1},p_{1}}^{\hbar})\big)\Psi_{N_1, N_2, t},\right.\\
				&\hspace{12em}\left. b(f_{\widetilde{q}_{\ell},\widetilde{p}_{\ell}}^{\hbar})\cdots b(f_{\widetilde{q}_{1},\widetilde{p}_{1}}^{\hbar})a(f_{q_{k},p_{k}}^{\hbar})\cdots a(f_{q_{1},p_{1}}^{\hbar})\Psi_{N_1, N_2, t}\right>.
			\end{aligned}
		\end{equation}
		Similarly, we have
		\begin{equation}\label{V_11kl}
			\begin{aligned}
				\mathscr{V}_{11,k,\ell}  = & \frac{2 \mathrm{i} \hbar}{(2\pi\hbar)^{3}N} \sum_{j=1}^k \int(\dw\du)^{\otimes k}(\dtw\dtu)^{\otimes\ell}\int\dy\left[\int_{0}^{1}\dd s\,\nabla V_{11}(su_{j}+(1-s)w_{j}-y)\right]\\
				&\cdot \nabla_{p_j} \left(f_{q,p}^{\hbar}(w)\overline{f_{q,p}^{\hbar}(u)}\right)^{\otimes k}\left(f_{\tq,\tp}^{\hbar}(\tw)\overline{f_{\tq,\tp}^{\hbar}(\tu)}\right)^{\otimes\ell}\iint\dd q_{k+1}\dd p_{k+1}\ f_{q_{k+1},p_{k+1}}^{\hbar}(y)\int\dd v\ \overline{f_{q_{k+1},p_{k+1}}^{\hbar}(v)}\\
				& \left<b_{\tw_{\ell}}\dots b_{\tw_{1}}a_{w_{k}}\cdots a_{w_{1}}a_{y}\Psi_{N_1, N_2, t}, b_{\tu_{\ell}}\dots b_{\tu_{1}}a_{u_{k}}\cdots a_{u_{1}}a_{v}\Psi_{N_1, N_2, t}\right>\\
				& - \frac{1}{N}\int(\dw\du)^{\otimes k}(\dtw\dtu)^{\otimes\ell}\sum_{j\neq i}^{k}\bigg[V_{11}(u_{j}-u_{i})-V_{11}(w_{j}-w_{i})\bigg]\\
				& \left(f_{q,p}^{\hbar}(w)\overline{f_{q,p}^{\hbar}(u)}\right)^{\otimes k}\left(f_{\tq,\tp}^{\hbar}(\tw)\overline{f_{\tq,\tp}^{\hbar}(\tu)}\right)^{\otimes\ell} \left<b_{\tw_{\ell}}\dots b_{\tw_{1}}a_{w_{k}}\cdots a_{w_{1}}\Psi_{N_1, N_2, t},b_{\tu_{\ell}}\dots b_{\tu_{1}}a_{u_{k}}\cdots a_{u_{1}}\Psi_{N_1, N_2, t}\right>.
			\end{aligned}
		\end{equation}
		
		For the same reasons as discussed at the beginning of this section, we can deduce the following results regarding the second species by interchanging the roles of the two species:
		\begin{equation}\label{K_2kl}
			\begin{aligned}
				\mathscr{K}_{2,k,\ell} = & - 2 \mathrm{i} \hbar  \mathbf{\tilde p}_\ell \cdot \nabla_{\mathbf{\tilde q}_\ell}m_{N_1,N_2,t}^{(k,\ell)}(q_1, p_1, \dots, q_k, p_k,\tq_1, \tp_1, \dots, \tq_\ell, \tp_\ell)\\
				& + 2 \mathrm{i} \hbar^2 \Im\left<\Delta_{\mathbf{\tilde q}_\ell}\big(b(f_{\widetilde{q}_{\ell},\widetilde{p}_{\ell}}^{\hbar})\cdots b(f_{\widetilde{q}_{1},\widetilde{p}_{1}}^{\hbar})a(f_{q_{k},p_{k}}^{\hbar})\cdots a(f_{q_{1},p_{1}}^{\hbar})\big)\Psi_{N_1, N_2, t},\right.\\
				&\hspace{12em}\left. b(f_{\widetilde{q}_{\ell},\widetilde{p}_{\ell}}^{\hbar})\cdots b(f_{\widetilde{q}_{1},\widetilde{p}_{1}}^{\hbar})a(f_{q_{k},p_{k}}^{\hbar})\cdots a(f_{q_{1},p_{1}}^{\hbar})\Psi_{N_1, N_2, t}\right>,
			\end{aligned}
		\end{equation}
		and
		\begin{equation}\label{V_22kl}
			\begin{aligned}
				\mathscr{V}_{22,k,\ell} = & \frac{2 \mathrm{i} \hbar}{(2\pi\hbar)^{3}N} \sum_{j=1}^\ell \int(\dw\du)^{\otimes k}(\dtw\dtu)^{\otimes\ell}\int\dty\,\left[\int_{0}^{1}\dd s\,\nabla V_{22}(s\tu_{j}+(1-s)\tw_{j}-\ty)\right]\\
				& \cdot  \left(f_{q,p}^{\hbar}(w)\overline{f_{q,p}^{\hbar}(u)}\right)^{\otimes k} \nabla_{\tp_j}  \left(f_{\tq,\tp}^{\hbar}(\tw)\overline{f_{\tq,\tp}^{\hbar}(\tu)}\right)^{\otimes\ell}\iint\dd \tq_{\ell+1}\dd \tp_{\ell+1}\ f_{\tq_{\ell+1},\tp_{\ell+1}}^{\hbar}(\ty)\int\dd \tv\ \overline{f_{\tq_{\ell+1},\tp_{\ell+1}}^{\hbar}(\tv)}\\
				& \left<b_{\tw_{\ell}}\dots b_{\tw_{1}}b_{\ty}a_{w_{k}}\cdots a_{w_{1}}\Psi_{N_1, N_2, t}, b_{\tu_{\ell}}\dots b_{\tu_{1}}b_{\tv}a_{u_{k}}\cdots a_{u_{1}}\Psi_{N_1, N_2, t}\right>\\
				& - \frac{1}{N}\int(\dw\du)^{\otimes k}(\dtw\dtu)^{\otimes\ell}\sum_{j\neq i}^{\ell}\bigg[V_{22}(\tu_{j}-\tu_{i})-V_{22}(\tw_{j}-\tw_{i})\bigg]\\
				& \left(f_{q,p}^{\hbar}(w)\overline{f_{q,p}^{\hbar}(u)}\right)^{\otimes k}\left(f_{\tq,\tp}^{\hbar}(\tw)\overline{f_{\tq,\tp}^{\hbar}(\tu)}\right)^{\otimes\ell} \left<b_{\tw_{\ell}}\dots b_{\tw_{1}}a_{w_{k}}\cdots a_{w_{1}}\Psi_{N_1, N_2, t},b_{\tu_{\ell}}\dots b_{\tu_{1}}a_{u_{k}}\cdots a_{u_{1}}\Psi_{N_1, N_2, t}\right>.\\
			\end{aligned}
		\end{equation}
		Then we consider part $\mathscr{V}_{12,k,\ell}$ of \eqref{derive_husimi_kl},
		\begin{equation}\label{part_E_2}
			\begin{aligned}
				\mathscr{V}_{12,k,\ell} = & \frac{2}{N} \int (\dw\du)^{\otimes k}(\dtw\dtu)^{\otimes \ell}  \iint \dx\dty\; V_{12}(x-\ty)
				\left( f^\hbar_{q,p}(w) \overline{f^\hbar_{q,p}(u)} \right)^{\otimes k}
				\left( f^\hbar_{\tq,\tp}(\tw)
				\overline{f^\hbar_{\tq,\tp}(\tu)} \right)^{\otimes \ell}\\
				&\qquad\qquad\qquad\times \langle\Psi_{N_1, N_2, t},a^*_{w_1} \cdots a^*_{w_k} b^*_{\tw_1} \cdots b^*_{\tw_\ell} b_{\tu_\ell}\cdots b_{\tu_1} a_{u_k}\cdots a_{u_1} a^*_x b^*_{\ty} b_{\ty} a_x \Psi_{N_1, N_2, t}\rangle\\
				&\qquad - \frac{2}{N} \int (\dw\du)^{\otimes k}(\dtw\dtu)^{\otimes \ell}  \iint  \dx\dty\; V_{12}(x-\ty)
				\left( f^\hbar_{q,p}(w) \overline{f^\hbar_{q,p}(u)} \right)^{\otimes k}
				\left( f^\hbar_{\tq,\tp}(\tw)
				\overline{f^\hbar_{\tq,\tp}(\tu)} \right)^{\otimes \ell}\\
				&\qquad\qquad\qquad\times \langle\Psi_{N_1, N_2, t},a^*_x b^*_{\ty} b_{\ty} a_x  a^*_{w_1} \cdots a^*_{w_k} b^*_{\tw_1} \cdots b^*_{\tw_\ell} b_{\tu_\ell}\cdots b_{\tu_1} a_{u_k}\cdots a_{u_1} \Psi_{N_1, N_2, t}\rangle.
			\end{aligned}
		\end{equation}
		
		Observe that we have
		\begin{align}
			&a^*_{w_1} \cdots a^*_{w_k} b^*_{\tw_1} \cdots b^*_{\tw_\ell} b_{\tu_\ell}\cdots b_{\tu_1} a_{u_k}\cdots a_{u_1} a^*_x b^*_{\ty} b_{\ty} a_x\nonumber\\
			& =  (-1)^{4(k+\ell)} a^*_x b^*_{\ty} b_{\ty} a_x a^*_{w_1} \cdots a^*_{w_k}  b^*_{\tw_1} \cdots b^*_{\tw_\ell} b_{\tu_\ell}\cdots b_{\tu_1} a_{u_k}\cdots a_{u_1}\nonumber\\ 
			&\qquad - a^*_{w_1} \cdots a^*_{w_k} b^*_{\tw_1} \cdots b^*_{\tw_\ell} b_{\tu_\ell}\cdots b_{\tu_1} \left( \sum_{j=1}^k (-1)^j \delta_{x=u_j}  a_{u_k}\cdots \widehat{a_{u_j}}\cdots a_{u_1}  \right)  b^*_{\ty} b_{\ty} a_x\nonumber\\
			&\qquad -  a^*_x  a^*_{w_1} \cdots a^*_{w_k}  b^*_{\tw_1} \cdots b^*_{\tw_\ell}\left( \sum_{j=1}^\ell (-1)^j \delta_{\ty=\tu_j}  b_{\tu_\ell}\cdots \widehat{b_{\tu_j}}\cdots b_{\tu_1}  \right) a_{u_k}\cdots a_{u_1}  b_{\ty} a_x\nonumber\\
			&\qquad +  a^*_x b^*_{\ty} a^*_{w_1} \cdots a^*_{w_k}   \left( \sum_{j=1}^\ell (-1)^j \delta_{\ty=\tw_j} b^*_{\tw_1} \cdots \widehat{b^*_{\tw_j}} \cdots b^*_{\tw_\ell} \right) b_{\tu_\ell}\cdots b_{\tu_1} a_{u_k}\cdots a_{u_1} a_x \nonumber\\
			&\qquad + a^*_x b^*_{\ty} b_{\ty} \left( \sum_{j=1}^k (-1)^j \delta_{x=w_j} a^*_{w_1} \cdots \widehat{a^*_{w_j}} \cdots a^*_{w_k} \right)  b^*_{\tw_1} \cdots b^*_{\tw_\ell} b_{\tu_\ell}\cdots b_{\tu_1} a_{u_k}\cdots a_{u_1}.\label{anil_magic_KL_new}
		\end{align}
		
		From \eqref{part_E_2}, we have that
		\begin{align}
			&\iint \dx\dty\; V_{12}(x-\ty)
			(a^*_{w_1} \cdots a^*_{w_k} b^*_{\tw_1} \cdots b^*_{\tw_\ell} b_{\tu_\ell}\cdots b_{\tu_1} a_{u_k}\cdots a_{u_1} a^*_x b^*_{\ty} b_{\ty} a_x \nonumber\\
			& \hspace{8em} - a^*_x b^*_{\ty} b_{\ty} a_x  a^*_{w_1} \cdots a^*_{w_k} b^*_{\tw_1} \cdots b^*_{\tw_\ell} b_{\tu_\ell}\cdots b_{\tu_1} a_{u_k}\cdots a_{u_1})\nonumber\\
			& = - \sum_{j=1}^k (-1)^j \int \dty\, \bigg[ V_{12}(u_j - \ty) a^*_{w_1} \cdots a^*_{w_k} b^*_{\tw_1} \cdots b^*_{\tw_\ell} b_{\tu_\ell}\cdots b_{\tu_1}  a_{u_k}\cdots \widehat{a_{u_j}}\cdots a_{u_1} b^*_{\ty} b_{\ty}  a_{u_j}\nonumber\\
			&\hspace{8em} - V_{12}(w_j-\ty) a^*_{w_j} b^*_{\ty} b_{\ty} a^*_{w_1} \cdots \widehat{a^*_{w_j}} \cdots a^*_{w_k}  b^*_{\tw_1} \cdots b^*_{\tw_\ell} b_{\tu_\ell}\cdots b_{\tu_1} a_{u_k}\cdots a_{u_1} \bigg]\nonumber\\
			&\quad - \sum_{j=1}^\ell (-1)^j \int \dx \bigg[ V_{12}(x - \tu_j) a^*_x  a^*_{w_1} \cdots a^*_{w_k}  b^*_{\tw_1} \cdots b^*_{\tw_\ell} b_{\tu_\ell}\cdots \widehat{b_{\tu_j}}\cdots b_{\tu_1}   a_{u_k}\cdots a_{u_1} b_{\tu_j} a_x\nonumber\\
			&\hspace{8em} - V_{12}(x-\tw_j) a^*_x b^*_{\tw_j} a^*_{w_1} \cdots a^*_{w_k}    b^*_{\tw_1} \cdots \widehat{b^*_{\tw_j}} \cdots b^*_{\tw_\ell} b_{\tu_\ell}\cdots b_{\tu_1} a_{u_k}\cdots a_{u_1} a_x \bigg]\nonumber\\
			& = \sum_{j=1}^k \int \dty\, \bigg[ V_{12}(u_j - \ty) a^*_{w_1} \cdots a^*_{w_k} b^*_{\tw_1} \cdots b^*_{\tw_\ell} b_{\tu_\ell}\cdots b_{\tu_1}  a_{u_k}\cdots  a_{u_1} b^*_{\ty} b_{\ty}\nonumber\\
			&\hspace{8em} - V_{12}(w_j-\ty)  b^*_{\ty} b_{\ty} a^*_{w_1} \cdots a^*_{w_k}  b^*_{\tw_1} \cdots b^*_{\tw_\ell} b_{\tu_\ell}\cdots b_{\tu_1} a_{u_k}\cdots a_{u_1} \bigg]\nonumber\\
			&\quad + \sum_{j=1}^\ell \int \dx \bigg[ V_{12}(x - \tu_j) a^*_x  a^*_{w_1} \cdots a^*_{w_k}  b^*_{\tw_1} \cdots b^*_{\tw_\ell} b_{\tu_\ell}\cdots  b_{\tu_1}   a_{u_k}\cdots a_{u_1} a_x\nonumber\\
			&\hspace{8em} - V_{12}(x-\tw_j) a^*_x a^*_{w_1} \cdots a^*_{w_k}   b^*_{\tw_1} \cdots  b^*_{\tw_\ell} b_{\tu_\ell}\cdots b_{\tu_1} a_{u_k}\cdots a_{u_1}  a_x\bigg]\nonumber\\
			&=:\mathscr{J}_1 + \mathscr{J}_2.\label{part_E_2_decomp_new}
		\end{align}
		
		We use CAR to obtain
		\begin{align*}
			\mathscr{J}_1 =
			&\sum_{j=1}^k \int \dty\, ( V_{12}(u_j - \ty) -  V_{12}(w_j-\ty))  a^*_{w_1} \cdots a^*_{w_k}  b^*_{\ty} b^*_{\tw_1} \cdots b^*_{\tw_\ell} b_{\tu_\ell}\cdots b_{\tu_1}  b_{\ty} a_{u_k}\cdots  a_{u_1}\\
			& + \sum_{j=1}^k \sum_{i=1}^\ell (-1)^i \int \dty\, V_{12}(u_j - \ty) \delta_{\ty=\tu_i} a^*_{w_1} \cdots a^*_{w_k}  b^*_{\tw_1} \cdots b^*_{\tw_\ell} b_{\tu_\ell}\cdots \widehat{b_{\tu_i}}\cdots b_{\tu_1}  b_{\ty} a_{u_k}\cdots  a_{u_1}\\
			& - \sum_{j=1}^k \sum_{i=1}^\ell (-1)^i \int \dty\, V_{12}(w_j - \ty) \delta_{\ty=\tw_i} a^*_{w_1} \cdots a^*_{w_k}  b^*_{\ty} b^*_{\tw_1} \cdots \widehat{b^*_{\tw_i}} \cdots b^*_{\tw_\ell} b_{\tu_\ell}\cdots b_{\tu_1} a_{u_k}\cdots  a_{u_1}\\
			= &\sum_{j=1}^k \int \dty\, ( V_{12}(u_j - \ty) -  V_{12}(w_j-\ty))  a^*_{w_1} \cdots a^*_{w_k}  b^*_{\ty} b^*_{\tw_1} \cdots b^*_{\tw_\ell} b_{\tu_\ell}\cdots b_{\tu_1}  b_{\ty} a_{u_k}\cdots  a_{u_1}\\
			& - \sum_{j=1}^k\sum_{i=1}^\ell  \int \dty\, (V_{12}(u_j - \tu_i) - V_{12}(w_j - \tw_i)) a^*_{w_1} \cdots a^*_{w_k}  b^*_{\tw_1} \cdots b^*_{\tw_\ell} b_{\tu_\ell}\cdots  b_{\tu_1}  a_{u_k}\cdots  a_{u_1}.
		\end{align*}
		For another part, we have
		\begin{equation*}
			\mathscr{J}_2 
			=  \sum_{j=1}^\ell \int \dx ( V_{12}(x - \tu_j) -  V_{12}(x-\tw_j)) a^*_x  a^*_{w_1} \cdots a^*_{w_k}  b^*_{\tw_1} \cdots b^*_{\tw_\ell} b_{\tu_\ell}\cdots  b_{\tu_1}   a_{u_k}\cdots a_{u_1} a_x.
		\end{equation*}
		Thus we have
		\begin{align*}
			\mathscr{V}_{12,k,\ell} 
			= & \frac{2}{N} \int (\dw\du)^{\otimes k}(\dtw\dtu)^{\otimes \ell}  \int \dty\, \sum_{j=1}^k ( V_{12}(u_j - \ty) -  V_{12}(w_j-\ty))
			\left( f^\hbar_{q,p}(w) \overline{f^\hbar_{q,p}(u)} \right)^{\otimes k}\\
			&\qquad \left( f^\hbar_{\tq,\tp}(\tw) \overline{f^\hbar_{\tq,\tp}(\tu)} \right)^{\otimes \ell}
			\langle b_{\tw_\ell}\cdots b_{\tw_1}  b_{\ty} a_{w_k}\cdots  a_{w_1}\Psi_{N_1, N_2, t}, b_{\tu_\ell}\cdots b_{\tu_1}  b_{\ty} a_{u_k}\cdots  a_{u_1} \Psi_{N_1, N_2, t}\rangle\\
			& + \frac{2}{N} \int (\dw\du)^{\otimes k}(\dtw\dtu)^{\otimes \ell}  \int \dx \sum_{j=1}^\ell ( V_{12}(x - \tu_j) -  V_{12}(x-\tw_j))
			\left( f^\hbar_{q,p}(w) \overline{f^\hbar_{q,p}(u)} \right)^{\otimes k}\\
			&\qquad \left( f^\hbar_{\tq,\tp}(\tw) \overline{f^\hbar_{\tq,\tp}(\tu)} \right)^{\otimes \ell}
			\langle b_{\tw_\ell}\cdots  b_{\tw_1}   a_{w_k}\cdots a_{w_1} a_x \Psi_{N_1, N_2, t}, b_{\tu_\ell}\cdots  b_{\tu_1}   a_{u_k}\cdots a_{u_1} a_x \Psi_{N_1, N_2, t}\rangle\\
			& - \frac{2}{N} \int (\dw\du)^{\otimes k}(\dtw\dtu)^{\otimes \ell}  \int \dty\, \sum_{j=1}^k \sum_{i=1}^\ell  ( V_{12}(u_j - \tu_i) - V_{12}(w_j - \tw_i))
			\left( f^\hbar_{q,p}(w) \overline{f^\hbar_{q,p}(u)} \right)^{\otimes k}\\
			&\qquad \left( f^\hbar_{\tq,\tp}(\tw) \overline{f^\hbar_{\tq,\tp}(\tu)} \right)^{\otimes \ell}
			\langle b_{\tw_\ell}\cdots b_{\tw_1} a_{w_k}\cdots a_{w_1}\Psi_{N_1, N_2, t}, b_{\tu_\ell}\cdots b_{\tu_1} a_{u_k}\cdots  a_{u_1} \Psi_{N_1, N_2, t}\rangle\\
			= & \frac{2 i \hbar}{N} \sum_{j=1}^k  \int (\dw\du)^{\otimes k}(\dtw\dtu)^{\otimes \ell}  \int \dty\,  \int_{0}^{1}\dd s\,\nabla V_{12}(su_{j}+(1-s)w_{j}-\ty) \cdot \nabla_{p_j}
			\left( f^\hbar_{q,p}(w) \overline{f^\hbar_{q,p}(u)} \right)^{\otimes k}\\
			&\qquad \left( f^\hbar_{\tq,\tp}(\tw) \overline{f^\hbar_{\tq,\tp}(\tu)} \right)^{\otimes \ell}
			\langle b_{\tw_\ell}\cdots b_{\tw_1}  b_{\ty} a_{w_k}\cdots  a_{w_1}\Psi_{N_1, N_2, t}, b_{\tu_\ell}\cdots b_{\tu_1}  b_{\ty} a_{u_k}\cdots  a_{u_1} \Psi_{N_1, N_2, t}\rangle\\
			& + \frac{2i \hbar}{N}  \sum_{j=1}^\ell  \int (\dw\du)^{\otimes k}(\dtw\dtu)^{\otimes \ell}  \int \dx\int_{0}^{1}\dd s\,\nabla V_{12}(s\tu_{j}+(1-s)\tw_{j}-x) \cdot 
			\left( f^\hbar_{q,p}(w) \overline{f^\hbar_{q,p}(u)} \right)^{\otimes k}\\
			&\qquad \nabla_{\tp_j} \left( f^\hbar_{\tq,\tp}(\tw) \overline{f^\hbar_{\tq,\tp}(\tu)} \right)^{\otimes \ell}
			\langle b_{\tw_\ell}\cdots  b_{\tw_1}   a_{w_k}\cdots a_{w_1} a_x \Psi_{N_1, N_2, t}, b_{\tu_\ell}\cdots  b_{\tu_1}   a_{u_k}\cdots a_{u_1} a_x \Psi_{N_1, N_2, t}\rangle\\
			& - \frac{2}{N} \sum_{j=1}^k \sum_{i=1}^\ell  \int (\dw\du)^{\otimes k}(\dtw\dtu)^{\otimes \ell}  \int \dty\, ( V_{12}(u_j - \tu_i) - V_{12}(w_j - \tw_i))
			\left( f^\hbar_{q,p}(w) \overline{f^\hbar_{q,p}(u)} \right)^{\otimes k}\\
			&\qquad \left( f^\hbar_{\tq,\tp}(\tw) \overline{f^\hbar_{\tq,\tp}(\tu)} \right)^{\otimes \ell}
			\langle b_{\tw_\ell}\cdots b_{\tw_1} a_{w_k}\cdots a_{w_1}\Psi_{N_1, N_2, t}, b_{\tu_\ell}\cdots b_{\tu_1} a_{u_k}\cdots  a_{u_1} \Psi_{N_1, N_2, t}\rangle.
		\end{align*}
		We apply the projection \eqref{projection_f} and get
		\begin{align}
			\mathscr{V}_{12,k,\ell} 
			= & \frac{2 i \hbar}{N} \sum_{j=1}^k  \int (\dw\du)^{\otimes k}(\dtw\dtu)^{\otimes \ell}  \int \dty\,  \int_{0}^{1}\dd s\,\nabla V_{12}(su_{j}+(1-s)w_{j}-\ty) \cdot \nabla_{p_j}
			\left( f^\hbar_{q,p}(w) \overline{f^\hbar_{q,p}(u)} \right)^{\otimes k}\nonumber\\
			&\qquad \left( f^\hbar_{\tq,\tp}(\tw) \overline{f^\hbar_{\tq,\tp}(\tu)} \right)^{\otimes \ell}
			\langle b_{\tw_\ell}\cdots b_{\tw_1}  b_{\ty} a_{w_k}\cdots  a_{w_1}\Psi_{N_1, N_2, t}, b_{\tu_\ell}\cdots b_{\tu_1} \mathds{1} b_{\ty} a_{u_k}\cdots  a_{u_1} \Psi_{N_1, N_2, t}\rangle\nonumber\\
			& + \frac{2i \hbar}{N}  \sum_{j=1}^\ell  \int (\dw\du)^{\otimes k}(\dtw\dtu)^{\otimes \ell}  \int \dx\int_{0}^{1}\dd s\,\nabla V_{12}(s\tu_{j}+(1-s)\tw_{j}-x) \cdot 
			\left( f^\hbar_{q,p}(w) \overline{f^\hbar_{q,p}(u)} \right)^{\otimes k}\nonumber\\
			&\qquad \nabla_{\tp_j} \left( f^\hbar_{\tq,\tp}(\tw) \overline{f^\hbar_{\tq,\tp}(\tu)} \right)^{\otimes \ell}
			\langle b_{\tw_\ell}\cdots  b_{\tw_1}   a_{w_k}\cdots a_{w_1} a_x \Psi_{N_1, N_2, t}, b_{\tu_\ell}\cdots  b_{\tu_1}   a_{u_k}\cdots a_{u_1} \mathds{1} a_x \Psi_{N_1, N_2, t}\rangle\nonumber\\
			& - \frac{2}{N} \sum_{j=1}^k \sum_{i=1}^\ell  \int (\dw\du)^{\otimes k}(\dtw\dtu)^{\otimes \ell}  \int \dty\, ( V_{12}(u_j - \tu_i) - V_{12}(w_j - \tw_i))
			\left( f^\hbar_{q,p}(w) \overline{f^\hbar_{q,p}(u)} \right)^{\otimes k}\nonumber\\
			&\qquad \left( f^\hbar_{\tq,\tp}(\tw) \overline{f^\hbar_{\tq,\tp}(\tu)} \right)^{\otimes \ell}
			\langle b_{\tw_\ell}\cdots b_{\tw_1} a_{w_k}\cdots a_{w_1}\Psi_{N_1, N_2, t}, b_{\tu_\ell}\cdots b_{\tu_1} a_{u_k}\cdots  a_{u_1} \Psi_{N_1, N_2, t}\rangle\nonumber\\
			= & \frac{2 i \hbar}{(2\pi)^3} \sum_{j=1}^k  \int (\dw\du)^{\otimes k}(\dtw\dtu)^{\otimes \ell}  \int \dty\,  \int_{0}^{1}\dd s\,\nabla V_{12}(su_{j}+(1-s)w_{j}-\ty) \cdot \nabla_{p_j}
			\left( f^\hbar_{q,p}(w) \overline{f^\hbar_{q,p}(u)} \right)^{\otimes k}\nonumber\\
			&\qquad \left( f^\hbar_{\tq,\tp}(\tw) \overline{f^\hbar_{\tq,\tp}(\tu)} \right)^{\otimes \ell} \iint\dd \tq_{\ell+1}\dd \tp_{\ell+1}\ f_{\tq_{\ell+1},\tp_{\ell+1}}^{\hbar}(\ty)\int\dd \tv\ \overline{f_{\tq_{\ell+1},\tp_{\ell+1}}^{\hbar}(\tv)}\nonumber\\
			&\qquad\langle b_{\tw_\ell}\cdots b_{\tw_1}  b_{\ty} a_{w_k}\cdots  a_{w_1}\Psi_{N_1, N_2, t}, b_{\tu_\ell}\cdots b_{\tu_1} b_{\tv} a_{u_k}\cdots  a_{u_1} \Psi_{N_1, N_2, t}\rangle\nonumber\\
			& + \frac{2i \hbar}{(2\pi)^3}  \sum_{j=1}^\ell  \int (\dw\du)^{\otimes k}(\dtw\dtu)^{\otimes \ell}  \int \dx\int_{0}^{1}\dd s\,\nabla V_{12}(s\tu_{j}+(1-s)\tw_{j}-x) \cdot 
			\left( f^\hbar_{q,p}(w) \overline{f^\hbar_{q,p}(u)} \right)^{\otimes k}\nonumber\\
			&\qquad \nabla_{\tp_j} \left( f^\hbar_{\tq,\tp}(\tw) \overline{f^\hbar_{\tq,\tp}(\tu)} \right)^{\otimes \ell} \iint\dd q_{k+1}\dd p_{k+1}\ f_{q_{k+1},p_{k+1}}^{\hbar}(x)\int\dd v\ \overline{f_{q_{k+1},p_{k+1}}^{\hbar}(v)}\nonumber\\
			&\qquad \langle b_{\tw_\ell}\cdots  b_{\tw_1}   a_{w_k}\cdots a_{w_1} a_x \Psi_{N_1, N_2, t}, b_{\tu_\ell}\cdots  b_{\tu_1}   a_{u_k}\cdots a_{u_1} a_v \Psi_{N_1, N_2, t}\rangle\nonumber\\
			& - \frac{2}{N} \sum_{j=1}^k \sum_{i=1}^\ell  \int (\dw\du)^{\otimes k}(\dtw\dtu)^{\otimes \ell}  \int \dty\; ( V_{12}(u_j - \tu_i) - V_{12}(w_j - \tw_i))
			\left( f^\hbar_{q,p}(w) \overline{f^\hbar_{q,p}(u)} \right)^{\otimes k}\nonumber\\
			&\qquad \left( f^\hbar_{\tq,\tp}(\tw) \overline{f^\hbar_{\tq,\tp}(\tu)} \right)^{\otimes \ell}
			\langle b_{\tw_\ell}\cdots b_{\tw_1} a_{w_k}\cdots a_{w_1}\Psi_{N_1, N_2, t}, b_{\tu_\ell}\cdots b_{\tu_1} a_{u_k}\cdots  a_{u_1} \Psi_{N_1, N_2, t}\rangle.\label{V_12kl}
		\end{align}
		The proof is completed by substituting \eqref{K_1kl}, \eqref{V_11kl}, \eqref{K_2kl}, \eqref{V_22kl}, and \eqref{V_12kl} into \eqref{derive_husimi_kl}.
	\end{proof}
	
	\section{Bounds of errors}\label{sec:Bound of errors}
	Throughout this section, we provide bounds on the errors in a weak sense, which is why we use test functions. In each proposition, $\varphi, \phi, \tilde\varphi, \tilde\phi$ will denote test functions. Moreover, $k,\ell$ are non-negative integers.
	\subsection{$\cR_{1,k,\ell}$ and $\cR_{2,k,\ell}$}
	Recall the remainder terms in Proposition \ref{prop:vla_bbgky_hierarchy}, i.e., 
	\begin{equation}
		\begin{aligned}
			\mathcal{R}_{1,k,\ell}:=&\hbar\Im\left<b(f_{\widetilde{q}_{\ell},\widetilde{p}_{\ell}}^{\hbar})\cdots b(f_{\widetilde{q}_{1},\widetilde{p}_{1}}^{\hbar})\nabla_{\mathbf{q}_{k}}\big(a(f_{q_{k},p_{k}}^{\hbar})\cdots a(f_{q_{1},p_{1}}^{\hbar})\big)\Psi_{N_1, N_2, t},\right.\\
			&\hspace{15em}\left.b(f_{\widetilde{q}_{\ell},\widetilde{p}_{\ell}}^{\hbar})\cdots b(f_{\widetilde{q}_{1},\widetilde{p}_{1}}^{\hbar})a(f_{q_{k},p_{k}}^{\hbar})\cdots a(f_{q_{1},p_{1}}^{\hbar})\Psi_{N_1, N_2, t}\right>,\\
			\mathcal{R}_{2,k,\ell}:=&\hbar\Im\left<\nabla_{\mathbf{\tq}_{\ell}}\big(b(f_{\widetilde{q}_{\ell},\widetilde{p}_{\ell}}^{\hbar})\cdots b(f_{\widetilde{q}_{1},\widetilde{p}_{1}}^{\hbar})\big)a(f_{q_{k},p_{k}}^{\hbar})\cdots a(f_{q_{1},p_{1}}^{\hbar})\Psi_{N_1, N_2, t},\right.\\
			&\hspace{15em}\left.b(f_{\widetilde{q}_{\ell},\widetilde{p}_{\ell}}^{\hbar})\cdots b(f_{\widetilde{q}_{1},\widetilde{p}_{1}}^{\hbar})a(f_{q_{k},p_{k}}^{\hbar})\cdots a(f_{q_{1},p_{1}}^{\hbar})\Psi_{N_1, N_2, t}\right>.
		\end{aligned}
	\end{equation}
	\begin{Proposition}\label{prop:R1kell}
		Under Assumption \ref{ass:main}, we have: for $1 \leq k \leq N_1$ and $0\leq\ell\leq N_2$
		\begin{align}\label{eq:R_1kl_bdd}
			\bigg|\int\dq^{\otimes k} \ddp^{\otimes k} \dtq^{\otimes \ell} \dtp^{\otimes \ell} \, \varphi^{\otimes k} \phi^{\otimes k} \tilde\varphi^{\otimes \ell} \tilde\phi^{\otimes \ell} \nabla_{\mathbf{q}_k}\cdot \cR_{1,k,\ell}(q_{1},p_{1},\dots,q_{k},p_{k},\tq_{1},\tp_{1},\dots,\tq_{\ell},\tp_{\ell}) \bigg| \leq {C_t}  \hbar^{\frac{1}{2}},
		\end{align}
		and for $0 \leq k \leq N_1$ and $1\leq\ell\leq N_2$
		\begin{align}\label{eq:R_2kl_bdd}
			\bigg|\int\dq^{\otimes k} \ddp^{\otimes k} \dtq^{\otimes \ell} \dtp^{\otimes \ell} \, \varphi^{\otimes k} \phi^{\otimes k} \tilde\varphi^{\otimes \ell} \tilde\phi^{\otimes \ell} \nabla_{\mathbf{\tq}_{\ell}}\cdot \cR_{2,k,\ell}(q_{1},p_{1},\dots,q_{k},p_{k},\tq_{1},\tp_{1},\dots,\tq_{\ell},\tp_{\ell}) \bigg| \leq {C_t} \hbar^{\frac{1}{2}},
		\end{align}
		where  $C$ depends on $\varphi, \phi, \tilde\varphi, \tilde\phi, f, k, \ell$.
	\end{Proposition}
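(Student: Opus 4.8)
The plan is to get a \emph{pointwise} bound of size $\hbar^{1/2}$ on each component of $\mathcal R_{1,k,\ell}$, and then move the outer divergence onto the test function by an integration by parts, which costs no power of $\hbar$. I would first record two elementary inputs. (i) For any profile $g\in L^2(\bR^3)$ the canonical anticommutation relations give $\opnorm{a(g)}=\norm{g}_2=\opnorm{b(g)}$; applying this to each factor of the mixed product and using $\norm{f^\hbar_{q,p}}_2=\norm{f}_2=1$, together with $\norm{\Psi_{N_1,N_2,t}}=1$, yields $\norm{b(f^\hbar_{\tq_\ell,\tp_\ell})\cdots b(f^\hbar_{\tq_1,\tp_1})a(f^\hbar_{q_k,p_k})\cdots a(f^\hbar_{q_1,p_1})\Psi_{N_1,N_2,t}}\le 1$ for all $t$ (no number-operator or kinetic bound is needed here). (ii) Differentiating the coherent state in its position label, $\partial_{q^a}f^\hbar_{q,p}(y)=-\hbar^{-1/2}\hbar^{-3/4}(\partial_a f)\big((y-q)/\sqrt\hbar\big)e^{\frac{\ii}{\hbar}p\cdot y}$, whence $\norm{\partial_{q^a}f^\hbar_{q,p}}_2=\hbar^{-1/2}\norm{\partial_a f}_2\le\hbar^{-1/2}\norm{\nabla f}_2<\infty$ by Assumption \ref{ass:main}.\ref{item:Coherent}.

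Next I would expand $\mathcal R_{1,k,\ell}$. Since $\nabla_{\mathbf q_k}$ acts by the Leibniz rule and only the $j$-th annihilation factor depends on $q_j$, the $(j,a)$-component of $\mathcal R_{1,k,\ell}$ is $\hbar\,\Im\big\langle b(f^\hbar_{\tq_\ell,\tp_\ell})\cdots b(f^\hbar_{\tq_1,\tp_1})\,a(f^\hbar_{q_k,p_k})\cdots a(\partial_{q_j^a}f^\hbar_{q_j,p_j})\cdots a(f^\hbar_{q_1,p_1})\Psi_{N_1,N_2,t},\ b(f^\hbar_{\tq_\ell,\tp_\ell})\cdots a(f^\hbar_{q_1,p_1})\Psi_{N_1,N_2,t}\big\rangle$. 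By Cauchy--Schwarz in $\cF_a\otimes\cF_a$ and inputs (i)--(ii), this is bounded in absolute value by $\hbar\cdot\hbar^{-1/2}\norm{\nabla f}_2\cdot 1=\hbar^{1/2}\norm{\nabla f}_2$, uniformly in all phase-space variables and in $t$.

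Finally, after an integration by parts moving $\nabla_{\mathbf q_k}$ onto the test function (legitimate: the integrand is a bounded continuous function and $\varphi,\phi,\tilde\varphi,\tilde\phi$ are Schwartz, so no boundary term), the left-hand side of \eqref{eq:R_1kl_bdd} equals, up to sign, $\sum_{j=1}^k\sum_{a=1}^3\int \partial_{q_j^a}\!\big(\varphi^{\otimes k}\phi^{\otimes k}\tilde\varphi^{\otimes\ell}\tilde\phi^{\otimes\ell}\big)\,(\mathcal R_{1,k,\ell})_{j,a}$; by the pointwise bound this is at most $\hbar^{1/2}\norm{\nabla f}_2\sum_{j,a}\norm{\partial_{q_j^a}(\varphi^{\otimes k}\phi^{\otimes k}\tilde\varphi^{\otimes\ell}\tilde\phi^{\otimes\ell})}_{L^1}\le C\hbar^{1/2}$, where $C$ depends only on $k,\ell$, on $\norm{\nabla f}_2$, and on the $L^1$ norms of $\varphi,\phi,\tilde\varphi,\tilde\phi$ and their first derivatives (the $L^1$ norm of a tensor product factorizes). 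This proves \eqref{eq:R_1kl_bdd}. The bound \eqref{eq:R_2kl_bdd} for $\mathcal R_{2,k,\ell}$ follows verbatim with the roles of the two species exchanged, i.e.\ differentiating a $b$-coherent state in $\tq_j$ and using $\norm{\partial_{\tq^a}f^\hbar_{\tq,\tp}}_2=\hbar^{-1/2}\norm{\nabla f}_2$.

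There is essentially no analytic obstacle: the $\hbar^{1/2}$ gain is forced by the explicit $\hbar$ prefactor in $\mathcal R_{1,k,\ell}$ against the $\hbar^{-1/2}$ produced by differentiating a single coherent state in position. The only points that require a little care are (a) justifying the distributional integration by parts, and (b) resisting the temptation to bound $\norm{b\cdots a\cdots\Psi_{N_1,N_2,t}}$ by a number-operator expectation — one should instead use the sharp fermionic bound $\opnorm{a(g)}=\norm{g}_2$, which gives the clean constant $1$ rather than reintroducing factors $N_1^k$ that only cancel against $\hbar^{3k}$ later.
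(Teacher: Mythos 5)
Your proposal is correct, but it follows a genuinely different and more elementary route than the paper. You exploit the fermionic operator-norm bound $\|a(g)\|_{\mathrm{op}}\leq\|g\|_{2}$ (and likewise for $b$), together with $\|\partial_{q^a}f^{\hbar}_{q,p}\|_{2}=\hbar^{-1/2}\|\partial_a f\|_{2}$ (only $f\in H^1$ from Assumption \ref{ass:main}.\ref{item:Coherent} is needed), to get the pointwise bound $|(\mathcal{R}_{1,k,\ell})_j|\leq\hbar^{1/2}\|\nabla f\|_{2}$ uniformly in all phase-space variables and in $t$; the weak pairing is then finished by moving $\nabla_{\mathbf{q}_k}$ onto the test functions and using $L^1$-type norms of $\varphi,\phi,\tilde\varphi,\tilde\phi$ and $\nabla\varphi$. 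The paper performs the same transfer of the derivative onto $\varphi$, but then estimates $\int\ddp^{\otimes k}\dtp^{\otimes\ell}|(\mathcal{R}_{1,k,\ell})_j|$ by Cauchy--Schwarz at the level of the $p$-integrals, expands the coherent states, reduces one factor to the moment $\langle\Psi_{N_1,N_2,t},\mathcal{N}_1^{k}\mathcal{N}_2^{\ell}\Psi_{N_1,N_2,t}\rangle^{1/2}\leq N^{(k+\ell)/2}$ (cancelling against $\hbar^{3(k+\ell)/2}$ via $\hbar=N^{-1/3}$) and the other to $\rho_{N_1,N_2,t}^{1/2}$, and closes the $q$-integration with H\"older and the kinetic-energy ($L^{5/3}$) bound on the density, which is where the time dependence $C_t$ enters. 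What your route buys is a visibly simpler argument with a constant uniform in $t$ (slightly stronger than the stated bound) and no use of the number operator, the density, or Lemma \ref{lem:kinetic_finite}; the price is that you need the test functions and their first derivatives in $L^1$ rather than the $L^\infty$ norms the paper quotes, and the argument leans crucially on the fermionic boundedness of $a(g)$, whereas the paper's heavier template is the one it reuses for the other remainders where no such pointwise bound is available. One small caveat: the justification of the integration by parts should really be phrased via the distributional meaning of $\nabla_{\mathbf{q}_k}\cdot\mathcal{R}_{1,k,\ell}$ in the weak formulation (classical differentiation of $\mathcal{R}_{1,k,\ell}$ in $q$ would ask for more regularity of $f$ than $H^1$); this interpretation is exactly what the paper uses implicitly, so it is a presentational point rather than a gap.
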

	\begin{Remark}
		The following proof, as well as the subsequent proofs of Proposition \ref{prop:tildeR11kell}, \ref{prop:tildeR12kell}, \ref{prop:hatR11kell}, and \ref{prop:hatR12kell}, actually only deal with the errors given in Proposition \ref{prop:vla_bbgky_hierarchy}, that is, the most complex situation where $k, \ell \geq 1$. For the control of the errors given in Propositions \ref{prop:vla_k=1} and \ref{prop:vla_k>1}, that is, the case where one of $k$ and $l$ is 0, the calculation has been included in the most complex case. Finally, the proofs of the following propositions will not involve simplified cases.
	\end{Remark}
	
	\begin{proof}[Proof of Proposition \ref{prop:R1kell}]
		For $1\leq j\leq k$, the $j$-th component of $\mathcal{R}_{1,k,\ell}$ is given as
		\begin{align*}
			(\mathcal{R}_{1,k,\ell})_j:=&\hbar\Im\left<\big(b(f_{\widetilde{q}_{\ell},\widetilde{p}_{\ell}}^{\hbar})\cdots b(f_{\widetilde{q}_{1},\widetilde{p}_{1}}^{\hbar})\nabla_{q_j}a(f_{q_{k},p_{k}}^{\hbar})\cdots a(f_{q_{1},p_{1}}^{\hbar})\big)\Psi_{N_1, N_2, t},\right.\\
			&\hspace{15em}\left.b(f_{\widetilde{q}_{\ell},\widetilde{p}_{\ell}}^{\hbar})\cdots b(f_{\widetilde{q}_{1},\widetilde{p}_{1}}^{\hbar})a(f_{q_{k},p_{k}}^{\hbar})\cdots a(f_{q_{1},p_{1}}^{\hbar})\Psi_{N_1, N_2, t}\right>.
		\end{align*}
		So we have
		\begin{align*}
			&\int\ddp_{1}\dots\ddp_{k}\dtp_{1}\dots\dtp_{\ell}\,\big|(\mathcal{R}_{1,k,\ell})_j(q_{1},p_{1},\dots,q_{k},p_{k},\tq_{1},\tp_{1},\dots,\tq_{\ell},\tp_{\ell})\big|\\
			&\leq\hbar\int\ddp_{1}\dots\ddp_{k}\dtp_{1}\dots\dtp_{\ell}\,\|b(f_{\widetilde{q}_{\ell},\widetilde{p}_{\ell}}^{\hbar})\cdots b(f_{\widetilde{q}_{1},\widetilde{p}_{1}}^{\hbar})\nabla_{q_{j}}a(f_{q_{k},p_{k}}^{\hbar})\cdots a(f_{q_{1},p_{1}}^{\hbar})\Psi_{N_{1},N_{2},t}\|\\
			&\qquad\times\|b(f_{\widetilde{q}_{\ell},\widetilde{p}_{\ell}}^{\hbar})\cdots b(f_{\widetilde{q}_{1},\widetilde{p}_{1}}^{\hbar})a(f_{q_{k},p_{k}}^{\hbar})\cdots a(f_{q_{1},p_{1}}^{\hbar})\Psi_{N_{1},N_{2},t}\|\\
			&\leq\bigg[\hbar^{2}\int\ddp_{1}\dots\ddp_{k}\dtp_{1}\dots\dtp_{\ell}\bigg<b(f_{\widetilde{q}_{\ell},\widetilde{p}_{\ell}}^{\hbar})\cdots b(f_{\widetilde{q}_{1},\widetilde{p}_{1}}^{\hbar})\nabla_{q_{j}}a(f_{q_{k},p_{k}}^{\hbar})\cdots a(f_{q_{1},p_{1}}^{\hbar})\Psi_{N_{1},N_{2},t},\\
			&\hspace{15em}b(f_{\widetilde{q}_{\ell},\widetilde{p}_{\ell}}^{\hbar})\cdots b(f_{\widetilde{q}_{1},\widetilde{p}_{1}}^{\hbar})\nabla_{q_{j}}a(f_{q_{k},p_{k}}^{\hbar})\cdots a(f_{q_{1},p_{1}}^{\hbar})\Psi_{N_{1},N_{2},t}\bigg>\bigg]^{\frac{1}{2}}\\
			&\qquad\times\bigg[\int\ddp_{1}\dots\ddp_{k}\dtp_{1}\dots\dtp_{\ell}\,m_{N_{1},N_{2},t}(q_{1},p_{1},\dots,q_{k},p_{k},\tq_{1},\tp_{1},\dots,\tq_{\ell},\tp_{\ell})\bigg]^{\frac{1}{2}}\\
			&=(2\pi)^\frac32\bigg[\hbar^{2}\int\ddp_{1}\dots\ddp_{k}\dtp_{1}\dots\dtp_{\ell}\bigg<b(f_{\widetilde{q}_{\ell},\widetilde{p}_{\ell}}^{\hbar})\cdots b(f_{\widetilde{q}_{1},\widetilde{p}_{1}}^{\hbar})\nabla_{q_{j}}a(f_{q_{k},p_{k}}^{\hbar})\cdots a(f_{q_{1},p_{1}}^{\hbar})\Psi_{N_{1},N_{2},t},\\
			&\hspace{5em} b(f_{\widetilde{q}_{\ell},\widetilde{p}_{\ell}}^{\hbar})\cdots b(f_{\widetilde{q}_{1},\widetilde{p}_{1}}^{\hbar})\nabla_{q_{j}}a(f_{q_{k},p_{k}}^{\hbar})\cdots a(f_{q_{1},p_{1}}^{\hbar})\Psi_{N_{1},N_{2},t}\bigg>\bigg]^{\frac{1}{2}}
			\rho_{N_{1},N_{2},t}^{\frac{1}{2}}(q_{1},\dots,q_{k},\tq_{1},\dots,\tq_{\ell})\\
			&=(2\pi)^\frac32\Bigg[\hbar^{2-\frac{3}{2}(k+\ell)}\int\ddp_{1}\dots\ddp_{k}\dtp_{1}\dots\dtp_{\ell}\int\dw_{1}\dots\dw_{k}\du_{1}\dots\du_{k}\dtw_{1}\dots\dtw_{\ell}\dtu_{1}\dots\dtu_{\ell}\\
			&\qquad\Bigg(\prod_{i\neq j}f\left(\frac{w_{i}-q_{i}}{\sqrt{\hbar}}\right)f\left(\frac{u_{i}-q_{i}}{\sqrt{\hbar}}\right)e^{\frac{\ii}{\hbar}p_{i}\cdot(w_{i}-u_{i})}\Bigg)
			\Bigg(\nabla_{q_{j}}f\left(\frac{w_{j}-q_{j}}{\sqrt{\hbar}}\right)\nabla_{q_{j}}f\left(\frac{u_{j}-q_{j}}{\sqrt{\hbar}}\right)e^{\frac{\ii}{\hbar}p_{j}\cdot(w_{j}-u_{j})}\Bigg)\\
			&\qquad\Bigg(\prod_{i=1}^{\ell}f\left(\frac{\tw_{i}-\tq_{i}}{\sqrt{\hbar}}\right)f\left(\frac{\tu_{i}-\tq_{i}}{\sqrt{\hbar}}\right)e^{\frac{\ii}{\hbar}\tp_{i}\cdot(w_{i}-u_{i})}\Bigg)\\
			&\qquad\times\left<\Psi_{N_{1},N_{2},t},a_{w_{1}}^{*}\dots a_{w_{k}}^{*}b_{\tw_{1}}^{*}\dots b_{\tw_{\ell}}^{*}b_{\tu_{1}}\dots b_{\tu_{\ell}}a_{u_{1}}\dots a_{u_{k}}\Psi_{N_{1},N_{2},t}\right>\Bigg]^{\frac{1}{2}}\rho_{N_{1},N_{2},t}^{\frac{1}{2}}(q_{1},\dots,q_{k},\tq_{1},\dots,\tq_{\ell})\\
			&\leq(2\pi)^{3(k+\ell+\frac12)}\Bigg[\hbar^{2+\frac{3}{2}(k+\ell)}\int\dw_{1}\dots\dw_{k}\dtw_{1}\dots\dtw_{\ell}\ \hbar^{-1}\left|\nabla f\left(\frac{w_{j}-q_{j}}{\sqrt{\hbar}}\right)\right|^{2}\prod_{i\neq j}\left|f\left(\frac{w_{i}-q_{i}}{\sqrt{\hbar}}\right)\right|^{2}\\
			&\quad\prod_{i=1}^{\ell}\left|f\left(\frac{\tw_{i}-\tq_{i}}{\sqrt{\hbar}}\right)\right|^{2}\left<\Psi_{N_{1},N_{2},t},a_{w_{1}}^{*}\dots a_{w_{k}}^{*}b_{\tw_{1}}^{*}\dots b_{\tw_{\ell}}^{*}b_{\tu_{1}}\dots b_{\tu_{\ell}}a_{u_{1}}\dots a_{u_{k}}\Psi_{N_{1},N_{2},t}\right>\Bigg]^{\frac{1}{2}}\rho_{N_{1},N_{2},t}^{\frac{1}{2}}(q_{1},\dots,q_{k})\\
			&=(2\pi)^{3(k+\ell+\frac12)}\hbar^{\frac{1}{2}(1+3k+3\ell)}\Bigg[\int\dw_{1}\dots\dw_{k}\dtw_{1}\dots\dtw_{\ell}\ \hbar^{-\frac{3}{2}k}\left|\nabla f\left(\frac{w_{j}-q_{j}}{\sqrt{\hbar}}\right)\right|^{2}\prod_{i\neq j}\left|f\left(\frac{w_{i}-q_{i}}{\sqrt{\hbar}}\right)\right|^{2}\\
			&\qquad h^{-\frac{3}{2}\ell}\prod_{i=1}^{\ell}\left|f\left(\frac{\tw_{i}-\tq_{i}}{\sqrt{\hbar}}\right)\right|^{2}\left<\Psi_{N_{1},N_{2},t},a_{w_{1}}^{*}\dots a_{w_{k}}^{*}b_{\tw_{1}}^{*}\dots b_{\tw_{\ell}}^{*}b_{\tu_{1}}\dots b_{\tu_{\ell}}a_{u_{1}}\dots a_{u_{k}}\Psi_{N_{1},N_{2},t}\right>\Bigg]^{\frac{1}{2}}\\
			&\qquad\rho_{N_{1},N_{2},t}^{\frac{1}{2}}(q_{1},\dots,q_{k},\tq_{1},\dots,\tq_{\ell}).
		\end{align*}
		This implies, by using H\"older inequality, that
		\begin{align*}
			&\int\dq_{1}\dots\dq_{k}\dtq_{1}\dots\dtq_{\ell}\Big|\int\ddp_{1}\dots\ddp_{k}\dtp_{1}\dots\dtp_{k}\,\big|(\mathcal{R}_{1,k,\ell})_j(q_{1},p_{1},\dots,q_{k},p_{k},\tq_{1},\tp_{1},\dots,\tq_{\ell},\tp_{\ell})\big|\Big|\\
			&\leq(2\pi)^{3(k+\ell+\frac12)}\hbar^{\frac{1}{2}(1+3k+3\ell)}\Bigg[\int\dq_{1}\dots\dq_{k}\dtq_{1}\dots\dtq_{\ell}\dw_{1}\dots\dw_{k}\dtw_{1}\dots\dtw_{\ell}\ \Bigg(\prod_{i\neq j}\hbar^{-\frac{3}{2}}\left|f\left(\frac{w_{i}-q_{i}}{\sqrt{\hbar}}\right)\right|^{ 4}\Bigg)\\
			&\qquad\Bigg(\hbar^{-\frac{3}{2}}\left|\nabla f\left(\frac{w_{j}-q_{j}}{\sqrt{\hbar}}\right)\right|^{ 4}\Bigg)\Bigg(\prod_{j=1}^{\ell}\hbar^{-\frac{3}{2}}\left|f\left(\frac{\tw_{j}-\tq_{j}}{\sqrt{\hbar}}\right)\right|^{ 4}\Bigg)
			\left<\Psi_{N_1, N_2, t},a_{w_{1}}^{*}\dots a_{w_{k}}^{*}a_{\tw_{1}}^{*}\dots a_{\tw_{\ell}}^{*}a_{\tu_{1}}\dots a_{\tu_{\ell}}a_{u_{1}}\dots a_{u_{k}}\Psi_{N_1, N_2, t}\right>\Bigg]^{\frac12}\\
			&\qquad\times\left[\int\dq_{1}\dots\dq_{k}\dtq_{1}\dots\dtq_{\ell}\,\rho_{N_1, N_2, t}(q_{1},\dots,q_{k},\tq_{1},\dots,\tq_{\ell})\right]^{{\frac12}}\\
			&\leq {C_t} \hbar^{\frac{1}{2}(1+3k+3\ell)}\|\nabla f\|_{ 4}\|f\|_{4}^{k+\ell-1}\left<\Psi_{N_1, N_2, t},\cN^{k}_1\cN^{\ell}_2\Psi_{N_1, N_2, t}\right>^{\frac{1}{2}}\\
			&\leq {C_t}  \hbar^{\frac{1}{2}(1+3k+3\ell)} N_1^{\frac{k}{2}} N_2^{\frac{\ell}{2}}
			\leq {C_t}  \hbar^{\frac{1}{2}(1+3k+3\ell)} N^{\frac{k+\ell}{2}} 
			= {C_t}  \hbar^{\frac{1}{2}(1+3k+3\ell)-\frac{3}{2}(k+\ell)}
			= {C_t}  \hbar^{\frac{1}{2}}.
		\end{align*}
		In the above estimate, we utilized the bounded nature of $\|\rho_{N_1, N_2, t}\|_{\frac53}$, which can be constrained by the kinetic energy through the fundamental kinetic energy inequality \cite{Lieb2009}. Additionally, we acquire:
		\begin{align*}
			&\bigg|\int\dq^{\otimes k}\ddp^{\otimes k}\dtq^{\otimes\ell}\dtp^{\otimes\ell}\,\varphi^{\otimes k}\phi^{\otimes k}\tilde{\varphi}^{\otimes\ell}\tilde{\phi}^{\otimes\ell}\,\,\nabla_{\mathbf{q}_{k}}\cdot\cR_{1,k,\ell}(q_{1},p_{1},\dots,q_{k},p_{k},\tq_{1},\tp_{1},\dots,\tq_{\ell},\tp_{\ell})\bigg|\\
			&\leq\sum_{j=1}^{k}\bigg|\int\dq^{\otimes k}\ddp^{\otimes k}\dtq^{\otimes\ell}\dtp^{\otimes\ell}\,\varphi^{\otimes k}\phi^{\otimes k}\tilde{\varphi}^{\otimes\ell}\tilde{\phi}^{\otimes\ell}\,\nabla_{q_j}\cdot(\mathcal{R}_{1,k,\ell})_j(q_{1},p_{1},\dots,q_{k},p_{k},\tq_{1},\tp_{1},\dots,\tq_{\ell},\tp_{\ell})\bigg|\\
			&\leq\sum_{j=1}^{k}\|\nabla\varphi\|_{L^{{\infty}}(\bR^{3})}\|\varphi\|_{L^{{\infty}}(\bR^{3})}^{k+\ell-1}\Bigg(\int\dq^{\otimes k}\dtq^{\otimes\ell}\Bigg|\int\ddp^{\otimes k}\dtp^{\otimes\ell}\,\Big(\prod_{i=1}^{k}\phi(p_{i})\Big)\,\Big(\prod_{i=1}^{\ell}\phi(\tp_{i})\Big)\\
			&\hspace{15em}(\mathcal{R}_{1,k,\ell})_j(q_{1},p_{1},\dots,q_{k},p_{k},\tq_{1},\tp_{1},\dots,\tq_{\ell},\tp_{\ell})\Bigg|\Bigg)\\
			&\leq\sum_{j=1}^{k}\|\nabla\varphi\|_{L^{{\infty}}(\bR^{3})}\|\varphi\|_{L^{{\infty}}(\bR^{3})}^{k+\ell-1}\|\phi\|_{L^{\infty}(\bR^{3})}^{k+\ell}\Bigg(\int\dq^{\otimes k}\dtq^{\otimes\ell}\Bigg|\int\ddp^{\otimes k}\dtp^{\otimes\ell}
			(\mathcal{R}_{1,k,\ell})_j(q_{1},p_{1},\dots,q_{k},p_{k},\tq_{1},\tp_{1},\dots,\tq_{\ell},\tp_{\ell})\Bigg|\Bigg)\\
			&\leq {C_t} \hbar^{\frac{1}{2}}.
		\end{align*}
		By interchanging the roles of the two species, we can get \eqref{eq:R_2kl_bdd} correspondingly.
	\end{proof}
	\subsection{$\widetilde{\cR}_{11,k,\ell}$, $\widetilde{\cR}_{12,k,\ell}$, and $\widetilde{\cR}_{22,k,\ell}$}
	Recall the remainder terms in Proposition \ref{prop:vla_k>1}, i.e., 
	\begin{align*}
		(\widetilde{\mathcal{R}}
		_{11,k,\ell})_{j}:= & \frac{1}{(2\pi)^3}\int(\dw\du)^{\otimes k}(\dtw\dtu)^{\otimes\ell}\int\dy\left[\int_{0}^{1}\dd s\,\nabla V_{11}(su_{j}+(1-s)w_{j}-y)\right]\\
		& \left(f_{q,p}^{\hbar}(w)\overline{f_{q,p}^{\hbar}(u)}\right)^{\otimes k}\left(f_{\tq,\tp}^{\hbar}(\tw)\overline{f_{\tq,\tp}^{\hbar}(\tu)}\right)^{\otimes\ell}
		\iint\dq_{k+1} \dd  p_{k+1} \ f_{q_{k+1},p_{k+1}}^{\hbar}(y)\int\dd v\ \overline{f_{{q_{k+1}},{p_{k+1}}}^{\hbar}(v)}\\
		&\left<b_{\tw_{\ell}}\dots b_{\tw_{1}}a_{w_{k}}\cdots a_{w_{1}}a_{y}\Psi_{N_1, N_2, t}, b_{\tu_{\ell}}\dots b_{\tu_{1}}a_{u_{k}}\cdots a_{u_{1}}a_{v}\Psi_{N_1, N_2, t}\right>\\
		& -\frac{1}{(2\pi)^3}\iint\dd q_{k+1}\dd p_{k+1}\,\nabla V_{11}(q_{j}-q_{k+1})m_{N_{1},N_{2},t}^{(k+1,\ell)}(q_1, p_1, \dots, q_{k+1}, p_{k+1},\tq_1, \tp_1, \dots, \tq_\ell, \tp_\ell),\\
		(\widetilde{\mathcal{R}}_{12,1,k,\ell})_{j}:= & \frac{1}{(2\pi)^3}  \int (\dw\du)^{\otimes k}(\dtw\dtu)^{\otimes \ell}  \int \dty\,  \left[\int_{0}^{1}\dd s\,\nabla V_{12}(su_{j}+(1-s)w_{j}-\ty)\right]\\
		&\left( f^\hbar_{q,p}(w) \overline{f^\hbar_{q,p}(u)} \right)^{\otimes k}
		\left( f^\hbar_{\tq,\tp}(\tw) \overline{f^\hbar_{\tq,\tp}(\tu)} \right)^{\otimes \ell} \iint\dd \tq_{\ell+1}\dd \tp_{\ell+1}\ f_{\tq_{\ell+1},\tp_{\ell+1}}^{\hbar}(\ty)\int\dd \tv\ \overline{f_{\tq_{\ell+1},\tp_{\ell+1}}^{\hbar}(\tv)}\\
		&\langle b_{\tw_\ell}\cdots b_{\tw_1}  b_{\ty} a_{w_k}\cdots  a_{w_1}\Psi_{N_1, N_2, t}, b_{\tu_\ell}\cdots b_{\tu_1} b_{\tv} a_{u_k}\cdots  a_{u_1} \Psi_{N_1, N_2, t}\rangle\\
		& -\frac{1}{(2\pi)^{3}} \iint\dd \tq_{\ell+1}\dd \tp_{\ell+1}\nabla V_{12}(q_{j}-\tq_{\ell+1})m_{N_{1},N_{2},t}^{(k,\ell+1)}(q_1, p_1, \dots, q_k, p_k,\tq_1, \tp_1, \dots, \tq_{\ell+1}, \tp_{\ell+1}),\\
		(\widetilde{\mathcal{R}}_{12,2,k,\ell})_{j}:= & \frac{1}{(2\pi)^3}   \int (\dw\du)^{\otimes k}(\dtw\dtu)^{\otimes \ell}  \int \dx\left[\int_{0}^{1}\dd s\,\nabla V_{12}(s\tu_{j}+(1-s)\tw_{j}-x)\right]\\
		&\left( f^\hbar_{q,p}(w) \overline{f^\hbar_{q,p}(u)} \right)^{\otimes k}
		\left( f^\hbar_{\tq,\tp}(\tw) \overline{f^\hbar_{\tq,\tp}(\tu)} \right)^{\otimes \ell} \iint\dd q_{k+1}\dd p_{k+1}\ f_{q_{k+1},p_{k+1}}^{\hbar}(x)\int\dd v\ \overline{f_{q_{k+1},p_{k+1}}^{\hbar}(v)}\\
		&\langle b_{\tw_\ell}\cdots  b_{\tw_1}   a_{w_k}\cdots a_{w_1} a_x \Psi_{N_1, N_2, t}, b_{\tu_\ell}\cdots  b_{\tu_1}   a_{u_k}\cdots a_{u_1} a_v \Psi_{N_1, N_2, t}\rangle\\
		& -\frac{1}{(2\pi)^{3}}\iint\dd q_{k+1}\dd p_{k+1}\,\nabla V_{12}(q_{k+1}-\tq_{j})m_{N_{1},N_{2},t}^{(k+1,\ell)}(q_1, p_1, \dots, q_{k+1}, p_{k+1},\tq_1, \tp_1, \dots, \tq_\ell, \tp_\ell),\\
		(\widetilde{\mathcal{R}}_{22,k,\ell})_j:=&\frac{1}{(2\pi)^{3}}  \int(\dw\du)^{\otimes k}(\dtw\dtu)^{\otimes\ell}\int\dty\,\left[\int_{0}^{1}\dd s\,\nabla V_{22}(s\tu_{j}+(1-s)\tw_{j}-\ty)\right]\\
		& \left(f_{q,p}^{\hbar}(w)\overline{f_{q,p}^{\hbar}(u)}\right)^{\otimes k}  \left(f_{\tq,\tp}^{\hbar}(\tw)\overline{f_{\tq,\tp}^{\hbar}(\tu)}\right)^{\otimes\ell}\iint\dd \tq_{\ell+1}\dd \tp_{\ell+1}\ f_{\tq_{\ell+1},\tp_{\ell+1}}^{\hbar}(\ty)\int\dd \tv\ \overline{f_{\tq_{\ell+1},\tp_{\ell+1}}^{\hbar}(\tv)}\\
		& \left<b_{\tw_{\ell}}\dots b_{\tw_{1}}b_{\ty}a_{w_{k}}\cdots a_{w_{1}}\Psi_{N_1, N_2, t},b_{\tu_{\ell}}\dots b_{\tu_{1}}b_{\tv}a_{u_{k}}\cdots a_{u_{1}}\Psi_{N_1, N_2, t}\right>\\
		& -\frac{1}{(2\pi)^{3}} \iint\dd \tq_{\ell+1}\dd \tp_{\ell+1}\nabla V_{22}(\tq_{j}-\tq_{\ell+1})m_{N_{1},N_{2},t}^{(k,\ell+1)}(q_1, p_1, \dots, q_k, p_k,\tq_1, \tp_1, \dots, \tq_{\ell+1}, \tp_{\ell+1}).
	\end{align*}
	
	\medskip
	
	\begin{Proposition}\label{prop:tildeR11kell}
		Under Assumption \ref{ass:main}, for arbitrary small $\delta>0$, we have: for $1 \leq k \leq N_1$, $0\leq\ell\leq N_2$, and $1\leq j\leq k$
		\begin{align}\label{eq:prop_tR_11klj}
			\bigg|\int\dq^{\otimes k} \ddp^{\otimes k} \dtq^{\otimes \ell} \dtp^{\otimes \ell} \, \varphi^{\otimes k} \phi^{\otimes k} \tilde\varphi^{\otimes \ell} \tilde\phi^{\otimes \ell} \nabla_{p_j}\cdot (\widetilde{\mathcal{R}}
			_{11,k,\ell})_{j}(q_{1},p_{1},\dots,q_{k},p_{k},\tq_{1},\tp_{1},\dots,\tq_{\ell},\tp_{\ell}) \bigg| \leq C \hbar^{\frac{1}{2}-\delta}
		\end{align}
		and for $0 \leq k \leq N_1$, $1\leq\ell\leq N_2$, and $1\leq j\leq \ell$
		\begin{align}\label{eq:prop_tR_22klj}
			\bigg|\int\dq^{\otimes k} \ddp^{\otimes k} \dtq^{\otimes \ell} \dtp^{\otimes \ell} \, \varphi^{\otimes k} \phi^{\otimes k} \tilde\varphi^{\otimes \ell} \tilde\phi^{\otimes \ell} \nabla_{\tp_j}\cdot (\widetilde{\mathcal{R}}
			_{22,k,\ell})_{j}(q_{1},p_{1},\dots,q_{k},p_{k},\tq_{1},\tp_{1},\dots,\tq_{\ell},\tp_{\ell}) \bigg| \leq C \hbar^{\frac{1}{2}-\delta},
		\end{align}
		where  $C$ depends on $\varphi, \phi, \tilde\varphi, \tilde\phi, f, k, \ell$.
	\end{Proposition}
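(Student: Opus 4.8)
The plan is to fix $1\le j\le k$, bound the $j$-th component $(\widetilde{\mathcal{R}}_{11,k,\ell})_j$ and then sum over $j$; the estimate for $(\widetilde{\mathcal{R}}_{22,k,\ell})_j$ follows by exchanging the two species. The first step is purely algebraic: I would recognise the first line of $(\widetilde{\mathcal{R}}_{11,k,\ell})_j$ as a $(k+1,\ell)$-particle Husimi-type object by keeping $(q_{k+1},p_{k+1})$ as a free phase-space variable, writing $y=w_{k+1}$, $v=u_{k+1}$, and using $\hbar^3=N^{-1}$. Moving the operators into the form $\gamma^{(k+1,\ell)}_{N_1,N_2,t}$ costs only a sign $(-1)^k\cdot(-1)^k=+1$, so that, since $\nabla V_{11}(q_j-q_{k+1})$ can be pulled inside the $w,u$-integral defining $m^{(k+1,\ell)}_{N_1,N_2,t}$,
\begin{equation*}
	(\widetilde{\mathcal{R}}_{11,k,\ell})_j=\frac{1}{(2\pi)^3}\iint\dd q_{k+1}\dd p_{k+1}\int(\dw\du)^{\otimes(k+1)}(\dtw\dtu)^{\otimes\ell}\,G_j\,\big(f^\hbar_{q,p}(w)\overline{f^\hbar_{q,p}(u)}\big)^{\otimes(k+1)}\big(f^\hbar_{\tq,\tp}(\tw)\overline{f^\hbar_{\tq,\tp}(\tu)}\big)^{\otimes\ell}\gamma^{(k+1,\ell)}_{N_1,N_2,t}(u;w),
\end{equation*}
where $G_j:=\int_0^1\dd s\,\nabla V_{11}(su_j+(1-s)w_j-w_{k+1})-\nabla V_{11}(q_j-q_{k+1})$. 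In other words, the entire content of this remainder is the replacement of $\nabla V_{11}$ at the ``smeared'' argument by its value at the lattice centres $q_j-q_{k+1}$.

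Next I would pair this expression with $\varphi^{\otimes k}\phi^{\otimes k}\tilde\varphi^{\otimes\ell}\tilde\phi^{\otimes\ell}$, integrate by parts in $p_j$ to move $\nabla_{p_j}$ onto $\phi$, and then estimate. Since $|f^\hbar_{q,p}(w)|$ is independent of $p$ and there is no test function in $p_{k+1}$, the oscillatory factor $e^{\frac{\ii}{\hbar}p_{k+1}\cdot(w_{k+1}-u_{k+1})}$ must be kept: integrating $p_{k+1}$ produces $(2\pi\hbar)^3\delta(w_{k+1}-u_{k+1})$, while the remaining momentum integrations (against $\phi$) can be handled by absolute values. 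Because $f$ is supported in $B_{R_1}$, on the effective region one has $|w_i-q_i|,|u_i-q_i|\le R_1\sqrt\hbar$ and $|w_{k+1}-q_{k+1}|\le R_1\sqrt\hbar$, so the argument of $\nabla V_{11}$ inside $G_j$ differs from $q_j-q_{k+1}$ by at most $2R_1\sqrt\hbar$; hence $|G_j|$ is controlled pointwise by the modulus of continuity of $\nabla V_{11}$ at scale $\sqrt\hbar$, uniformly in all the remaining variables, and can be pulled out of the integral.

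To convert the modulus of continuity into the quantitative rate $\hbar^{1/2-\delta}$, I would split $\nabla V_{11}=\nabla V_{11}^{\le\Lambda}+\nabla V_{11}^{>\Lambda}$ by a Fourier cutoff at frequency $\Lambda$: the low-frequency part is Lipschitz with constant growing polynomially in $\Lambda$, contributing $\lesssim\Lambda^{c}\sqrt\hbar$ to $|G_j|$, and the high-frequency part is uniformly small in terms of the Fourier tail of $V_{11}$; choosing $\Lambda$ a suitable negative power of $\hbar$ balances the two and gives $|G_j|\le C_\delta\hbar^{1/2-\delta}$. After factoring this out, the leftover phase-space integral of the absolute value of the coherent-state quadratic form is bounded, uniformly in $t$, by combining Cauchy--Schwarz with the passivity $0\le m^{(k+1,\ell)}_{N_1,N_2,t}\le1$ and its $L^1$-norm from Lemma \ref{lem:prop_kHusimi}, together with the number-operator identity $\langle\Psi_{N_1,N_2,t},\cN_1^{k+1}\cN_2^\ell\Psi_{N_1,N_2,t}\rangle=N_1^{k+1}N_2^\ell$ of Lemma \ref{lem:estimate-num-op-k} and $\hbar^3=N^{-1}$; this factor is $O(1)$. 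Summing over $1\le j\le k$ yields \eqref{eq:prop_tR_11klj}, and repeating the whole argument with the species interchanged yields \eqref{eq:prop_tR_22klj}.

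The step I expect to be the main obstacle is carrying out the momentum integrations with enough care to legitimately freeze $\nabla V_{11}$ at the lattice centres while keeping the $p_{k+1}$-oscillation: this is the point where the limited (Fourier-side) regularity of $V_{11}$ enters, and it is precisely the interpolation between the Lipschitz low-frequency piece and the small high-frequency tail that forces the arbitrarily small loss $\hbar^{-\delta}$. A secondary point requiring care is the bookkeeping of the fermionic reorderings in the identification above, and verifying that $t$-uniformity is retained --- which works because $G_j$ is already $O(\hbar^{1/2-\delta})$, so only the crude bound $m^{(k+1,\ell)}_{N_1,N_2,t}\le1$ (rather than a kinetic-energy estimate, as in Proposition \ref{prop:R1kell}) is needed for the remaining factor.
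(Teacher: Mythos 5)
Your reduction of the remainder to the single difference $G_j=\int_0^1\dd s\,\nabla V_{11}(su_j+(1-s)w_j-w_{k+1})-\nabla V_{11}(q_j-q_{k+1})$, the integration by parts in $p_j$, the collapse of the $p_{k+1}$-integral to a Dirac delta, and the use of the compact support of $f$ to see that the argument shift is $O(\sqrt\hbar)$ all match the paper's proof. But the way you propose to finish has a genuine gap: you claim that after pulling out $|G_j|\lesssim\hbar^{1/2-\delta}$ "the remaining momentum integrations (against $\phi$) can be handled by absolute values" and that the leftover factor is $O(1)$ "by Cauchy--Schwarz with the passivity $0\le m^{(k+1,\ell)}\le1$ and its $L^1$-norm." These two statements are incompatible. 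To extract $G_j$ pointwise you must take absolute values inside the $w,u$-integrals, and at that moment the integrand is no longer the Husimi measure (the phases $e^{\frac{\ii}{\hbar}p_m\cdot(w_m-u_m)}$ are exactly what make $m^{(k+1,\ell)}$ nonnegative and bounded), so passivity and the $L^1$-identity of Lemma \ref{lem:prop_kHusimi} are not available. If, alternatively, you really bound the $p_1,\dots,p_k,\tp_1,\dots,\tp_\ell$ integrals by $\|\phi\|_{L^1}$-type absolute values, the oscillation in these variables is lost and a direct count (coherent-state normalizations $\hbar^{-3/2}$ per pair, $\int\du_m|f((u_m-q_m)/\sqrt\hbar)|\sim\hbar^{3/2}$, $\int\dq_m|f|\sim\hbar^{3/2}$, and $\langle\cN_1^{k+1}\cN_2^{\ell}\rangle\le N^{k+\ell+1}=\hbar^{-3(k+\ell+1)}$) gives only $C\hbar^{\frac12-\frac32(k+\ell)}$, which diverges for $k+\ell\ge1$. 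The oscillation in the remaining momenta must be exploited quantitatively: each relative variable $w_m-u_m$ has to be confined to a scale close to $\hbar$, not merely to the $\sqrt\hbar$-support of the coherent states.

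This is exactly where the paper puts its effort, and it is also the true source of the $\hbar^{-\delta}$ loss, which you instead attribute to the regularity of $V_{11}$. The paper takes $\nabla V_{11}$ Lipschitz (its reading of Assumption \ref{ass:main}.\ref{item:Interaction}) so the potential difference gives a clean $C\sqrt\hbar$ with no loss; it then inserts the partition \eqref{space_split} of each $w_n-u_n$ into $\Omega_\hbar^\alpha$ and its complement, uses a non-stationary-phase bound (\cite[Lemma 2.5]{Chen2021}, relying on smoothness of the test functions) to gain $\hbar^{(1-\alpha)s}$ per variable outside $\Omega_\hbar^\alpha$ and the volume bound $\hbar^{3\alpha}$ inside, applies Cauchy--Schwarz to reduce to $\|b_{\tw_\ell}\cdots a_{w_1}a_y\Psi\|^2$ and the number-operator bound, and finally optimizes $\alpha\in(\tfrac12,1)$ and $s$; the mismatch $3\alpha<3$ on the near-diagonal region is what produces $\hbar^{1/2-\delta}$. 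Your Fourier-cutoff interpolation for $V_{11}$ is a reasonable way to quantify a modulus of continuity, but under the literal hypothesis $\widehat V_{11}\in C_0$ (no stated integrability) the "small high-frequency tail" step is itself not justified, and in any case it does not address the phase-space counting above, which is the step your proposal leaves unproved.
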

	
	\begin{proof}
		For the sake of convenience, let us denote the left-hand side of the inequality \eqref{eq:prop_tR_11klj} we aim to control as $\widetilde{P}_{11,k,\ell,j}$. Subsequently, we can express this as:
		\begin{align*}
			&\widetilde{P}_{11,k,\ell,j}\\
			&=\bigg|\int\dq^{\otimes k} \ddp^{\otimes k} \dtq^{\otimes \ell} \dtp^{\otimes \ell} \, \varphi^{\otimes k} \phi^{\otimes k} \tilde\varphi^{\otimes \ell} \tilde\phi^{\otimes \ell} \nabla_{p_j}\cdot\\
			&\quad\bigg(\frac{1}{(2\pi)^3}\int(\dw\du)^{\otimes k}(\dtw\dtu)^{\otimes\ell}\int\dy\left[\int_{0}^{1}\dd s\,\nabla V_{11}(su_{j}+(1-s)w_{j}-y)\right]\\
			&\quad \left(f_{q,p}^{\hbar}(w)\overline{f_{q,p}^{\hbar}(u)}\right)^{\otimes k}\left(f_{\tq,\tp}^{\hbar}(\tw)\overline{f_{\tq,\tp}^{\hbar}(\tu)}\right)^{\otimes\ell}
			\iint\dq_{k+1} \dd  p_{k+1} \ f_{q_{k+1},p_{k+1}}^{\hbar}(y)\int\dd v\ \overline{f_{{q_{k+1}},{p_{k+1}}}^{\hbar}(v)}\\
			&\quad\left<b_{\tw_{\ell}}\dots b_{\tw_{1}}a_{w_{k}}\cdots a_{w_{1}}a_{y}\Psi_{N_1, N_2, t}, b_{\tu_{\ell}}\dots b_{\tu_{1}}a_{u_{k}}\cdots a_{u_{1}}a_{v}\Psi_{N_1, N_2, t}\right>\\
			& \quad-\frac{1}{(2\pi)^3}\iint\dd q_{k+1}\dd p_{k+1}\,\nabla V_{11}(q_{j}-q_{k+1})m_{N_{1},N_{2},t}^{(k+1,\ell)}(q_1, p_1, \dots, q_{k+1}, p_{k+1},\tq_1, \tp_1, \dots, \tq_\ell, \tp_\ell)\bigg)\bigg|\\
			&=\frac{1}{(2\pi)^3}\bigg|\int\dq^{\otimes k} \ddp^{\otimes k} \dtq^{\otimes \ell} \dtp^{\otimes \ell} \, \varphi^{\otimes k} \nabla_{p_j}\phi^{\otimes k} \tilde\varphi^{\otimes \ell} \tilde\phi^{\otimes \ell} \cdot\int(\dw\du)^{\otimes k}(\dtw\dtu)^{\otimes\ell}\iint\dy\dd v \iint\dq_{k+1} \dd  p_{k+1} \\
			&\quad \left(f_{q,p}^{\hbar}(w)\overline{f_{q,p}^{\hbar}(u)}\right)^{\otimes k}\left(f_{\tq,\tp}^{\hbar}(\tw)\overline{f_{\tq,\tp}^{\hbar}(\tu)}\right)^{\otimes\ell}
			f_{q_{k+1},p_{k+1}}^{\hbar}(y)\ \overline{f_{{q_{k+1}},{p_{k+1}}}^{\hbar}(v)}\\
			&\quad\int_{0}^{1}\dd s\left[\nabla V_{11}(su_{j}+(1-s)w_{j}-y)-\nabla V_{11}(q_{j}-q_{k+1})\right]\\
			&\quad\left<b_{\tw_{\ell}}\dots b_{\tw_{1}}a_{w_{k}}\cdots a_{w_{1}}a_{y}\Psi_{N_1, N_2, t}, b_{\tu_{\ell}}\dots b_{\tu_{1}}a_{u_{k}}\cdots a_{u_{1}}a_{v}\Psi_{N_1, N_2, t}\right>\bigg|\\
			&=\frac{1}{(2\pi)^{3}\hbar^\frac{3}{2}}\bigg|\int\dq^{\otimes k} \ddp^{\otimes k} \dtq^{\otimes \ell} \dtp^{\otimes \ell} \, \varphi^{\otimes k} \nabla_{p_j}\phi^{\otimes k} \tilde\varphi^{\otimes \ell} \tilde\phi^{\otimes \ell} \cdot\int(\dw\du)^{\otimes k}(\dtw\dtu)^{\otimes\ell}\iint\dy\dd v \iint\dq_{k+1} \dd  p_{k+1} \\
			&\quad \left(f_{q,p}^{\hbar}(w)\overline{f_{q,p}^{\hbar}(u)}\right)^{\otimes k}\left(f_{\tq,\tp}^{\hbar}(\tw)\overline{f_{\tq,\tp}^{\hbar}(\tu)}\right)^{\otimes\ell}
			e^{\frac{\mathrm{i}}{\hbar} p_{k+1} \cdot (y - v) } f \left(\frac{y-q_{k+1}}{\sqrt{\hbar}} \right) f \left(\frac{v-q_{k+1}}{\sqrt{\hbar}} \right) \\
			&\quad\int_{0}^{1}\dd s\left[\nabla V_{11}(su_{j}+(1-s)w_{j}-y)-\nabla V_{11}(q_{j}-q_{k+1})\right]\\
			&\quad\left<b_{\tw_{\ell}}\dots b_{\tw_{1}}a_{w_{k}}\cdots a_{w_{1}}a_{y}\Psi_{N_1, N_2, t}, b_{\tu_{\ell}}\dots b_{\tu_{1}}a_{u_{k}}\cdots a_{u_{1}}a_{v}\Psi_{N_1, N_2, t}\right>\bigg|\\
			&=\hbar^\frac{3}{2}\bigg|\int\dq^{\otimes k} \ddp^{\otimes k} \dtq^{\otimes \ell} \dtp^{\otimes \ell} \, \varphi^{\otimes k} \nabla_{p_j}\phi^{\otimes k} \tilde\varphi^{\otimes \ell} \tilde\phi^{\otimes \ell} \cdot\int(\dw\du)^{\otimes k}(\dtw\dtu)^{\otimes\ell}\int\dy \int\dq_{k+1} \\
			&\quad \left(f_{q,p}^{\hbar}(w)\overline{f_{q,p}^{\hbar}(u)}\right)^{\otimes k}\left(f_{\tq,\tp}^{\hbar}(\tw)\overline{f_{\tq,\tp}^{\hbar}(\tu)}\right)^{\otimes\ell}
			\left| f \left(\frac{y-q_{k+1}}{\sqrt{\hbar}} \right) \right|^2 \\
			&\quad\int_{0}^{1}\dd s\left[\nabla V_{11}(su_{j}+(1-s)w_{j}-y)-\nabla V_{11}(q_{j}-q_{k+1})\right]\\
			&\quad\left<b_{\tw_{\ell}}\dots b_{\tw_{1}}a_{w_{k}}\cdots a_{w_{1}}a_{y}\Psi_{N_1, N_2, t}, b_{\tu_{\ell}}\dots b_{\tu_{1}}a_{u_{k}}\cdots a_{u_{1}}a_{y}\Psi_{N_1, N_2, t}\right>\bigg|,
		\end{align*}
		where we have applied the $\hbar$-weighted Dirac-delta function $\delta_y (v) =  \frac{1}{(2\pi \hbar)^3}\int\dd p_{k+1}\  e^{\frac{\mathrm{i}}{\hbar}p_{k+1}\cdot(y-v)}$ in the last equality. To control the difference in the brackets, we use Assumption \ref{ass:main}.\ref{item:Interaction} that $\nabla V_{11}$ is Lipschitz continuous and get 
		\begin{equation}\label{seper_V}
			\begin{aligned}
				&\big|\nabla V_{11}(su_{j}+(1-s)w_{j}-y)-\nabla V_{11}(q_{j}-q_{k+1})\big|\leq C \big|(su_{j}+(1-s)w_{j}-y)-(q_{j}-q_{k+1})\big|\\
				\leq & C(|u_j-q_j|+|w_j-q_j|) +C |y-q_{k+1}|.
			\end{aligned}
		\end{equation}
		In accordance with Assumption \ref{ass:main}, the support of $f\left(\frac{y-q_{k+1}}{\sqrt{\hbar}} \right)$ is contained within $B_{R_1}$. Consequently, we establish the inequality $|y-q_{k+1}| \leq \sqrt{\hbar} R_1$. Similarly, employing the definitions of $f_{q_j,p_j}^{\hbar}(w_j)$ and $f_{q_j,p_j}^{\hbar}(u_j)$ as provided in \eqref{def_coherent}, we deduce that $|u_j-q_j|+|w_j-q_j| \leq \sqrt{\hbar} R_1$, following the same rationale. So we have
		\begin{align*}
			\widetilde{P}_{11,k,\ell,j}&\leq C\hbar^{2}\int\dq^{\otimes k} \dtq^{\otimes \ell}  \int(\dw\du)^{\otimes k}(\dtw\dtu)^{\otimes\ell}\int\dy \int\dq_{k+1}\\
			&\quad\bigg|\int\ddp^{\otimes k}\dtp^{\otimes \ell} \, \varphi^{\otimes k} \nabla_{p_j}\phi^{\otimes k} \tilde\varphi^{\otimes \ell} \tilde\phi^{\otimes \ell} \left(f_{q,p}^{\hbar}(w)\overline{f_{q,p}^{\hbar}(u)}\right)^{\otimes k}\left(f_{\tq,\tp}^{\hbar}(\tw)\overline{f_{\tq,\tp}^{\hbar}(\tu)}\right)^{\otimes\ell}
			\left| f \left(\frac{y-q_{k+1}}{\sqrt{\hbar}} \right) \right|^2 \\
			&\quad\left<b_{\tw_{\ell}}\dots b_{\tw_{1}}a_{w_{k}}\cdots a_{w_{1}}a_{y}\Psi_{N_1, N_2, t}, b_{\tu_{\ell}}\dots b_{\tu_{1}}a_{u_{k}}\cdots a_{u_{1}}a_{y}\Psi_{N_1, N_2, t}\right>\bigg|.
		\end{align*}
		Then we split the space by 
		\begin{align}\label{space_split}
			1 = \prod_{n=1}^k \left(\rchi_{(w_n-u_n)\in ({\Omega_\hbar^\alpha})^c}+ \rchi_{(w_n-u_n)\in {\Omega_\hbar^\alpha}} \right)
			\prod_{\kappa=1}^\ell \left(\rchi_{(\tw_\kappa-\tu_\kappa)\in ({\Omega_\hbar^\alpha})^c}+ \rchi_{(\tw_\kappa-\tu_\kappa)\in {\Omega_\hbar^\alpha}} \right)
		\end{align}
		where
		\begin{equation*}
			{\Omega_\hbar^\alpha} := \{x \in \bR^3:\ \max_{1\leq j \leq 3} |x_j|\leq \hbar^\alpha \}.
		\end{equation*}
		Then for all $\alpha \in (\frac{1}{2},1)$, we have
		\begin{align*}
			&\widetilde{P}_{11,k,\ell,j}\\
			&\leq C\hbar^{2}\int\dq^{\otimes k} \dtq^{\otimes \ell}  \int(\dw\du)^{\otimes k}(\dtw\dtu)^{\otimes\ell}\int\dy \int\dq_{k+1}\bigg|\int\ddp^{\otimes k}\dtp^{\otimes \ell} \, \varphi^{\otimes k} \nabla_{p_j}\phi^{\otimes k} \tilde\varphi^{\otimes \ell} \tilde\phi^{\otimes \ell}\\
			&\quad \prod_{n=1}^k \left(\rchi_{(w_n-u_n)\in ({\Omega_\hbar^\alpha})^c}+ \rchi_{(w_n-u_n)\in {\Omega_\hbar^\alpha}} \right)
			\prod_{\kappa=1}^\ell \left(\rchi_{(\tw_\kappa-\tu_\kappa)\in ({\Omega_\hbar^\alpha})^c}+ \rchi_{(\tw_\kappa-\tu_\kappa)\in {\Omega_\hbar^\alpha}} \right)\\
			&\quad \left(f_{q,p}^{\hbar}(w)\overline{f_{q,p}^{\hbar}(u)}\right)^{\otimes k}\left(f_{\tq,\tp}^{\hbar}(\tw)\overline{f_{\tq,\tp}^{\hbar}(\tu)}\right)^{\otimes\ell}
			\left| f \left(\frac{y-q_{k+1}}{\sqrt{\hbar}} \right) \right|^2 \\
			&\quad \left<b_{\tw_{\ell}}\dots b_{\tw_{1}}a_{w_{k}}\cdots a_{w_{1}}a_{y}\Psi_{N_1, N_2, t}, b_{\tu_{\ell}}\dots b_{\tu_{1}}a_{u_{k}}\cdots a_{u_{1}}a_{y}\Psi_{N_1, N_2, t}\right>\bigg|\\
			& \leq C\hbar^{2 - \frac{3}{2}(k+\ell)} \max_{\substack{0\leq \nu \leq k \\ 0\leq\zeta\leq\ell}}\int\dq^{\otimes k} \dtq^{\otimes \ell}  \int(\dw\du)^{\otimes k}(\dtw\dtu)^{\otimes\ell}\int\dy \int\dq_{k+1}\bigg|\int\ddp^{\otimes k}\dtp^{\otimes \ell} \varphi^{\otimes k} \nabla_{p_j}\phi^{\otimes k} \tilde\varphi^{\otimes \ell} \tilde\phi^{\otimes \ell} \\
			&\quad \left[ \rchi_{(w_1-u_1)\in {\Omega_\hbar^\alpha}} \cdots  \rchi_{(w_\nu-u_\nu)\in {\Omega_\hbar^\alpha}} \rchi_{(w_{\nu+1}-u_{\nu+1})\in ({\Omega_\hbar^\alpha})^c}\cdots\rchi_{(w_{k}-u_{k})\in ({\Omega_\hbar^\alpha})^c} \right]\\
			&\quad \left[ \rchi_{(\tw_1-\tu_1)\in {\Omega_\hbar^\alpha}} \cdots  \rchi_{(\tw_\zeta-\tu_\zeta)\in {\Omega_\hbar^\alpha}} \rchi_{(\tw_{\zeta+1}-\tu_{\zeta+1})\in ({\Omega_\hbar^\alpha})^c}\cdots\rchi_{(\tw_\ell-\tu_\ell)\in ({\Omega_\hbar^\alpha})^c} \right]\\
			&\quad  e^{\frac{i}{\hbar}\sum_{m=1}^k p_m \cdot (w_m -u_m)+\frac{i}{\hbar}\sum_{\sigma=1}^\ell \tp_\sigma \cdot (\tw_\sigma -\tu_\sigma)} \prod_{n=1}^k \left(f \left(\frac{w_n-q_n}{\sqrt{\hbar}}\right)f \left(\frac{u_n-q_n}{\sqrt{\hbar}}\right)\right)
			\left| f \left(\frac{y-q_{k+1}}{\sqrt{\hbar}} \right) \right|^2 \\
			&\quad  \prod_{\kappa=1}^\ell \left(f \left(\frac{\tw_\kappa-\tq_\kappa}{\sqrt{\hbar}}\right)f \left(\frac{\tu_\kappa-\tq_\kappa}{\sqrt{\hbar}}\right)\right)
			\left<b_{\tw_{\ell}}\dots b_{\tw_{1}}a_{w_{k}}\cdots a_{w_{1}}a_{y}\Psi_{N_1, N_2, t}, b_{\tu_{\ell}}\dots b_{\tu_{1}}a_{u_{k}}\cdots a_{u_{1}}a_{y}\Psi_{N_1, N_2, t}\right>\bigg|.
		\end{align*}
		In the last inequality, we expand the product into a sum of $2^{k+\ell}$ terms and select the one that maximizes the corresponding component. To illustrate this point, we choose a specific term enclosed within the brackets without loss of generality.
		
		We aim to simplify our computation by focusing on a specific component of the integral. In particular, we will examine the integral concerning $p^{\otimes k}$ and $\tp^{\otimes \ell}$. According to \cite[Lemma 2.5]{Chen2021} and the boundness of $\phi$, $\tilde\phi$, $\varphi$, and $\tilde\varphi$, we have the following relation:
		\begin{equation}\label{out_of_Omega}
			\begin{aligned}
				&\bigg|\int \ddp^{\otimes k} \dtp^{\otimes \ell} \nabla_{p_j}\phi^{\otimes k} \tilde\phi^{\otimes \ell}\varphi^{\otimes k}\tilde\varphi^{\otimes \ell}\cdot\rchi_{(w_{\nu+1}-u_{\nu+1})\in ({\Omega_\hbar^\alpha})^c}\rchi_{(w_{k}-u_{k})\in ({\Omega_\hbar^\alpha})^c} \\
				&\quad\rchi_{(\tw_{\zeta+1}-\tu_{\zeta+1})\in ({\Omega_\hbar^\alpha})^c}\cdots\rchi_{(\tw_\ell-\tu_\ell)\in ({\Omega_\hbar^\alpha})^c}
				e^{\frac{i}{\hbar}\sum_{m=1}^k p_m \cdot (w_m -u_m)+\frac{i}{\hbar}\sum_{\sigma=1}^\ell \tp_\sigma \cdot (\tw_\sigma -\tu_\sigma)}\bigg|\\
				&=\Bigg|\prod_{\substack{m = \nu + 1 \\ m \neq j}}^k \left(\int \ddp_m \rchi_{(w_m-u_m)\in ({\Omega_\hbar^\alpha})^c}  e^{\frac{i}{\hbar} p_m \cdot (w_m -u_m)} \phi_m(p_m)\right)\int \ddp_j \rchi_{(w_j-u_j)\in ({\Omega_\hbar^\alpha})^c}\nabla_{p_j}\phi_j(p_j)\\
				&\quad   e^{\frac{i}{\hbar} p_j \cdot (w_j -u_j)}\prod_{\sigma = \zeta + 1}^\ell \left(\int \ddp_\sigma \rchi_{(\tw_\sigma-\tu_\sigma)\in ({\Omega_\hbar^\alpha})^c}  e^{\frac{i}{\hbar} \tp_\sigma \cdot (\tw_\sigma- \tu_\sigma)}  \tilde\phi_\sigma(\tp_\sigma)\right)\\
				&\quad \int\ddp^{\otimes \nu} \dtp^{\otimes \zeta} e^{\frac{i}{\hbar}\sum\limits_{m=1}^\nu p_m \cdot (w_m -u_m) +\frac{i}{\hbar}\sum\limits_{\sigma=1}^\zeta \tp_\sigma \cdot (\tw_\sigma -\tu_\sigma)}\phi^{\otimes \nu}  \tilde\phi^{\otimes \zeta}\varphi^{\otimes k}\tilde\varphi^{\otimes \ell}\Bigg|\\    
				&\leq C_s \hbar^{(1-\alpha)(k+\ell-\nu-\zeta)s}
			\end{aligned}
		\end{equation}
		where the in the last line, $s$ is any fixed $s\in \mathbb{N}$ and $C_s$ is a constant depending on $s$.
		
		On the right-hand side of the equality above, we assume that $j > \nu$, as otherwise $\phi_j$ is part of $\phi^{\otimes \nu}$, and it doesn't introduce any additional complexity. Upon substituting this estimate back into $\widetilde{P}_{11,k,\ell,j}$, we obtain:
		\begin{align}
			\widetilde{P}_{11,k,\ell,j} &\leq {C_s} \max_{\substack{0\leq \nu \leq k \\ 0\leq\zeta\leq\ell}}\hbar^{2 - \frac{3}{2}(k+\ell) + (1-\alpha)(k+\ell-\nu-\zeta)s}\int\dq^{\otimes k} \dtq^{\otimes \ell}   \int(\dw\du)^{\otimes k}(\dtw\dtu)^{\otimes\ell} \nonumber\\
			&\quad \int\dy \int\dq_{k+1} \rchi_{(w_1-u_1)\in {\Omega_\hbar^\alpha}} \cdots  \rchi_{(w_\nu-u_\nu)\in {\Omega_\hbar^\alpha}}
			\rchi_{(\tw_1-\tu_1)\in {\Omega_\hbar^\alpha}} \cdots  \rchi_{(\tw_\zeta-\tu_\zeta)\in {\Omega_\hbar^\alpha}} \nonumber\\
			&\quad   \prod_{n=1}^k \left|f \left(\frac{w_n-q_n}{\sqrt{\hbar}}\right)f \left(\frac{u_n-q_n}{\sqrt{\hbar}}\right)\right|
			\left| f \left(\frac{y-q_{k+1}}{\sqrt{\hbar}} \right) \right|^2 \prod_{\upsilon=1}^\ell \left|f \left(\frac{\tw_\upsilon-\tq_\upsilon}{\sqrt{\hbar}}\right)f \left(\frac{\tu_\upsilon-\tq_\upsilon}{\sqrt{\hbar}}\right)\right|\nonumber\\
			&\quad \bigg|\left<b_{\tw_{\ell}}\dots b_{\tw_{1}}a_{w_{k}}\cdots a_{w_{1}}a_{y}\Psi_{N_1, N_2, t}, b_{\tu_{\ell}}\dots b_{\tu_{1}}a_{u_{k}}\cdots a_{u_{1}}a_{y}\Psi_{N_1, N_2, t}\right>\bigg|\nonumber\\    
			&\leq {C_s} \max_{\substack{0\leq \nu \leq k \nonumber\\ 0\leq\zeta\leq\ell}}\hbar^{2 - \frac{3}{2}(k+\ell) + (1-\alpha)(k+\ell-\nu-\zeta)s}\int\dq^{\otimes k} \dtq^{\otimes \ell}  \int(\dw\du)^{\otimes k}(\dtw\dtu)^{\otimes\ell}\nonumber\\  
			&\quad \int\dy \int\dq_{k+1} \rchi_{(w_1-u_1)\in {\Omega_\hbar^\alpha}} \cdots  \rchi_{(w_\nu-u_\nu)\in {\Omega_\hbar^\alpha}}
			\rchi_{(\tw_1-\tu_1)\in {\Omega_\hbar^\alpha}} \cdots  \rchi_{(\tw_\zeta-\tu_\zeta)\in {\Omega_\hbar^\alpha}} \nonumber\\
			&\quad   \prod_{n=1}^k \left|f \left(\frac{w_n-q_n}{\sqrt{\hbar}}\right)f \left(\frac{u_n-q_n}{\sqrt{\hbar}}\right)\right|
			\left| f \left(\frac{y-q_{k+1}}{\sqrt{\hbar}} \right) \right|^2 \prod_{\upsilon=1}^\ell \left|f \left(\frac{\tw_\upsilon-\tq_\upsilon}{\sqrt{\hbar}}\right)f \left(\frac{\tu_\upsilon-\tq_\upsilon}{\sqrt{\hbar}}\right)\right|\nonumber\\
			&\quad \|b_{\tw_{\ell}}\dots b_{\tw_{1}}a_{w_{k}}\cdots a_{w_{1}}a_{y}\Psi_{N_1, N_2, t}\|^2,\label{bound_P_11klj}  
		\end{align}
		where we utilize the Schwarz inequality, focusing solely on the $w$ part due to its symmetry with the $u$ part. Concerning the integrals with respect to $u_i$ for $1\leq i \leq \nu$, we ascertain that
		\begin{align}\label{int_ui}
			\int \du_i \rchi_{(w_i-u_i)\in {\Omega_\hbar^\alpha}} \left| f \left(\frac{u_i-q_i}{\sqrt{\hbar}}\right)\right| \leq {C_s}  \int \du_i \rchi_{(w_i-u_i)\in {\Omega_\hbar^\alpha}}  \leq {C_s} \hbar^{3\alpha}.
		\end{align}
		This same analysis extends to the second set of variables, $\tu_i$ for $1 \leq i \leq \zeta$. For the remaining $u_i$ for $\nu + 1 \leq i \leq k$, we perform a variable substitution, replacing $u_i$ with $\sqrt{\hbar}u_i + q_i$, which results in
		\begin{align}\label{int_ui_rest}
			\int \du_i  \left| f \left(\frac{u_i-q_i}{\sqrt{\hbar}}\right)\right| = \hbar^{\frac32} \int \du_i  \left| f \left(u_i\right)\right|=C \hbar^{\frac32}.
		\end{align}
		This substitution also applies to $\tu_i$ for $\zeta + 1 \leq i \leq \ell$. For the integrals involving $q_j$ for $1 \leq j \leq k$, we employ a similar variable substitution, replacing $q_j$ with $\sqrt{\hbar}q_j + w_j$, which leads to the following:
		\begin{align}\label{int_qj}
			\int \dq_j  \left| f \left(\frac{w_j-q_j}{\sqrt{\hbar}}\right)\right| = \hbar^{\frac32} \int \dq_j  \left| f \left(q_j\right)\right|=C \hbar^{\frac32}.
		\end{align}
		This computation is equally applicable to $q_{k+1}$ and $\tq_j$ for $1 \leq j \leq \ell$. Subsequently, we substitute \eqref{int_ui}, \eqref{int_ui_rest}, and \eqref{int_qj} into \eqref{bound_P_11klj}. Consequently, we obtain
		\begin{align*}
			\widetilde{P}_{11,k,\ell,j} &\leq C\max_{\substack{0\leq \nu \leq k \\ 0\leq\zeta\leq\ell}}\hbar^{2 - \frac{3}{2}(k+\ell) + (1-\alpha)(k+\ell-\nu-\zeta)s+3\alpha(\nu+\zeta)+\frac32(k-\nu+\ell-\zeta)+\frac32(k+1+\ell)} \\
			& \int\dw^{\otimes k}\dtw^{\otimes\ell}\int\dy \left<b_{\tw_{\ell}}\dots b_{\tw_{1}}a_{w_{k}}\cdots a_{w_{1}}a_{y}\Psi_{N_1, N_2, t}, b_{\tw_{\ell}}\dots b_{\tw_{1}}a_{w_{k}}\cdots a_{w_{1}}a_{y}\Psi_{N_1, N_2, t}\right>\\ 
			=& C\max_{\substack{0\leq \nu \leq k \\ 0\leq\zeta\leq\ell}} \hbar^{\frac72+\left((1-\alpha)s+\frac32\right)(k+\ell)+\left(-(1-\alpha)s+3\alpha-\frac32\right)(\nu+\zeta)}\\
			&\left<\Psi_{N_1, N_2, t},\cN_1(\cN_1 - 1)\cdots(\cN_1 - k) \cN_2(\cN_2 - 1)\cdots(\cN_2 - \ell + 1)\Psi_{N_1, N_2, t}\right>\\
			&\leq  C \max_{\substack{0\leq \nu \leq k \\ 0\leq\zeta\leq\ell}}\hbar^{\frac72+\left((1-\alpha)s+\frac32\right)(k+\ell)+\left(-(1-\alpha)s+3\alpha-\frac32\right)(\nu+\zeta)} N_1^{k+1} N_2^\ell\\
			&\leq  C \max_{\substack{0\leq \nu \leq k \\ 0\leq\zeta\leq\ell}}\hbar^{\frac72+\left((1-\alpha)s+\frac32\right)(k+\ell)+\left(-(1-\alpha)s+3\alpha-\frac32\right)(\nu+\zeta)} N^{k+\ell+1}\\
			=&  C \max_{\substack{0\leq \nu \leq k \\ 0\leq\zeta\leq\ell}}\hbar^{\frac72+\left((1-\alpha)s+\frac32\right)(k+\ell)+\left(-(1-\alpha)s+3\alpha-\frac32\right)(\nu+\zeta) - 3(k+\ell+1)}\\
			=& C\max_{\substack{0\leq \nu \leq k \\ 0\leq\zeta\leq\ell}} \hbar^{\frac12 + \left((1-\alpha)s-\frac32\right)(k+\ell-\nu-\zeta)+(3\alpha-3)(\nu+\zeta)}.
		\end{align*}
		
		Now, we choose
		\begin{align*}
			s = \left\lceil \frac{3(2\alpha-1)}{2(1-\alpha)} \right\rceil.
		\end{align*}
		Thus, we can select an appropriate $\alpha \in (\frac12, 1)$ to achieve the desired result:
		\begin{align*}
			\widetilde{P}_{11,k,\ell,j} &\leq C \hbar^{\frac12 + (3\alpha-3)(k+\ell-\nu-\zeta)+(3\alpha-3)(\nu+\zeta)} = C \hbar^{\frac12 + 3(\alpha-1)(k+\ell)} \leq C \hbar^{\frac12-\delta}.
		\end{align*}
		
		By reversing the roles of the two species, we can derive the corresponding inequality \eqref{eq:prop_tR_22klj}.
	\end{proof}
	
	\begin{Proposition}\label{prop:tildeR12kell}
		Under Assumption \ref{ass:main}, for arbitrary small $\delta>0$, we have: for $1 \leq k \leq N_1$, $0\leq\ell\leq N_2$, and $1\leq j\leq k$
		\begin{align}\label{eq:prop_tR_121klj}
			\bigg|\int\dq^{\otimes k} \ddp^{\otimes k} \dtq^{\otimes \ell} \dtp^{\otimes \ell} \, \varphi^{\otimes k} \phi^{\otimes k} \tilde\varphi^{\otimes \ell} \tilde\phi^{\otimes \ell} \nabla_{p_j}\cdot (\widetilde{\mathcal{R}}
			_{12,1,k,\ell})_{j}(q_{1},p_{1},\dots,q_{k},p_{k},\tq_{1},\tp_{1},\dots,\tq_{\ell},\tp_{\ell}) \bigg| \leq C \hbar^{\frac{1}{2}-\delta}
		\end{align}
		and for $0 \leq k \leq N_1$, $1\leq\ell\leq N_2$, and $1\leq j\leq \ell$
		\begin{align}\label{eq:prop_tR_122klj}
			\bigg|\int\dq^{\otimes k} \ddp^{\otimes k} \dtq^{\otimes \ell} \dtp^{\otimes \ell} \, \varphi^{\otimes k} \phi^{\otimes k} \tilde\varphi^{\otimes \ell} \tilde\phi^{\otimes \ell} \nabla_{\tp_j}\cdot (\widetilde{\mathcal{R}}
			_{12,2,k,\ell})_{j}(q_{1},p_{1},\dots,q_{k},p_{k},\tq_{1},\tp_{1},\dots,\tq_{\ell},\tp_{\ell}) \bigg| \leq C \hbar^{\frac{1}{2}-\delta},
		\end{align}
		where  $C$ depends on $\varphi, \phi, \tilde\varphi, \tilde\phi, f, k, \ell$.
	\end{Proposition}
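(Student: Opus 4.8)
The plan is to follow the proof of Proposition \ref{prop:tildeR11kell} essentially verbatim, with the single structural change that the auxiliary particle created/annihilated by the resolution of identity \eqref{eq:coherent_projection} now belongs to the \emph{other} species than the one carrying the momentum derivative. Denote the left-hand side of \eqref{eq:prop_tR_121klj} by $\widetilde{P}_{12,1,k,\ell,j}$. First I would integrate by parts in $p_j$, transferring $\nabla_{p_j}$ onto $\phi^{\otimes k}$, and combine the resulting integral with the subtracted mean-field term inside $(\widetilde{\mathcal{R}}_{12,1,k,\ell})_j$ so that $\nabla V_{12}$ enters only through the difference
\[
\int_0^1 \dd s\,\bigl[\nabla V_{12}(su_j+(1-s)w_j-\ty)-\nabla V_{12}(q_j-\tq_{\ell+1})\bigr].
\]
Then, exactly as in \eqref{out_of_Omega}, I would use the $\hbar$-weighted Dirac delta $\delta_{\ty}(\tv)=(2\pi\hbar)^{-3}\int\dd\tp_{\ell+1}\,e^{\frac{\ii}{\hbar}\tp_{\ell+1}\cdot(\ty-\tv)}$ to collapse $\tv=\ty$, which leaves the factor $\bigl|f\bigl((\ty-\tq_{\ell+1})/\sqrt{\hbar}\bigr)\bigr|^2$ together with the diagonal inner product $\langle b_{\tw_\ell}\cdots b_{\tw_1}b_{\ty}a_{w_k}\cdots a_{w_1}\Psi_{N_1,N_2,t},\,b_{\tu_\ell}\cdots b_{\tu_1}b_{\ty}a_{u_k}\cdots a_{u_1}\Psi_{N_1,N_2,t}\rangle$.

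For the difference of gradients I would invoke Assumption \ref{ass:main}.\ref{item:Interaction} (namely $\widehat{V}_{12}\in C_0$, which makes $\nabla V_{12}$ bounded and Lipschitz) to obtain, as in \eqref{seper_V},
\[
\bigl|\nabla V_{12}(su_j+(1-s)w_j-\ty)-\nabla V_{12}(q_j-\tq_{\ell+1})\bigr|\le C\bigl(|u_j-q_j|+|w_j-q_j|+|\ty-\tq_{\ell+1}|\bigr),
\]
and then use the compact support of $f$ in $B_{R_1}$ (Assumption \ref{ass:main}.\ref{item:Coherent}): here $|u_j-q_j|,|w_j-q_j|\le\sqrt{\hbar}R_1$ come from first-species coherent states and $|\ty-\tq_{\ell+1}|\le\sqrt{\hbar}R_1$ from the second-species one, so this factor still contributes $\hbar$. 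Next I would insert the partition of unity \eqref{space_split} over the differences $w_n-u_n$ $(1\le n\le k)$ and $\tw_\kappa-\tu_\kappa$ $(1\le\kappa\le\ell)$ using the cubes $\Omega_\hbar^\alpha$, expand into $2^{k+\ell}$ terms and keep the largest, and apply \cite[Lemma 2.5]{Chen2021} to the ``outside $\Omega_\hbar^\alpha$'' factors to gain $\hbar^{(1-\alpha)(k+\ell-\nu-\zeta)s}$ for arbitrary $s\in\bN$. Then I would use Schwarz in the $w$/$u$ variables, perform the substitutions $u_i\mapsto\sqrt{\hbar}u_i+q_i$, $q_j\mapsto\sqrt{\hbar}q_j+w_j$, $\tq_{\ell+1}\mapsto\sqrt{\hbar}\tq_{\ell+1}+\ty$, etc., exactly mirroring \eqref{int_ui}--\eqref{int_qj}, and bound the leftover norm by the number-operator moment, which in this case is $\langle\Psi_{N_1,N_2,t},\cN_1^{k}\cN_2^{\ell+1}\Psi_{N_1,N_2,t}\rangle\le CN^{k+\ell+1}$ by Lemma \ref{lem:estimate-num-op-k} (for \eqref{eq:prop_tR_122klj} the corresponding moment is $\cN_1^{k+1}\cN_2^{\ell}$, with the same total power $N^{k+\ell+1}$). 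Counting powers of $\hbar$ — $\hbar^{-3}=N$ absorbs $N^{k+\ell+1}$, each $\Omega_\hbar^\alpha$-integral gives $\hbar^{3\alpha}$, each Gaussian integral $\hbar^{3/2}$ — reproduces precisely the exponent
\[
\frac12+\Bigl((1-\alpha)s-\frac32\Bigr)(k+\ell-\nu-\zeta)+(3\alpha-3)(\nu+\zeta),
\]
and choosing $s=\bigl\lceil 3(2\alpha-1)/(2(1-\alpha))\bigr\rceil$ and then $\alpha\in(\frac12,1)$ close enough to $1$ yields $\hbar^{1/2-\delta}$. Finally, swapping the roles of the two species turns $\widetilde{\mathcal{R}}_{12,1,k,\ell}$ into $\widetilde{\mathcal{R}}_{12,2,k,\ell}$ and gives \eqref{eq:prop_tR_122klj}.

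Since the whole architecture is identical to the proof of Proposition \ref{prop:tildeR11kell}, there is no genuinely new analytic ingredient. The only point requiring care — and what I expect to be the main obstacle — is the bookkeeping of which species carries the extra particle, hence which mixed moment $\langle\cN_1^{a}\cN_2^{b}\rangle$ with $a+b=k+\ell+1$ one must control, and verifying in the Lipschitz estimate that the three displacement terms, although attached to coherent states of two different species, are all uniformly controlled by the single support radius $R_1$. This is routine once it is laid out explicitly.
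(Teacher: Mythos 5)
Your proposal matches the paper's own proof essentially step for step: the paper likewise reduces $(\widetilde{\mathcal{R}}_{12,1,k,\ell})_j$ to the scheme of Proposition \ref{prop:tildeR11kell}, collapses $\tv=\ty$ with the $\hbar$-weighted delta, uses the Lipschitz bound plus compact support of $f$ to gain $\sqrt{\hbar}$, applies the splitting \eqref{space_split} with \cite[Lemma 2.5]{Chen2021}, bounds the mixed moment $\langle\cN_1^{k}\cN_2^{\ell+1}\rangle\le N^{k+\ell+1}$, and finishes with the same choice of $s$ and $\alpha$, obtaining \eqref{eq:prop_tR_122klj} by exchanging the species. Your bookkeeping of the extra second-species particle and the resulting exponent is exactly what the paper does, so no gap.
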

	\begin{proof}
		Controlling $ \nabla_{p_j}\cdot (\widetilde{\mathcal{R}}_{12,1,k,\ell})_{j}$, which involves interactions between two species, closely resembles the single-species case. Therefore, we primarily follow the proof outlined in Proposition \ref{prop:tildeR11kell}, avoiding the redundancy of details that are largely consistent with the previous analysis. Instead, we focus on highlighting the distinctions.
		
		For ease of reference, we denote the left-hand side of the inequality \eqref{eq:prop_tR_121klj}, which we seek to manage, as $\widetilde{P}_{12,1,k,\ell,j}$. Subsequently, we can express it as follows:
		\begin{align*}
			\widetilde{P}_{12,1,k,\ell,j}
			&=\bigg|\int\dq^{\otimes k} \ddp^{\otimes k} \dtq^{\otimes \ell} \dtp^{\otimes \ell} \, \varphi^{\otimes k} \phi^{\otimes k} \tilde\varphi^{\otimes \ell} \tilde\phi^{\otimes \ell} \nabla_{p_j}\cdot\\
			&\quad\bigg(\frac{1}{(2\pi)^3}  \int (\dw\du)^{\otimes k}(\dtw\dtu)^{\otimes \ell}  \int \dty\,  \left[\int_{0}^{1}\dd s\,\nabla V_{12}(su_{j}+(1-s)w_{j}-\ty)\right]\\
			&\left( f^\hbar_{q,p}(w) \overline{f^\hbar_{q,p}(u)} \right)^{\otimes k}
			\left( f^\hbar_{\tq,\tp}(\tw) \overline{f^\hbar_{\tq,\tp}(\tu)} \right)^{\otimes \ell} \iint\dd \tq_{\ell+1}\dd \tp_{\ell+1}\ f_{\tq_{\ell+1},\tp_{\ell+1}}^{\hbar}(\ty)\int\dd \tv\ \overline{f_{\tq_{\ell+1},\tp_{\ell+1}}^{\hbar}(\tv)}\\
			&\langle b_{\tw_\ell}\cdots b_{\tw_1}  b_{\ty} a_{w_k}\cdots  a_{w_1}\Psi_{N_1, N_2, t}, b_{\tu_\ell}\cdots b_{\tu_1} b_{\tv} a_{u_k}\cdots  a_{u_1} \Psi_{N_1, N_2, t}\rangle\\
			& -\frac{1}{(2\pi)^{3}} \iint\dd \tq_{\ell+1}\dd \tp_{\ell+1}\nabla V_{12}(q_{j}-\tq_{\ell+1})m_{N_{1},N_{2},t}^{(k,\ell+1)}(q_1, p_1, \dots, q_k, p_k,\tq_1, \tp_1, \dots, \tq_{\ell+1}, \tp_{\ell+1})\bigg)\bigg|\\
			&=\hbar^\frac{3}{2}\bigg|\int\dq^{\otimes k} \ddp^{\otimes k} \dtq^{\otimes \ell} \dtp^{\otimes \ell} \, \varphi^{\otimes k} \phi^{\otimes k} \tilde\varphi^{\otimes \ell} \tilde\phi^{\otimes \ell} \nabla_{p_j}\cdot\int(\dw\du)^{\otimes k}(\dtw\dtu)^{\otimes\ell}\int\dty \int\dtq_{\ell+1} \\
			&\quad \left(f_{q,p}^{\hbar}(w)\overline{f_{q,p}^{\hbar}(u)}\right)^{\otimes k}\left(f_{\tq,\tp}^{\hbar}(\tw)\overline{f_{\tq,\tp}^{\hbar}(\tu)}\right)^{\otimes\ell}
			\left| f \left(\frac{\ty-\tq_{\ell+1}}{\sqrt{\hbar}} \right) \right|^2 \\
			&\quad\int_{0}^{1}\dd s\left[\nabla V_{12}(su_{j}+(1-s)w_{j}-\ty)-\nabla V_{12}(q_{j}-\tq_{\ell+1})\right]\\
			&\quad\left<b_{\tw_{\ell}}\dots b_{\tw_{1}}b_{\ty}a_{w_{k}}\cdots a_{w_{1}}\Psi_{N_1, N_2, t}, b_{\tu_{\ell}}\dots b_{\tu_{1}}b_{\ty}a_{u_{k}}\cdots a_{u_{1}}\Psi_{N_1, N_2, t}\right>\bigg|.
		\end{align*}
		By leveraging Assumption \ref{ass:main}.\ref{item:Interaction} that $\nabla V_{12}$ is Lipschitz continuous and considering the support of $f$, we arrive at the following result:
		\begin{equation}
			\begin{aligned}
				&\big|\nabla V_{12}(su_{j}+(1-s)w_{j}-\ty)-\nabla V_{12}(q_{j}-\tq_{\ell+1})\big|\leq C(|u_j-q_j|+|w_j-q_j|) +C |\ty-\tq_{\ell+1}|\leq C \sqrt{\hbar}.
			\end{aligned}
		\end{equation}
		
		Continuing further, we meticulously proceed through the proof of Proposition \ref{prop:tildeR11kell}, methodically applying the space split \eqref{space_split}, and drawing upon \cite[Lemma 2.5]{Chen2021}, for any positive integer $s$, which leads to:
		\begin{align*}
			&\widetilde{P}_{12,1,k,\ell,j}\\
			&\leq C\hbar^{2 - \frac{3}{2}(k+\ell)} \max_{\substack{0\leq \nu \leq k \\ 0\leq\zeta\leq\ell}}\int\dq^{\otimes k} \dtq^{\otimes \ell}  \int(\dw\du)^{\otimes k}(\dtw\dtu)^{\otimes\ell}\int\dty \int\dtq_{\ell+1}\bigg|\int\ddp^{\otimes k}\dtp^{\otimes \ell}  \varphi^{\otimes k} \nabla_{p_j} \phi^{\otimes k} \tilde\varphi^{\otimes \ell} \tilde\phi^{\otimes \ell} \\
			&\quad  \left[ \rchi_{(w_1-u_1)\in {\Omega_\hbar^\alpha}} \cdots  \rchi_{(w_\nu-u_\nu)\in {\Omega_\hbar^\alpha}} \rchi_{(w_{\nu+1}-u_{\nu+1})\in ({\Omega_\hbar^\alpha})^c}\cdots\rchi_{(w_{k}-u_{k})\in ({\Omega_\hbar^\alpha})^c} \right]\\
			&\quad \left[ \rchi_{(\tw_1-\tu_1)\in {\Omega_\hbar^\alpha}} \cdots  \rchi_{(\tw_\zeta-\tu_\zeta)\in {\Omega_\hbar^\alpha}} \rchi_{(\tw_{\zeta+1}-\tu_{\zeta+1})\in ({\Omega_\hbar^\alpha})^c}\cdots\rchi_{(\tw_\ell-\tu_\ell)\in ({\Omega_\hbar^\alpha})^c} \right]\\
			&\quad  e^{\frac{i}{\hbar}\sum_{m=1}^k p_m \cdot (w_m -u_m)+\frac{i}{\hbar}\sum_{\sigma=1}^\ell \tp_\sigma \cdot (\tw_\sigma -\tu_\sigma)} \prod_{n=1}^k \left(f \left(\frac{w_n-q_n}{\sqrt{\hbar}}\right)f \left(\frac{u_n-q_n}{\sqrt{\hbar}}\right)\right)
			\left| f \left(\frac{\ty-\tq_{\ell+1}}{\sqrt{\hbar}} \right) \right|^2 \\
			&\quad  \prod_{\kappa=1}^\ell \left(f \left(\frac{\tw_\kappa-\tq_\kappa}{\sqrt{\hbar}}\right)f \left(\frac{\tu_\kappa-\tq_\kappa}{\sqrt{\hbar}}\right)\right)
			\left<b_{\tw_{\ell}}\dots b_{\tw_{1}}b_{\ty}a_{w_{k}}\cdots a_{w_{1}}\Psi_{N_1, N_2, t}, b_{\tu_{\ell}}\dots b_{\tu_{1}}b_{\ty}a_{u_{k}}\cdots a_{u_{1}}\Psi_{N_1, N_2, t}\right>\bigg|\\    
			&\leq C \max_{\substack{0\leq \nu \leq k \\ 0\leq\zeta\leq\ell}}\hbar^{2 - \frac{3}{2}(k+\ell) + (1-\alpha)(k+\ell-\nu-\zeta)s} \int\dq^{\otimes k} \dtq^{\otimes \ell}  \int(\dw\du)^{\otimes k}(\dtw\dtu)^{\otimes\ell}  \\  
			&\quad \int\dty \int\dtq_{\ell+1} \rchi_{(w_1-u_1)\in {\Omega_\hbar^\alpha}} \cdots  \rchi_{(w_\nu-u_\nu)\in {\Omega_\hbar^\alpha}}
			\rchi_{(\tw_1-\tu_1)\in {\Omega_\hbar^\alpha}} \cdots  \rchi_{(\tw_\zeta-\tu_\zeta)\in {\Omega_\hbar^\alpha}} \\
			&\quad   \prod_{n=1}^k \left|f \left(\frac{w_n-q_n}{\sqrt{\hbar}}\right)f \left(\frac{u_n-q_n}{\sqrt{\hbar}}\right)\right|
			\left| f \left(\frac{\ty-\tq_{\ell+1}}{\sqrt{\hbar}} \right) \right|^2 \prod_{\upsilon=1}^\ell \left|f \left(\frac{\tw_\upsilon-\tq_\upsilon}{\sqrt{\hbar}}\right)f \left(\frac{\tu_\upsilon-\tq_\upsilon}{\sqrt{\hbar}}\right)\right|\\
			&\quad  \|b_{\tw_{\ell}}\dots b_{\tw_{1}}b_{\ty}a_{w_{k}}\cdots a_{w_{1}}\Psi_{N_1, N_2, t}\|^2\\
			&\leq C  \max_{\substack{0\leq \nu \leq k \\ 0\leq\zeta\leq\ell}}  \hbar^{\frac72+\left((1-\alpha)s+\frac32\right)(k+\ell)+\left(-(1-\alpha)s+3\alpha-\frac32\right)(\nu+\zeta)} \\
			&\quad  \int\dw^{\otimes k}\dtw^{\otimes\ell}\int\dty \left<b_{\tw_{\ell}}\dots b_{\tw_{1}}b_{\ty}a_{w_{k}}\cdots a_{w_{1}}\Psi_{N_1, N_2, t}, b_{\tw_{\ell}}\dots b_{\tw_{1}}b_{\ty}a_{w_{k}}\cdots a_{w_{1}}\Psi_{N_1, N_2, t}\right>\\ 
			&= C \max_{\substack{0\leq \nu \leq k \\ 0\leq\zeta\leq\ell}} \hbar^{\frac72+\left((1-\alpha)s+\frac32\right)(k+\ell)+\left(-(1-\alpha)s+3\alpha-\frac32\right)(\nu+\zeta)}\\
			&\quad \left<\Psi_{N_1, N_2, t},\cN_1(\cN_1 - 1)\cdots(\cN_1 - k + 1) \cN_2(\cN_2 - 1)\cdots(\cN_2 - \ell)\Psi_{N_1, N_2, t}\right>\\
			&\leq C  \max_{\substack{0\leq \nu \leq k \\ 0\leq\zeta\leq\ell}} \hbar^{\frac12 + \left((1-\alpha)s-\frac32\right)(k+\ell-\nu-\zeta)+(3\alpha-3)(\nu+\zeta)}
			\leq C \hbar^{\frac12-\delta},
		\end{align*}
		where we choose the values of $s$ and $\alpha$ in accordance with our earlier selection.
		
		The bound in \eqref{eq:prop_tR_122klj} can be easily obtained by interchanging the roles of the two species.
	\end{proof}
	
	\subsection{$\widehat{\cR}_{11,k,\ell}$, $\widehat{\cR}_{12,k,\ell}$, and $\widehat{\cR}_{22,k,\ell}$}
	Recall the remainder terms in Proposition \ref{prop:vla_k>1}, i.e., 
	\begin{equation}
		\begin{aligned}
			\widehat{\mathcal{R}}_{11,k,\ell}:= & \frac{\mathrm{i}\hbar^{2}}{2}\int(\dw\du)^{\otimes k}(\dtw\dtu)^{\otimes\ell}\sum_{j\neq i}^{k}\bigg[V_{11}(u_{j}-u_{i})-V_{11}(w_{j}-w_{i})\bigg]
			\left(f_{q,p}^{\hbar}(w)\overline{f_{q,p}^{\hbar}(u)}\right)^{\otimes k}\\
			& \left(f_{\tq,\tp}^{\hbar}(\tw)\overline{f_{\tq,\tp}^{\hbar}(\tu)}\right)^{\otimes\ell}
			\left<b_{\tw_{\ell}}\dots b_{\tw_{1}}a_{w_{k}}\cdots a_{w_{1}}\Psi_{N_1, N_2, t},b_{\tu_{\ell}}\dots b_{\tu_{1}}a_{u_{k}}\cdots a_{u_{1}}\Psi_{N_1, N_2, t}\right>,\\
			\widehat{\mathcal{R}}_{12,k,\ell}:= & \mathrm{i}\hbar^{2}\int(\dw\du)^{\otimes k}(\dtw\dtu)^{\otimes\ell}\sum_{j=1}^{k}\sum_{i=1}^{\ell}\bigg[V_{12}(u_{j}-\tu_{i})-V_{12}(w_{j}-\tw_{i})\bigg]\left(f_{q,p}^{\hbar}(w)\overline{f_{q,p}^{\hbar}(u)}\right)^{\otimes k}\\
			& \left(f_{\tq,\tp}^{\hbar}(\tw)\overline{f_{\tq,\tp}^{\hbar}(\tu)}\right)^{\otimes\ell}
			\left<b_{\tw_{\ell}}\dots b_{\tw_{1}}a_{w_{k}}\cdots a_{w_{1}}\Psi_{N_1, N_2, t},b_{\tu_{\ell}}\dots b_{\tu_{1}}a_{u_{k}}\cdots a_{u_{1}}\Psi_{N_1, N_2, t}\right>,\\
			\widehat{\mathcal{R}}_{22,k,\ell} := &  \frac{\mathrm{i}\hbar^2}{2}\int(\dw\du)^{\otimes k}(\dtw\dtu)^{\otimes\ell}\sum_{j\neq i}^{\ell}\bigg[V_{22}(\tu_{j}-\tu_{i})-V_{22}(\tw_{j}-\tw_{i})\bigg]
			\left(f_{q,p}^{\hbar}(w)\overline{f_{q,p}^{\hbar}(u)}\right)^{\otimes k}\\
			& \left(f_{\tq,\tp}^{\hbar}(\tw)\overline{f_{\tq,\tp}^{\hbar}(\tu)}\right)^{\otimes\ell} \left<b_{\tw_{\ell}}\dots b_{\tw_{1}}a_{w_{k}}\cdots a_{w_{1}}\Psi_{N_1, N_2, t},b_{\tu_{\ell}}\dots b_{\tu_{1}}a_{u_{k}}\cdots a_{u_{1}}\Psi_{N_1, N_2, t}\right>.
		\end{aligned}
	\end{equation}
	
	\begin{Proposition}\label{prop:hatR11kell}
		Under Assumption \ref{ass:main}, for arbitrary small $\delta>0$, we have: for $1 < k \leq N_1$, $0\leq\ell\leq N_2$
		\begin{align}\label{eq:prop_hR_11klj}
			\bigg|\int\dq^{\otimes k} \ddp^{\otimes k} \dtq^{\otimes \ell} \dtp^{\otimes \ell} \, \varphi^{\otimes k} \phi^{\otimes k} \tilde\varphi^{\otimes \ell} \tilde\phi^{\otimes \ell}\widehat{\mathcal{R}}_{11,k,\ell}(q_{1},p_{1},\dots,q_{k},p_{k},\tq_{1},\tp_{1},\dots,\tq_{\ell},\tp_{\ell}) \bigg| \leq C \hbar^{\frac52-\delta}
		\end{align}
		and for $0 \leq k \leq N_1$, $1<\ell\leq N_2$
		\begin{align}\label{eq:prop_hR_22klj}
			\bigg|\int\dq^{\otimes k} \ddp^{\otimes k} \dtq^{\otimes \ell} \dtp^{\otimes \ell} \, \varphi^{\otimes k} \phi^{\otimes k} \tilde\varphi^{\otimes \ell} \tilde\phi^{\otimes \ell} \widehat{\mathcal{R}}_{22,k,\ell}(q_{1},p_{1},\dots,q_{k},p_{k},\tq_{1},\tp_{1},\dots,\tq_{\ell},\tp_{\ell}) \bigg| \leq C \hbar^{\frac52-\delta},
		\end{align}
		where  $C$ depends on $\varphi, \phi, \tilde\varphi, \tilde\phi, f, k, \ell$.
	\end{Proposition}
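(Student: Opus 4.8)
The plan is to mimic the proof of Proposition~\ref{prop:tildeR11kell}; in fact the argument is somewhat simpler, since $\widehat{\mathcal{R}}_{11,k,\ell}$ already carries an explicit $\hbar^2$, contains no additional phase-space pair (so no $\hbar$-weighted Dirac-delta reduction is needed), and is tested against $\varphi^{\otimes k}\phi^{\otimes k}\tilde\varphi^{\otimes\ell}\tilde\phi^{\otimes\ell}$ with no $\nabla_{p_j}$ derivative. The second source of smallness is the factor $V_{11}(u_j-u_i)-V_{11}(w_j-w_i)$: since $\widehat V_{11}\in C_0$ implies $\nabla V_{11}\in L^\infty$, and $f$ is supported in $B_{R_1}$, the mean value theorem gives, on the support of the coherent states appearing,
\[
\big|V_{11}(u_j-u_i)-V_{11}(w_j-w_i)\big|\le\|\nabla V_{11}\|_\infty\big(|u_j-w_j|+|u_i-w_i|\big)\le C\sqrt{\hbar}.
\]
Thus, after bounding the (bounded) test functions and this difference in absolute value and summing the $k(k-1)$ terms in $\sum_{j\neq i}^{k}$, it suffices to show that
\[
\int(\dw\du)^{\otimes k}(\dtw\dtu)^{\otimes\ell}\Big|\prod_n A_n(w_n,u_n)\prod_\kappa\widetilde A_\kappa(\tw_\kappa,\tu_\kappa)\Big|\,\big|\langle b_{\tw_\ell}\cdots a_{w_1}\Psi_{N_1,N_2,t},\,b_{\tu_\ell}\cdots a_{u_1}\Psi_{N_1,N_2,t}\rangle\big|\le C\hbar^{-\delta},
\]
where $A_n(w_n,u_n):=\int\dq_n\ddp_n\,\varphi(q_n)\phi(p_n)\,f^\hbar_{q_n,p_n}(w_n)\overline{f^\hbar_{q_n,p_n}(u_n)}$ and $\widetilde A_\kappa$ is its tilde analogue; combining with the $\hbar^2\cdot\sqrt\hbar$ then yields \eqref{eq:prop_hR_11klj}.

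Next I would carry out the $q$- and $p$-integrations inside $A_n$. The $q_n$-integration gives $\hbar^{3/2}$ times a bounded function of $w_n-u_n$ supported in $\{|w_n-u_n|\le 2\sqrt\hbar R_1\}$, which compensates the $\hbar^{-3/2}$ prefactor of $f^\hbar_{q_n,p_n}(w_n)\overline{f^\hbar_{q_n,p_n}(u_n)}$. To handle the $p_n$-integration I would split the phase space exactly as in \eqref{space_split} into $\{|w_n-u_n|\le\hbar^\alpha\}$ and its complement, $\alpha\in(\tfrac12,1)$, expand into $2^{k+\ell}$ terms, and keep the one maximizing the contribution, say with $\nu$ non-tilde and $\zeta$ tilde indices lying in the small-difference region. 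On the complement, \cite[Lemma~2.5]{Chen2021} (non-stationary phase, i.e.\ repeated integration by parts in $p_n$) supplies a gain $\hbar^{(1-\alpha)s}$ per such index for any fixed $s\in\bN$; in the small-difference region the $u_n$-integral over a ball of radius $\hbar^\alpha$ is $O(\hbar^{3\alpha})$, and the complementary $u_n$-integral is $O(\hbar^{3/2})$. The integrand, after replacing the potential difference by its modulus, is symmetric under $(w,\tw)\leftrightarrow(u,\tu)$, so Schwarz's inequality reduces the operator inner product to $\|b_{\tw_\ell}\cdots b_{\tw_1}a_{w_k}\cdots a_{w_1}\Psi_{N_1,N_2,t}\|^2$; integrating this over the remaining $w$-variables and using that $\Psi_{N_1,N_2,t}$ is an eigenstate of $\cN_1,\cN_2$ (Lemma~\ref{lem:estimate-num-op-k}) produces
\[
\big\langle\Psi_{N_1,N_2,t},\cN_1(\cN_1-1)\cdots(\cN_1-k+1)\,\cN_2(\cN_2-1)\cdots(\cN_2-\ell+1)\,\Psi_{N_1,N_2,t}\big\rangle\le N_1^k N_2^\ell\le N^{k+\ell}=\hbar^{-3(k+\ell)}.
\]

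Collecting powers with $m:=k+\ell$ and $\mu:=\nu+\zeta\in\{0,\dots,m\}$, the $\hbar$-exponent becomes $\tfrac52+3(\alpha-1)\mu+\big((1-\alpha)s-\tfrac32\big)(m-\mu)$: the $\hbar^{-3m/2}$ from the coherent-state prefactors is cancelled by the $\hbar^{3m/2}$ from the $q$-integrations, while the $\hbar^{-3m}$ from the number operator is balanced against the $u$-integrations and the non-stationary-phase gains. Choosing first $\alpha$ close enough to $1$ that $3(1-\alpha)m<\delta$, and then $s$ large enough that $(1-\alpha)s\ge\tfrac32$, makes both $\mu$-dependent terms bounded below (by $-\delta$ and $0$ respectively), so the exponent is $\ge\tfrac52-\delta$ uniformly in $\mu$; this proves \eqref{eq:prop_hR_11klj}, and \eqref{eq:prop_hR_22klj} follows verbatim after interchanging the two species. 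I expect the only real difficulty to be this bookkeeping — checking that integrating out all $2(k+\ell)$ position variables together with the oscillatory-integral gain exactly absorbs the $\hbar^{-3(k+\ell)}$ coming from the number-operator moment, uniformly over all $2^{k+\ell}$ pieces of the space decomposition. All the analytic inputs (the non-stationary phase lemma, the compact support of $f$, and the eigenvalue identity for $\cN_\alpha$) are already available, so no new ideas are needed.
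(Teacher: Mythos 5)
Your proposal is correct and takes essentially the same route as the paper's proof: the explicit $\hbar^2$ combined with $|V_{11}(u_j-u_i)-V_{11}(w_j-w_i)|\le C\sqrt{\hbar}$ from the compact support of $f$ and the Lipschitz bound on $V_{11}$, the splitting \eqref{space_split} with the non-stationary-phase gain of \cite[Lemma 2.5]{Chen2021}, the Schwarz inequality reducing the pairing to $\|b_{\tw_\ell}\cdots b_{\tw_1}a_{w_k}\cdots a_{w_1}\Psi_{N_1,N_2,t}\|^2$, and the number-operator moment $N^{k+\ell}=\hbar^{-3(k+\ell)}$, leading to the same exponent $\tfrac52+\big((1-\alpha)s-\tfrac32\big)(k+\ell-\nu-\zeta)+3(\alpha-1)(\nu+\zeta)$. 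The only deviations are cosmetic — packaging the $(q_n,p_n)$-integrations into the kernels $A_n$ and fixing $\alpha$ before taking $s$ large, versus the paper's separate use of \eqref{int_ui}, \eqref{int_ui_rest}, \eqref{int_qj} and its choice $s=\lceil 3(2\alpha-1)/(2(1-\alpha))\rceil$ — and both yield the claimed bound $C\hbar^{5/2-\delta}$.
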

	\begin{proof}
		We designate the left-hand side of inequality \eqref{eq:prop_hR_11klj} as $\widehat{P}_{11,k,\ell}$. Certain aspects of its structure resemble those of $\widetilde{P}_{11,k,\ell,j}$. Consequently, we can incorporate elements from the proof of Proposition \ref{prop:tildeR11kell}. With the Assumption \ref{ass:main}.\ref{item:Interaction} that $\nabla V_{11}$ is bounded, we get
		\begin{equation}\nonumber
			\begin{aligned}
				&\left|V_{11}(u_{j}-u_{i})-V_{11}(w_{j}-w_{i})\right| \leq C |(u_{j}-u_{i})-(w_{j}-w_{i})| \\
				=& C |((u_j-q_j)-(w_j-q_j))-((u_i-q_i)-(w_i-q_i))| \leq C(|u_j-q_j|+|w_j-q_j|+|u_i-q_i|+|w_i-q_i|).
			\end{aligned}
		\end{equation}  
		Since $f$ is compactly supported, following the same analysis as in the proof of Proposition \ref{prop:tildeR11kell}, we have 
		$$\left|V_{11}(u_{j}-u_{i})-V_{11}(w_{j}-w_{i})\right|\leq C\sqrt{\hbar}.$$
		This leads to the following result:
		\begin{align}
			&\widehat{P}_{11,k,\ell}\nonumber\\
			&=\bigg|\int\dq^{\otimes k} \ddp^{\otimes k} \dtq^{\otimes \ell} \dtp^{\otimes \ell} \, \varphi^{\otimes k} \phi^{\otimes k} \tilde\varphi^{\otimes \ell} \tilde\phi^{\otimes \ell}\nonumber\\
			&\quad \frac{\mathrm{i}\hbar^{2}}{2}\int(\dw\du)^{\otimes k}(\dtw\dtu)^{\otimes\ell}\sum_{j\neq i}^{k}\bigg[V_{11}(u_{j}-u_{i})-V_{11}(w_{j}-w_{i})\bigg]
			\left(f_{q,p}^{\hbar}(w)\overline{f_{q,p}^{\hbar}(u)}\right)^{\otimes k}\nonumber\\
			&\quad  \left(f_{\tq,\tp}^{\hbar}(\tw)\overline{f_{\tq,\tp}^{\hbar}(\tu)}\right)^{\otimes\ell}
			\left<b_{\tw_{\ell}}\dots b_{\tw_{1}}a_{w_{k}}\cdots a_{w_{1}}\Psi_{N_1, N_2, t},b_{\tu_{\ell}}\dots b_{\tu_{1}}a_{u_{k}}\cdots a_{u_{1}}\Psi_{N_1, N_2, t}\right>\bigg|\nonumber\\
			&\leq C\hbar^{\frac52}\int(\dw\du)^{\otimes k}\bigg|\int\dq^{\otimes k} \ddp^{\otimes k} \dtq^{\otimes \ell} \dtp^{\otimes \ell} \, \varphi^{\otimes k} \phi^{\otimes k} \tilde\varphi^{\otimes \ell} \tilde\phi^{\otimes \ell}\int(\dtw\dtu)^{\otimes\ell}
			\left(f_{q,p}^{\hbar}(w)\overline{f_{q,p}^{\hbar}(u)}\right)^{\otimes k}\nonumber\\
			&\quad  \left(f_{\tq,\tp}^{\hbar}(\tw)\overline{f_{\tq,\tp}^{\hbar}(\tu)}\right)^{\otimes\ell}
			\left<b_{\tw_{\ell}}\dots b_{\tw_{1}}a_{w_{k}}\cdots a_{w_{1}}\Psi_{N_1, N_2, t},b_{\tu_{\ell}}\dots b_{\tu_{1}}a_{u_{k}}\cdots a_{u_{1}}\Psi_{N_1, N_2, t}\right>\bigg|\nonumber\\ 
			&=C\hbar^{\frac52}\int(\dw\du)^{\otimes k}\bigg|\int\dq^{\otimes k} \ddp^{\otimes k} \dtq^{\otimes \ell} \dtp^{\otimes \ell} \, \varphi^{\otimes k} \phi^{\otimes k} \tilde\varphi^{\otimes \ell} \tilde\phi^{\otimes \ell}\int(\dtw\dtu)^{\otimes\ell}
			\nonumber\\
			&\quad \prod_{n=1}^k \left(\rchi_{(w_n-u_n)\in ({\Omega_\hbar^\alpha})^c}+ \rchi_{(w_n-u_n)\in {\Omega_\hbar^\alpha}} \right)
			\prod_{\kappa=1}^\ell \left(\rchi_{(\tw_\kappa-\tu_\kappa)\in ({\Omega_\hbar^\alpha})^c}+ \rchi_{(\tw_\kappa-\tu_\kappa)\in {\Omega_\hbar^\alpha}} \right)\nonumber\\
			&\quad  \left(f_{q,p}^{\hbar}(w)\overline{f_{q,p}^{\hbar}(u)}\right)^{\otimes k}\left(f_{\tq,\tp}^{\hbar}(\tw)\overline{f_{\tq,\tp}^{\hbar}(\tu)}\right)^{\otimes\ell}
			\left<b_{\tw_{\ell}}\dots b_{\tw_{1}}a_{w_{k}}\cdots a_{w_{1}}\Psi_{N_1, N_2, t},b_{\tu_{\ell}}\dots b_{\tu_{1}}a_{u_{k}}\cdots a_{u_{1}}\Psi_{N_1, N_2, t}\right>\bigg|\nonumber\\ 
			&\leq {C\hbar^{\frac52 - \frac{3}{2}(k+\ell)}} \max_{\substack{0\leq \nu \leq k \nonumber\\ 
					0\leq\zeta\leq\ell}}\int(\dw\du)^{\otimes k}\bigg|\int\dq^{\otimes k} \ddp^{\otimes k} \dtq^{\otimes \ell} \dtp^{\otimes \ell} \, \varphi^{\otimes k} \phi^{\otimes k} \tilde\varphi^{\otimes \ell} \tilde\phi^{\otimes \ell}\int(\dtw\dtu)^{\otimes\ell}\nonumber\\
			&\quad \left[ \rchi_{(w_1-u_1)\in {\Omega_\hbar^\alpha}} \cdots  \rchi_{(w_\nu-u_\nu)\in {\Omega_\hbar^\alpha}} \rchi_{(w_{\nu+1}-u_{\nu+1})\in ({\Omega_\hbar^\alpha})^c}\cdots\rchi_{(w_{k}-u_{k})\in ({\Omega_\hbar^\alpha})^c} \right]\nonumber\\
			&\quad \left[ \rchi_{(\tw_1-\tu_1)\in {\Omega_\hbar^\alpha}} \cdots  \rchi_{(\tw_\zeta-\tu_\zeta)\in {\Omega_\hbar^\alpha}} \rchi_{(\tw_{\zeta+1}-\tu_{\zeta+1})\in ({\Omega_\hbar^\alpha})^c}\cdots\rchi_{(\tw_\ell-\tu_\ell)\in ({\Omega_\hbar^\alpha})^c} \right]\nonumber\\
			&\quad  e^{\frac{i}{\hbar}\sum_{m=1}^k p_m \cdot (w_m -u_m)+\frac{i}{\hbar}\sum_{\sigma=1}^\ell \tp_\sigma \cdot (\tw_\sigma -\tu_\sigma)} \prod_{n=1}^k \left(f \left(\frac{w_n-q_n}{\sqrt{\hbar}}\right)f \left(\frac{u_n-q_n}{\sqrt{\hbar}}\right)\right)      \nonumber\\
			&\quad  \prod_{\kappa=1}^\ell \left(f \left(\frac{\tw_\kappa-\tq_\kappa}{\sqrt{\hbar}}\right)f \left(\frac{\tu_\kappa-\tq_\kappa}{\sqrt{\hbar}}\right)\right)
			\left<b_{\tw_{\ell}}\dots b_{\tw_{1}}a_{w_{k}}\cdots a_{w_{1}}\Psi_{N_1, N_2, t}, b_{\tu_{\ell}}\dots b_{\tu_{1}}a_{u_{k}}\cdots a_{u_{1}}\Psi_{N_1, N_2, t}\right>\bigg|,\label{eq:P_11kl}
		\end{align}
		where we employ the space split \eqref{space_split} in the last equality. Next, we apply the control described in \eqref{out_of_Omega}. Given the absence of $\nabla_{p_j}$ in the current case, the analysis is further simplified, jinyeopc{for any positive integer $s$,} yielding:
		\begin{align*}
			&\widehat{P}_{11,k,\ell}\\
			&\leq C \max_{\substack{0\leq \nu \leq k \\ 0\leq\zeta\leq\ell}}
			{\hbar^{\frac52 - \frac{3}{2}(k+\ell) + (1-\alpha)(k+\ell-\nu-\zeta)s}}\int (\dw\du)^{\otimes k}\int\dq^{\otimes k} \dtq^{\otimes \ell}   \int(\dtw\dtu)^{\otimes\ell} \\
			&\quad \rchi_{(w_1-u_1)\in {\Omega_\hbar^\alpha}} \cdots  \rchi_{(w_\nu-u_\nu)\in {\Omega_\hbar^\alpha}}
			\rchi_{(\tw_1-\tu_1)\in {\Omega_\hbar^\alpha}} \cdots  \rchi_{(\tw_\zeta-\tu_\zeta)\in {\Omega_\hbar^\alpha}} \\
			&\quad   \prod_{n=1}^k \left|f \left(\frac{w_n-q_n}{\sqrt{\hbar}}\right)f \left(\frac{u_n-q_n}{\sqrt{\hbar}}\right)\right|
			\prod_{\upsilon=1}^\ell \left|f \left(\frac{\tw_\upsilon-\tq_\upsilon}{\sqrt{\hbar}}\right)f \left(\frac{\tu_\upsilon-\tq_\upsilon}{\sqrt{\hbar}}\right)\right|\|b_{\tw_{\ell}}\dots b_{\tw_{1}}a_{w_{k}}\cdots a_{w_{1}}\Psi_{N_1, N_2, t}\|^2\\
			&\leq C\max_{\substack{0\leq \nu \leq k \\ 0\leq\zeta\leq\ell}}\hbar^{\frac52 - \frac{3}{2}(k+\ell) + (1-\alpha)(k+\ell-\nu-\zeta)s+3\alpha(\nu+\zeta)+\frac32(k-\nu+\ell-\zeta)+\frac32(k+\ell)}\\
			&\quad \left<\Psi_{N_1, N_2, t},\cN_1(\cN_1 - 1)\cdots(\cN_1 - k + 1) \cN_2(\cN_2 - 1)\cdots(\cN_2 - \ell + 1)\Psi_{N_1, N_2, t}\right>\\
			&\leq  C \max_{\substack{0\leq \nu \leq k \\ 0\leq\zeta\leq\ell}}\hbar^{\frac52+\left((1-\alpha)s+\frac32\right)(k+\ell)+\left(-(1-\alpha)s+3\alpha-\frac32\right)(\nu+\zeta)}N^{k+\ell}\\
			&= C\max_{\substack{0\leq \nu \leq k \\ 0\leq\zeta\leq\ell}} \hbar^{\frac52 + \left((1-\alpha)s-\frac32\right)(k+\ell-\nu-\zeta)+(3\alpha-3)(\nu+\zeta)}\leq C \hbar^{\frac52-\delta}.
		\end{align*}
		In the preceding computation, we make use of \eqref{int_ui}, \eqref{int_ui_rest}, and \eqref{int_qj}, while ensuring that our choice of values for $s$ and $\alpha$ aligns with our previous selection.
		
		By exchanging the roles of the two species, we can establish the corresponding inequality \eqref{eq:prop_hR_22klj}.
	\end{proof}
	
	\begin{Proposition}\label{prop:hatR12kell}
		Under Assumption \ref{ass:main}, for arbitrary small $\delta>0$, we have: for $1 \leq k \leq N_1$, $1\leq\ell\leq N_2$
		\begin{align}\label{eq:prop_hR_12klj}
			\bigg|\int\dq^{\otimes k} \ddp^{\otimes k} \dtq^{\otimes \ell} \dtp^{\otimes \ell} \, \varphi^{\otimes k} \phi^{\otimes k} \tilde\varphi^{\otimes \ell} \tilde\phi^{\otimes \ell}\widehat{\mathcal{R}}_{12,k,\ell}(q_{1},p_{1},\dots,q_{k},p_{k},\tq_{1},\tp_{1},\dots,\tq_{\ell},\tp_{\ell}) \bigg| \leq C \hbar^{\frac52-\delta}
		\end{align}
		where  $C$ depends on $\varphi, \phi, \tilde\varphi, \tilde\phi, f, k, \ell$.
	\end{Proposition}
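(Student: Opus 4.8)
The plan is to run the argument for $\widehat{\mathcal{R}}_{11,k,\ell}$ from the proof of Proposition \ref{prop:hatR11kell} almost verbatim; the only structural novelty is that the potential difference $V_{12}(u_j-\tu_i)-V_{12}(w_j-\tw_i)$ now couples one variable of each species. Writing $\widehat{P}_{12,k,\ell}$ for the left-hand side of \eqref{eq:prop_hR_12klj}, the first step is to use that $\nabla V_{12}$ is bounded (Assumption \ref{ass:main}.\ref{item:Interaction}), hence $V_{12}$ is Lipschitz, together with the compact support of $f$ in $B_{R_1}$, to get, on the support of the coherent states,
\[
|V_{12}(u_j-\tu_i)-V_{12}(w_j-\tw_i)|\leq C\big(|u_j-q_j|+|w_j-q_j|+|\tu_i-\tq_i|+|\tw_i-\tq_i|\big)\leq C\sqrt{\hbar}.
\]
Since $k,\ell$ are fixed, the double sum $\sum_{j=1}^{k}\sum_{i=1}^{\ell}$ only contributes a multiplicative constant, and together with the prefactor $\hbar^2$ this already produces the $\hbar^{5/2}$ appearing in the analogue of \eqref{eq:P_11kl}.

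The second step is to insert the partition of unity \eqref{space_split} into the differences $w_n-u_n$ and $\tw_\kappa-\tu_\kappa$, expand into $2^{k+\ell}$ terms, retain the maximal one (indexed by $(\nu,\zeta)$), and on the "far" variables apply the non-stationary-phase estimate \cite[Lemma 2.5]{Chen2021} exactly as in \eqref{out_of_Omega}; since there is no momentum derivative $\nabla_{p_j}$ here, this is if anything slightly simpler than in Proposition \ref{prop:tildeR11kell}, and yields a factor $\hbar^{(1-\alpha)(k+\ell-\nu-\zeta)s}$ for any $s\in\bN$. One then applies Cauchy--Schwarz to replace the inner product by $\|b_{\tw_{\ell}}\cdots b_{\tw_{1}}a_{w_{k}}\cdots a_{w_{1}}\Psi_{N_1, N_2, t}\|^2$, evaluates the remaining $u_i$- and $q_j$-integrals with \eqref{int_ui}, \eqref{int_ui_rest}, \eqref{int_qj}, and bounds $\int\dw^{\otimes k}\dtw^{\otimes\ell}\,\|b_{\tw_{\ell}}\cdots a_{w_{1}}\Psi_{N_1, N_2, t}\|^2$ by $\langle\Psi_{N_1, N_2, t},\cN_1(\cN_1-1)\cdots(\cN_1-k+1)\cN_2(\cN_2-1)\cdots(\cN_2-\ell+1)\Psi_{N_1, N_2, t}\rangle\le CN_1^{k}N_2^{\ell}\le CN^{k+\ell}=\hbar^{-3(k+\ell)}$ using Lemma \ref{lem:estimate-num-op-k}.

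Counting powers of $\hbar$ then gives
\[
\widehat{P}_{12,k,\ell}\leq C\max_{\substack{0\le\nu\le k\\0\le\zeta\le\ell}}\hbar^{\frac52+\left((1-\alpha)s-\frac32\right)(k+\ell-\nu-\zeta)+(3\alpha-3)(\nu+\zeta)},
\]
which is exactly the bound obtained in the proof of Proposition \ref{prop:hatR11kell}; choosing $s=\left\lceil\frac{3(2\alpha-1)}{2(1-\alpha)}\right\rceil$ and then $\alpha\in(\tfrac12,1)$ sufficiently close to $1$ yields $\widehat{P}_{12,k,\ell}\le C\hbar^{\frac52+3(\alpha-1)(k+\ell)}\le C\hbar^{\frac52-\delta}$, as claimed. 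The only point that requires genuine checking --- the rest being a line-by-line transcription of the $\widehat{\mathcal{R}}_{11,k,\ell}$ computation with the double species sum absorbed into the constant --- is the opening display: one must verify that the compact support of $f$, recentered at $q_j$ for the non-tilde variables and at $\tq_i$ for the tilde variables, simultaneously forces all four distances $|u_j-q_j|,|w_j-q_j|,|\tu_i-\tq_i|,|\tw_i-\tq_i|\le\sqrt{\hbar}R_1$, so that the coupled difference is still $O(\sqrt\hbar)$. I do not anticipate any serious obstacle beyond this bookkeeping.
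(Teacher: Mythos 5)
Your proposal is correct and follows essentially the same route as the paper: bound the coupled difference $|V_{12}(u_j-\tu_i)-V_{12}(w_j-\tw_i)|$ by $C\sqrt{\hbar}$ using the Lipschitz bound on $V_{12}$ and the compact support of $f$, absorb the double sum into the constant, and then observe that what remains is exactly the quantity already estimated in the proof of Proposition \ref{prop:hatR11kell}, giving $C\hbar^{\frac52-\delta}$ with the same choice of $s$ and $\alpha$. The paper merely shortcuts your line-by-line repetition by noting that the resulting expression coincides with the right-hand side of the first inequality in \eqref{eq:P_11kl}.
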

	\begin{proof}
		The proof strategy for inequality \eqref{eq:prop_hR_12klj} is largely similar to that of Proposition \ref{prop:hatR11kell}, with our main focus on the distinguishing aspects. We designate the left-hand side of inequality \eqref{eq:prop_hR_12klj} as $\widehat{P}_{12,k,\ell}$. With the Assumption \ref{ass:main}.\ref{item:Interaction} that $\nabla V_{12}$ is bounded, we get
		\begin{align*}
			&\left|V_{12}(u_{j}-\tu_{i})-V_{12}(w_{j}-\tw_{i})\right| \leq C |(u_{j}-\tu_{i})-(w_{j}-\tw_{i})| \\
			& \leq  C(|u_j-q_j|+|w_j-q_j|+|\tu_i-\tq_i|+|\tw_i-\tq_i|)\leq C\sqrt{\hbar}.
		\end{align*}
		This leads to the following result:
		\begin{align*}
			\widehat{P}_{12,k,\ell}
			&=\bigg|\int\dq^{\otimes k} \ddp^{\otimes k} \dtq^{\otimes \ell} \dtp^{\otimes \ell} \, \varphi^{\otimes k} \phi^{\otimes k} \tilde\varphi^{\otimes \ell} \tilde\phi^{\otimes \ell}\\
			&\qquad \mathrm{i}\hbar^{2}\int(\dw\du)^{\otimes k}(\dtw\dtu)^{\otimes\ell}\sum_{j=1}^{k}\sum_{i=1}^{\ell}\bigg[V_{12}(u_{j}-\tu_{i})-V_{12}(w_{j}-\tw_{i})\bigg]\left(f_{q,p}^{\hbar}(w)\overline{f_{q,p}^{\hbar}(u)}\right)^{\otimes k}\\
			&\qquad \left(f_{\tq,\tp}^{\hbar}(\tw)\overline{f_{\tq,\tp}^{\hbar}(\tu)}\right)^{\otimes\ell}
			\left<b_{\tw_{\ell}}\dots b_{\tw_{1}}a_{w_{k}}\cdots a_{w_{1}}\Psi_{N_1, N_2, t},b_{\tu_{\ell}}\dots b_{\tu_{1}}a_{u_{k}}\cdots a_{u_{1}}\Psi_{N_1, N_2, t}\right>\bigg|\\
			&\leq C\hbar^{\frac52}\int(\dw\du)^{\otimes k}\bigg|\int\dq^{\otimes k} \ddp^{\otimes k} \dtq^{\otimes \ell} \dtp^{\otimes \ell} \, \varphi^{\otimes k} \phi^{\otimes k} \tilde\varphi^{\otimes \ell} \tilde\phi^{\otimes \ell}\int(\dtw\dtu)^{\otimes\ell}
			\left(f_{q,p}^{\hbar}(w)\overline{f_{q,p}^{\hbar}(u)}\right)^{\otimes k}\\
			&\qquad \left(f_{\tq,\tp}^{\hbar}(\tw)\overline{f_{\tq,\tp}^{\hbar}(\tu)}\right)^{\otimes\ell}
			\left<b_{\tw_{\ell}}\dots b_{\tw_{1}}a_{w_{k}}\cdots a_{w_{1}}\Psi_{N_1, N_2, t},b_{\tu_{\ell}}\dots b_{\tu_{1}}a_{u_{k}}\cdots a_{u_{1}}\Psi_{N_1, N_2, t}\right>\bigg|.
		\end{align*}
		The right-hand side of the inequality is identical to the right-hand side of the first inequality in \eqref{eq:P_11kl}, which is used to bound $\widehat{P}_{11,k,\ell}$ and is bounded by $C \hbar^{\frac52-\delta}$. Consequently, we immediately obtain $\widehat{P}_{12,k,\ell}\leq C \hbar^{\frac52-\delta}$.
	\end{proof}
	
	\section{Proof of the Main Results}
	\label{sec:Uniqueness Vlasov hierarchy}
	\begin{proof}[Proof of Theorem \ref{thm:main}]
		We apply the Dunford-Pettis theorem to derive the weak convergence of $m_{N_1,N_2,t}^{(k,\ell)}$ as $N\rightarrow\infty$, as demonstrated in the proof of Proposition 2.7, \cite{Chen2021}. More precisely, Lemma \ref{lem:prop_kHusimi} ensures the uniform $L^1$-boundedness of $m_{N_1,N_2,t}^{(k,\ell)}$. 
		For arbitrary $\epsilon>0$,  by taking $\delta=\epsilon$, 
		then for all $E\in \mathbb{R}^{6k+6\ell}$ with $\operatorname{Vol}(E)\le \delta$,
		it holds the uniform absolute continuity:
		
		\begin{equation*}
			\int \cdots \int_E m_{N_1,N_2,t}^{(k,\ell)} \leqslant\|m_{N_1,N_2,t}^{(k,\ell)}\|_{\infty} \operatorname{Vol}(E) \leqslant \varepsilon.
		\end{equation*}
		Furthermore, note that the first moment of $m_{N_1,N_2,t}^{(k,\ell)}$ is uniformly bounded by
		Lemma \ref{prop:finite-moment}. For any $\epsilon>0$, by takeing $r=\varepsilon^{-1}(2 \pi)^{3 k+3\ell} C(t)$, we can deduce the tightness:
		\begin{equation*}
			\begin{aligned}
				&\frac{1}{(2 \pi)^{3 k+3\ell}} \int \cdots \int_{\left|\mathbf{q}_{k}\right|+\left|\mathbf{p}_{k}\right| \geqslant r}
				\int_{\left|\tilde{\mathbf{q}}_{\ell}\right|+\left|\tilde{\mathbf{p}}_{\ell}\right| \geqslant r}
				(\dd q\dd p)^{\otimes k}(\dd \tq\dd \tp)^{\otimes \ell} m_{N_1,N_2,t}^{(k,\ell)} \\
				&\leqslant 
				\frac{1}{r} \frac{1}{(2 \pi)^{3 k+3\ell}} \int \ldots \int
				(\dd q\dd p)^{\otimes k}(\dd \tq\dd \tp)^{\otimes \ell}
				\big(|\mathbf{q}_{k}|+|\mathbf{p}_{k}| +|\tilde{\mathbf{q}}_{\ell}|+|\tilde{\mathbf{p}}_{\ell}| \big) m_{N_1,N_2,t}^{(k,\ell)} \leqslant \varepsilon .
			\end{aligned}
		\end{equation*}
		
		Together, these conditions are equivalent to the uniform integrability, and allow us to invoke the Dunford-Pettis theorem. As a result, it ensures the existence of a convergent subsequence of $m_{N_1,N_2,t}^{(k,\ell)}$, as $N\rightarrow\infty$. 
		
		Additionally, Cantor’s diagonal procedure shows that we can take the same convergent subsequence of $m_{N_1,N_2,t}^{(k,\ell)}$ for all $k,\ell > 1$. Then by the error estimates obtained in Proposition \ref{prop:R1kell}-\ref{prop:hatR12kell},  we can obtain that the limit satisfies the infinite Vlasov hierarchy \eqref{eq:BBGKY_limit} in the sense of distribution, by directly taking the limit in the weak formulation \eqref{BBGKY_k=1} and \eqref{BBGKY_k>1}. The uniqueness of this solution will be established later in Proposition \ref{prop:uniqueness_of_solution}, which implies that the sequence $m_{N_1,N_2,t}^{(k,\ell)}$ itself converges weakly to the solution of the
		infinite hierarchy.
	\end{proof}
	
	\begin{proof}[Proof of Theorem \ref{thm:main2}]
		Assuming the initial data is factorized as $m_{1}^{\otimes k}\otimes m_{2}^{\otimes \ell}$, we denote the solutions of the coupled Vlasov equations with initial data $m_{1}$ and $m_{2}$ as $m_{1,t}$ and $m_{2,t}$. These solutions satisfy
		\begin{align*}
			\partial_{t}m_{1,t}(q,p)+p\cdot\nabla_{q}m_{1,t}(q,p) & =\nabla_{q}\left((V_{11}*\rho_{1, t})(q)+(V_{12}*\rho_{2, t})(q)\right)\cdot\nabla_{p}m_{1,t}(q,p)\\
			\partial_{t}m_{2,t}(\tq,\tp)+\tp\cdot\nabla_{\tq}m_{2,t}(\tq,\tp) & =\nabla_{\tq}\left((V_{22}*\rho_{2, t})(\tq)+(V_{21}*\rho_{1, t})(\tq)\right)\cdot\nabla_{\tp}m_{2,t}(\tq,\tp).
		\end{align*}
		This implies that
		\begin{align*}
			& \partial_{t}\left(m_{1,t}^{\otimes k}m_{2,t}^{\otimes\ell}\right)\\
			& =\sum_{j=1}^{k}\left(m_{1,t}^{\otimes(k-1)}m_{2,t}^{\otimes\ell}\right)\partial_{t}m_{1,t}(q_{j},p_{j})+\sum_{i=1}^{\ell}\left(m_{1,t}^{\otimes k}m_{2,t}^{\otimes(\ell-1)}\right)\partial_{t}m_{2,t}(\tq_{i},\tp_{i})\\
			& =\sum_{j=1}^{k}\left(m_{1,t}^{\otimes(k-1)}m_{2,t}^{\otimes\ell}\right)\left(-p_{j}\cdot\nabla_{q_{j}}m_{1,t}(q_{j},p_{j})+\nabla_{q}\left((V_{11}*\rho_{1, t})(q_{j})+(V_{12}*\rho_{2, t})(q_{j})\right)\cdot\nabla_{p}m_{1,t}(q_{j},p_{j})\right)\\
			& \quad+\sum_{i=1}^{\ell}\left(m_{1,t}^{\otimes k}m_{2,t}^{\otimes(\ell-1)}\right)\left(-\tp_{j}\cdot\nabla_{\tq_{j}}m_{2,t}(\tq_{j},\tp_{j})+\nabla_{\tq_{j}}\left((V_{22}*\rho_{2, t})(\tq_{j})+(V_{21}*\rho_{1, t})(\tq_{j})\right)\cdot\nabla_{\tp_{j}}m_{2,t}(\tq_{j},\tp_{j})\right).
		\end{align*}
		By the definition of $\rho_{\beta,t}$, we have that $(V_{\alpha\beta}*\rho_{\beta,t})(q_j)=\frac{1}{(2\pi)^3} \int \dq_i\ddp_i V_{\alpha\beta}(q_j - q_i) m_{\alpha,t}(q_i,p_i)$.
		Then we have
		\begin{align*}
			& \partial_{t}\left(m_{1,t}^{\otimes k}m_{2,t}^{\otimes\ell}\right)+\left(\mathbf{p}_{k}\cdot\nabla_{\mathbf{q}_{k}}+\tilde{\mathbf{p}}_{\ell}\cdot\nabla_{\tilde{\mathbf{q}}_{\ell}}\right)\left(m_{1,t}^{\otimes k}m_{2,t}^{\otimes\ell}\right)\\
			& = \frac{1}{(2\pi)^3} \sum_{j=1}^{k} \nabla_{\mathbf{p}_j} \cdot \iint \dd q_{k+1}\dd p_{k+1} \nabla V_{11}(q_j - q_{k+1})\left(m_{1,t}^{\otimes (k + 1)}m_{2,t}^{\otimes\ell}\right)\\
			&+\frac{1}{(2\pi)^3} \sum_{j=1}^{k} \nabla_{\mathbf{p}_j} \cdot \iint \dd \tilde q_{\ell+1}\dd \tilde p_{\ell+1} \nabla V_{12}(q_j - \tilde q_{\ell+1})\left(m_{1,t}^{\otimes k}m_{2,t}^{\otimes(\ell + 1)}\right)\\
			&+ \frac{1}{(2\pi)^3} \sum_{j=1}^{\ell}  \nabla_{\mathbf{\tp}_j} \cdot \iint \dd q_{k+1}\dd p_{k+1} \nabla V_{21}(q_{k+1} - \tq_{j})\left(m_{1,t}^{\otimes (k + 1)}m_{2,t}^{\otimes\ell}\right)\\
			&+\frac{1}{(2\pi)^3} \sum_{j=1}^{\ell} \nabla_{\mathbf{\tp}_j} \cdot \iint \dd \tilde q_{\ell+1}\dd \tilde p_{\ell+1}\nabla V_{22}(\tq_j - \tilde q_{\ell+1})\left(m_{1,t}^{\otimes k}m_{2,t}^{\otimes(\ell + 1)}\right).
		\end{align*}
		This is a specific form of the infinite hierarchy \eqref{eq:BBGKY_limit}. Since the limit $m_{\infty,\infty,t}^{(k,\ell)}$ is the unique solution of the hierarchy, it implies that $m_{\infty,\infty,t}^{(k,\ell)} = m_{1,t}^{\otimes k}m_{2,t}^{\otimes\ell}$ for all $(k,\ell) \in \bN \times \bN \setminus \{(0,0)\}$. Note that $ \|m_{1,t}^{\otimes k}m_{2,t}^{\otimes\ell}\|_{L^1(\mathbb{R}^3k\times\mathbb{R}^\ell)} = n_1^k n_2^\ell$. Combining this result with \cite[Theorem 7.12]{Villani2003}, we deduce that $m_{N_1,N_2,t}^{(1,0)}$ and $m_{N_1,N_2,t}^{(0,1)}$ converge to the solutions $m_{1,t}$ and $m_{2,t}$ in the 1-Wasserstein distance.
	\end{proof}
	
	To ascertain the uniqueness of the infinite hierarchy, we adopt the approach outlined in \cite{Narnhofer1981}. Let $\mu_{\infty,\infty,t}^{(k,\ell)}$ be the characteristic function for the probability measure $(2\pi)^{-3(k+\ell)} n_1^{-k}n_2^{-\ell} m_{\infty,\infty,t}^{(k,\ell)}$ on $\bR^{6k}\times\bR^{6\ell}$ as
	\begin{equation}
		\begin{aligned}
			&\mu_{\infty,\infty,t}^{(k,\ell)}(\xi_{1},\eta_{1},\dots,\xi_{k},\eta_{k},\txi_{1},\teta_{1},\dots,\txi_{\ell},\teta_{\ell})\\
			&:=\frac{1}{(2\pi)^{3(k+\ell)} n_1^k n_2^\ell} \int \dd  m_{\infty,\infty,t}^{(k,\ell)} (q_1, p_1, \dots, q_k, p_k,\tq_1, \tp_1, \dots, \tq_\ell, \tp_\ell) e^{i\left(\sum_{j=1}^{k} (\xi_j p_j + \eta_j q_j)+\sum_{j=1}^{\ell} (\txi_j \tp_j + \teta_j \tq_j)\right)}
		\end{aligned}
	\end{equation}
	$\forall \xi_j, \eta_j, \txi_j, \teta_j \in \bR^3$. We denote $\widehat{V}_{\alpha\beta}$ to be the Fourier transform of $V_{\alpha\beta}$ for each $\alpha,\beta\in\{1,2\}$ according to the convention that $V_{\alpha\beta}(x)=\int \ddp \widehat{V}_{\alpha\beta}(p) e^{ipx}$. Then we have the following proposition:
	
	\begin{Proposition}\label{prop:uniqueness_of_solution}
		
		Assume $\widehat{V}_{\alpha\beta}\in C_0$ for all $\alpha,\beta\in\{1,2\}$. Then the two species Vlasov hierarchy \eqref{eq:BBGKY_limit} has a unique solution.    
	\end{Proposition}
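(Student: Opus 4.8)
The plan is to adapt the characteristic-function argument of Narnhofer and Sewell \cite{Narnhofer1981}. First I would recast \eqref{eq:BBGKY_limit} as a closed hierarchy for the characteristic functions $\mu^{(k,\ell)}_{\infty,\infty,t}$. Taking the Fourier transform of \eqref{eq:BBGKY_limit} in all phase-space variables, the free transport $\mathbf p_k\cdot\nabla_{\mathbf q_k}+\tilde{\mathbf p}_\ell\cdot\nabla_{\tilde{\mathbf q}_\ell}$ turns into the first-order operator $-\sum_{j=1}^k\eta_j\cdot\nabla_{\xi_j}-\sum_{i=1}^\ell\teta_i\cdot\nabla_{\txi_i}$ acting on $\mu^{(k,\ell)}$, while each of the four interaction terms becomes an operator raising the index of one of the two species; for instance the $V_{11}$-term becomes
\[
(\mathcal{C}^{11}_{k,\ell}\mu^{(k+1,\ell)})(\xi,\eta,\txi,\teta)=c_{k,\ell}\sum_{j=1}^{k}\int\mathrm{d}\zeta\,(\xi_j\cdot\zeta)\,\widehat V_{11}(\zeta)\,\mu^{(k+1,\ell)}(\ldots,\xi_j,\eta_j+\zeta,\ldots,\xi_{k+1}{=}0,\eta_{k+1}{=}{-}\zeta,\txi,\teta),
\]
and analogously for $V_{22}$ (raising $\ell$) and for $V_{12}$ (two such operators, one raising $k$, one raising $\ell$); the constants $c_{k,\ell}$ are finite because $n_1,n_2>0$ and $\mu^{(k,\ell)}$ carries the matching power of $(2\pi)^{-3}n_\alpha^{-1}$. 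Solving the transport part by characteristics puts the hierarchy into Duhamel form: $\mu^{(k,\ell)}_t$ at $(\xi,\eta,\dots)$ equals the initial datum evaluated along the free flow (which only translates the $\xi,\txi$ arguments) plus $\int_0^t\ds$ of the sum of the four collision operators applied to $\mu^{(k+1,\ell)}_s$ and $\mu^{(k,\ell+1)}_s$ at the correspondingly shifted arguments.

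Two a priori facts are then available for every solution: $|\mu^{(k,\ell)}_t|\le1$ for all $t,k,\ell$, since by Lemma \ref{lem:prop_kHusimi} the measure $(2\pi)^{-3(k+\ell)}n_1^{-k}n_2^{-\ell}m^{(k,\ell)}_{\infty,\infty,t}$ is a probability measure; and, under the hypothesis $\widehat V_{\alpha\beta}\in C_0$, the kernels $(\xi_j\cdot\zeta)\widehat V_{\alpha\beta}(\zeta)$ are $\zeta$-integrable, so each $\mathcal C^{\cdot}_{k,\ell}$ is bounded between suitable sup-norm spaces with a polynomial weight in $\xi,\txi$, with operator norm controlled by $C(\widehat V)$ times $(k+\ell)$ times a factor linear in the relevant $\xi$-variable. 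Given two solutions with identical initial data I set $\delta\mu^{(k,\ell)}_t$ for their difference; it solves the Duhamel system with zero initial datum and satisfies $|\delta\mu^{(k,\ell)}_t|\le2$. Iterating the Duhamel identity $n$ times writes $\delta\mu^{(k,\ell)}_t$ as a finite sum over ``collision histories'' of length $n$ (at each step $k$ or $\ell$ is raised) of nested time integrals of products of $\mathcal C$-operators applied to $\delta\mu$ at level $(k',\ell')$ with $k'+\ell'=k+\ell+n$. Bounding the number of histories by $4^n$, the per-step counting and norm factors by $\prod_{j=0}^{n-1}C(\widehat V)(k+\ell+j)$, the time-ordered integral by $t^n/n!$, and $|\delta\mu|\le2$ at the deepest level, one obtains (in the chosen weighted norm)
\[
|\delta\mu^{(k,\ell)}_t|\ \le\ 2\sum_{n\ge N}\binom{k+\ell+n-1}{n}\,(C_{k,\ell,\widehat V}\,t)^{n},
\]
which, since $\binom{k+\ell+n-1}{n}\sim n^{k+\ell-1}/(k+\ell-1)!$, is the tail of a series converging for $t$ in an interval $[0,T_0]$ depending only on $k+\ell$ and the $\widehat V_{\alpha\beta}$. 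Letting $N\to\infty$ yields $\delta\mu^{(k,\ell)}_t\equiv0$, hence $\delta m^{(k,\ell)}_t\equiv0$, on $[0,T_0]$; since $|\mu^{(k,\ell)}_t|\le1$ for all $t\ge0$, the argument restarts at $T_0,2T_0,\dots$, giving uniqueness on $[0,\infty)$.

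The main obstacle is the functional-analytic setup just sketched: each interaction operator produces a factor of a momentum-dual variable $\xi_j$ (from $\nabla_{p_j}$) and of a frequency $\zeta$ (from $\nabla V_{\alpha\beta}$), and the free flow keeps shifting $\eta_j\mapsto\eta_j+\zeta$ and $\xi_j\mapsto\xi_j+(t-s)\eta_j$, so these factors grow along the iteration. One must therefore choose a norm on the family $\{\mu^{(k,\ell)}\}$ — a sup norm with a polynomial weight in $(\xi,\txi)$ together with a geometric (or $1/(k+\ell)!$-type) weight in the particle number — that is essentially preserved by the free propagator and under which the collision operators are bounded with constants summing to a convergent series over short times. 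The $C_0$ assumption on $\widehat V_{\alpha\beta}$ is precisely what makes the $\zeta$-integrals in the collision kernels converge; once this framework is in place, the two-species combinatorics (tracking which species index is raised at each step) and the bootstrap in time are routine, and they simply reproduce the one-species scheme of \cite{Narnhofer1981} with an extra label.
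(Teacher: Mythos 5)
Your overall strategy is the same as the paper's: pass to characteristic functions, absorb the free transport by evaluating along the free flow (interaction representation), iterate the resulting Duhamel identity, use $|\mu^{(k,\ell)}|\le 1$, and beat the growth of the collision histories with $t^{n}/n!$ on a short time interval, then bootstrap. However, there is a genuine gap at the one step you yourself flag and then declare routine: each application of a collision operator multiplies by an unbounded factor of the form $\xi_j\cdot\zeta$ (with $\xi_j$ itself shifted by the flow and by earlier collisions), and your claimed per-step bound ``$C(\widehat{V})\,(k+\ell+j)$'', which produces the binomial series $\sum_n\binom{k+\ell+n-1}{n}(Ct)^n$, is not justified as stated. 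A single sup norm with a fixed polynomial weight in $(\xi,\tilde\xi)$ cannot be ``essentially preserved'': every collision raises the required weight by one power, so the scheme does not close in such a space, and the asserted operator bounds are exactly what is missing. The paper closes this step pointwise rather than in a weighted norm: writing the coherent-state projection produces Dirac deltas that pin each newly created variable to $\xi_{\mathrm{new}}=\eta_{\mathrm{new}}t_m$, and the hypothesis $\widehat{V}_{\alpha\beta}\in C_0$ is used through the \emph{compact support} of $\widehat{V}_{\alpha\beta}$ (finite $B=\sup\{|\eta|:\eta\in\operatorname{supp}\widehat{V}_{\alpha\beta}\}$ and finite volume of the support), so that after $m$ collisions the offending linear factor is at most $(C+D|t|)(k+\ell)+(k+\ell+m)B|t|$, where $C,D$ depend only on the fixed external arguments. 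The product of $L$ such factors divided by $L!$ leaves a geometric factor of order $(\mathrm{const}\,AB^{4}t^{2})^{L}$, which vanishes for $|t|\le\tau$ with $\tau$ independent of $k,\ell$, of the evaluation point, and of the initial datum.

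Two smaller points. First, your statement that $\widehat{V}_{\alpha\beta}\in C_0$ makes $\zeta\mapsto(\xi_j\cdot\zeta)\widehat{V}_{\alpha\beta}(\zeta)$ integrable needs compact support (the paper's reading of $C_0$); mere decay at infinity would not suffice. Second, your bootstrap requires the uniqueness interval to be uniform over all levels $(k',\ell')$ reached by the iteration, because restarting at $T_0$ needs the higher-level differences to vanish at that time as well; you write that $T_0$ may depend on $k+\ell$, which would undermine the restart. In the paper's estimate the radius $\tau$ is level-independent precisely because the $(k+\ell+m)$-growth is absorbed polynomially against $L!$, and this uniformity is what legitimizes the extension to arbitrary times.
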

	\begin{proof}
		We reformulate the Vlasov hierarchy in interaction representation. Thus, we start by defining
		\[
		\bar{\mu}_t:=\left\{ \bar{\mu}_t^{(k,\ell)} \;\bigg|\; (k,\ell) \in \bN \times \bN \setminus \{(0,0)\} \right\}
		\] 
		to be the sequence of characteristic functions, related to $\mu_{\infty,\infty,t}$ by the formula
		\[
		\bar{\mu}_t^{(k,\ell)} := \mu_{\infty,\infty,t}^{(k,\ell)}(\xi_1-\eta_1 t, \eta_1, \dots, \xi_k - \eta_k t, \eta_k, \txi_1-\teta_1 t, \teta_1, \dots, \txi_\ell - \teta_\ell t, \teta_\ell );
		\]
		and then note that the Vlasov hierarchy \eqref{eq:BBGKY_limit} is equivalent to the equation
		\begin{equation}\label{eq: equivalent to the equation for Vlasov hierarchy}
			\bar{\mu}_t = \bar{\mu}_0 + \int_{0}^{t} \dt_1 K(t_1) \bar{\mu}_{t_1},
		\end{equation}
		where
		\begin{align*}
			&\left(K(t)\,\bar{\mu}_{t}\right)^{(k,\ell)}(\xi_{1},\eta_{1},\dots,\xi_{k},\eta_{k},\txi_{1},\teta_{1},\dots,\txi_{\ell},\teta_{\ell})\\
			&:={n_1}\sum_{j=1}^{k}\int\dd\eta_{k+1}\,\widehat{V}_{11}(\eta_{k+1})\,\eta_{k+1}(\xi_{j}-\eta_{j}t)\\
			&\qquad\times\bar{\mu}_{t}^{(k+1,\ell)}(\xi_{1},\eta_{1},\dots,\xi_{j}+\eta_{k+1}t,\eta_{j}+\eta_{k+1},\dots,\xi_{k},\eta_{k},-\eta_{k+1}t,-\eta_{k+1},\txi_{1},\teta_{1},\dots,\txi_{\ell},\teta_{\ell})\\
			&\quad+{n_2}\sum_{j=1}^{k}\int\dd\teta_{\ell+1}\,\widehat{V}_{12}(\teta_{\ell+1})\,\teta_{\ell+1}(\xi_{j}-\eta_{j}t)\\
			&\quad\qquad\times\bar{\mu}_{t}^{(k,\ell+1)}(\xi_{1},\eta_{1},\dots,\xi_{j}+\teta_{\ell+1}t,\eta_{j}+\teta_{\ell+1},\dots,\xi_{k},\eta_{k},\txi_{1},\teta_{1},\dots,\txi_{\ell},\teta_{\ell},-\teta_{\ell+1}t,-\teta_{\ell+1})\\
			&\quad+{n_1}\sum_{j=1}^{\ell}\int\dd\eta_{k+1}\,\widehat{V}_{21}(\eta_{k+1})\,\eta_{k+1}(\txi_{j}-\teta_{j}t)\\
			&\quad\qquad\times\bar{\mu}_{t}^{(k+1,\ell)}(\xi_{1},\eta_{1},\dots,\xi_{k},\eta_{k},-\eta_{k+1}t,-\eta_{k+1},\txi_{1},\teta_{1},\dots,\txi_{j}+\eta_{k+1}t,\teta_{j}+\eta_{k+1},\dots\txi_{\ell},\teta_{\ell})\\
			&\quad+{n_2}\sum_{j=1}^{\ell}\int\dd\teta_{\ell+1}\,\widehat{V}_{22}(\teta_{\ell+1})\,\teta_{\ell+1}(\txi_{j}-\teta_{j}t)\\
			&\quad\qquad\times\bar{\mu}_{t}^{(k,\ell+1)}(\xi_{1},\eta_{1},\dots,\xi_{k},\eta_{k},\txi_{1},\teta_{1},\dots,,\txi_{j}+\teta_{\ell+1}t,\teta_{j}+\teta_{\ell+1},\dots\txi_{\ell},\teta_{\ell},-\teta_{\ell+1}t,-\teta_{\ell+1}).
		\end{align*}
		For any given $L>0$, a straightforward iteration of  \eqref{eq: equivalent to the equation for Vlasov hierarchy} yields the formula
		\begin{equation}\label{eq: iteration of mt}
			\bar{\mu}_t = \bar{\mu}_0 + \sum_{j=1}^{L-1} \int_{0}^{t} \dt_1 \dots \int_{0}^{t_j - 1} \dt_j \, K(t_1) \dots K(t_j) \, \bar{\mu}_{0} + \Delta_L \bar{\mu}_t
		\end{equation}
		where
		\begin{equation}\label{eq: def of Delta_L bar m_t}
			\Delta_L \bar{\mu}_t := \int_{0}^{t} \dt_1 \dots \int_{0}^{t_{L - 1}} \dt_L  K(t_1) \dots K(t_L) \bar{\mu}_{t_L}.
		\end{equation}
		
		Now we can write \eqref{eq: def of Delta_L bar m_t} differently as follows:
		First, we define operator $\theta_{1,j}(\xi,\eta)$ on the functions on $(\bR^3)^{2k}\times(\bR^3)^{2\ell}$ with $(\xi,\eta)\in (\bR^3)^2$ and $k\geq j$ such that
		\begin{equation}\label{eq:theta_1}
			\begin{aligned}
				&\left(\theta_{1,j}(\xi,\eta)f\right)(\xi_1,\eta_1,\dots,\xi_k,\eta_k,\txi_1,\teta_1,\dots,\txi_\ell,\teta_\ell)
				\\
				&:=
				f(\xi_1,\eta_1,\dots,\xi_j-\xi,\eta_j-\eta,\dots,\xi_k,\eta_k,\txi_1,\teta_1,\dots,\txi_\ell,\teta_\ell)
			\end{aligned}
		\end{equation}
		and operator $\theta_{2,j}(\txi,\teta)$ on the functions on $(\bR^3)^{2k}\times(\bR^3)^{2\ell}$ with $(\txi,\teta)\in (\bR^3)^2$ and $\ell\geq j$ such that
		\begin{equation}\label{eq:theta_2}
			\begin{aligned}
				&\left(\theta_{2,j}(\txi,\teta)f\right)(\xi_1,\eta_1,\dots,\xi_k,\eta_k,\txi_1,\teta_1,\dots,\txi_\ell,\teta_\ell)\\
				&:= f(\xi_1,\eta_1,\dots,\xi_k,\eta_k,\txi_1,\teta_1,\dots,\txi_j - \txi, \teta_j - \teta, \dots, \txi_\ell,\teta_\ell).
			\end{aligned}
		\end{equation}
		
		The integral in Equation \eqref{eq: equivalent to the equation for Vlasov hierarchy}, depending on the specific values of $\alpha,\beta$ in $V_{\alpha\beta}$, can be categorized into four branches. For the sake of clarity, we aim to represent these four branches uniformly. To do so, we introduce some notations and functions: $\xi_{1,j}:=\xi_j,\ \xi_{2,j}:=\txi_j,\ \fks_1(k,\ell):=k,\ \fks_2(k,\ell):=\ell,\ \fkk_\alpha:=2-\alpha,\ \fkl_\alpha := \alpha -1,\ \fkK_m:=\sum_{d=1}^{m}\fkk_{\beta_d},\ \fkL_m:=\sum_{d=1}^{m}\fkl_{\beta_d}$ for $\alpha,\beta_d\in\{1,2\}$. With these definitions, we can obtain:
		\begin{align*}
			&\left(K(t)\,\bar{\mu}_{t}\right)^{(k,\ell)}(\xi_{1},\eta_{1},\dots,\xi_{k},\eta_{k},\txi_{1},\teta_{1},\dots,\txi_{\ell},\teta_{\ell})\\
			&=-{n_1}\sum_{j=1}^{\fks_1(k,\ell)}\int\dd\eta_{1,\fks_1(k,\ell)+1}\dd\xi_{1,\fks_1(k,\ell)+1}\delta(\xi_{1,\fks_1(k,\ell)+1}-\eta_{1,\fks_1(k,\ell)+1}t)\,\widehat{V}_{11}(-\eta_{1,\fks_1(k,\ell)+1})\,\eta_{1,\fks_1(k,\ell)+1}(\xi_{1,j}-\eta_{1,j}t)\\
			&\qquad\times\theta_{1,j}(\xi_{1,\fks_1(k,\ell)+1},\eta_{1,\fks_1(k,\ell)+1})\bar{\mu}_{t}^{(k+\fkk_1,\ell+\fkl_1)}(\xi_{1},\eta_{1},\dots,\xi_{k+\fkk_1},\eta_{k+\fkk_1},\txi_{1},\teta_{1},\dots,\txi_{\ell+\fkl_1},\teta_{\ell+\fkl_1})\\
			&\quad-{n_2}\sum_{j=1}^{\fks_1(k,\ell)}\int\dd\eta_{2,\fks_2(k,\ell)+1}\dd\xi_{2,\fks_2(k,\ell)+1}\delta(\xi_{2,\fks_2(k,\ell)+1}-\eta_{2,\fks_2(k,\ell)+1}t)\,\widehat{V}_{12}(-\eta_{2,\fks_2(k,\ell)+1})\,\eta_{2,\fks_2(k,\ell)+1}(\xi_{1,j}-\eta_{1,j}t)\\
			&\quad\qquad\times\theta_{1,j}(\xi_{2,\fks_2(k,\ell)+1},\eta_{2,\fks_2(k,\ell)+1})\bar{\mu}_{t}^{(k+\fkk_2,\ell+\fkl_2)}(\xi_{1},\eta_{1},\dots,\xi_{k+\fkk_2},\eta_{k+\fkk_2},\txi_{1},\teta_{1},\dots,\txi_{\ell+\fkl_2},\teta_{\ell+\fkl_2})\\
			&\quad-{n_1}\sum_{j=1}^{\fks_2(k,\ell)}\int\dd\eta_{1,\fks_1(k,\ell)+1}\dd\xi_{1,\fks_1(k,\ell)+1}\delta(\xi_{1,\fks_1(k,\ell)+1}-\eta_{1,\fks_1(k,\ell)+1}t)\,\widehat{V}_{21}(-\eta_{1,\fks_1(k,\ell)+1})\,\eta_{1,\fks_1(k,\ell)+1}(\xi_{2,j}-\eta_{2,j}t)\\
			&\quad\qquad\times\theta_{2,j}(\xi_{1,\fks_1(k,\ell)+1},\eta_{1,\fks_1(k,\ell)+1})\bar{\mu}_{t}^{(k+\fkk_1,\ell+\fkl_1)}(\xi_{1},\eta_{1},\dots,\xi_{k+\fkk_1},\eta_{k+\fkk_1},\txi_{1},\teta_{1},\dots,\txi_{\ell+\fkl_1},\teta_{\ell+\fkl_1})\\
			&\quad-{n_2}\sum_{j=1}^{\fks_2(k,\ell)}\int\dd\eta_{2,\fks_2(k,\ell)+1}\dd\xi_{2,\fks_2(k,\ell)+1}\delta(\xi_{2,\fks_2(k,\ell)+1}-\eta_{2,\fks_2(k,\ell)+1}t)\,\widehat{V}_{22}(-\eta_{2,\fks_2(k,\ell)+1})\,\eta_{2,\fks_2(k,\ell)+1}(\xi_{2,j}-\teta_{2,j}t)\\
			&\quad\qquad\times\theta_{2,j}(\xi_{2,\fks_2(k,\ell)+1},\eta_{2,\fks_2(k,\ell)+1})\bar{\mu}_{t}^{(k+\fkk_2,\ell+\fkl_2)}(\xi_{1},\eta_{1},\dots,\xi_{k+\fkk_2},\eta_{k+\fkk_2},\txi_{1},\teta_{1},\dots,\txi_{\ell+\fkl_2},\teta_{\ell+\fkl_2})\\
			&=-\sum_{\alpha,\beta\in\{1,2\}}{n_\beta}\sum_{j=1}^{\fks_\alpha(k,\ell)}\int\dd\eta_{\beta,\fks_\beta(k,\ell)+1}\dd\xi_{\beta,\fks_\beta(k,\ell)+1}\delta(\xi_{\beta,\fks_\beta(k,\ell)+1}-\eta_{\beta,\fks_\beta(k,\ell)+1}t)\\
			&\quad\qquad\widehat{V}_{\alpha\beta}(-\eta_{\beta,\fks_\beta(k,\ell)+1})\,\eta_{\beta,\fks_\beta(k,\ell)+1}(\xi_{\alpha,j}-\eta_{\alpha,j}t)\\
			&\quad\qquad\times\theta_{\alpha,j}(\xi_{\beta,\fks_\beta(k,\ell)+1},\eta_{\beta,\fks_\beta(k,\ell)+1})\bar{\mu}_{t}^{(k+\fkk_\beta,\ell+\fkl_\beta)}(\xi_{1},\eta_{1},\dots,\xi_{k+\fkk_\beta},\eta_{k+\fkk_\beta},\txi_{1},\teta_{1},\dots,\txi_{\ell+\fkl_\beta},\teta_{\ell+\fkl_\beta}).
		\end{align*}
		With this result, we have that:
		\begin{equation}\label{eq:delta_Lkl}
			\begin{aligned}
				& (\Delta_{L}\bar{\mu}_{t})^{(k,\ell)}(\xi_{1},\eta_{1},\dots,\xi_{k},\eta_{k},\txi_{1},\teta_{1},\dots,\txi_{\ell},\teta_{\ell})\\
				& =(-1)^{L}\sum_{\alpha_1,\beta_1\in\{1,2\}}\dots\sum_{\alpha_L,\beta_L\in\{1,2\}}\sum_{j_1=1}^{\fks_{\alpha_1}(k,\ell)}\cdots\sum_{j_L=1}^{\fks_{\alpha_L}(k+\fkK_{L-1},\ell+\fkL_{L-1})}\int\limits_{0}^{t}\dd t_{1}\dots\int\limits_{0}^{t_{L}-1}\dd t_{L}\\
				&\quad \prod_{m=1}^L \dd\eta_{{\beta_m},\fks_{\beta_m}(k+\fkK_{m},\ell+\fkL_{m})}\dd\xi_{{\beta_m},\fks_{\beta_m}(k+\fkK_{m},\ell+\fkL_{m})}\delta(\xi_{{\beta_m},\fks_{\beta_m}(k+\fkK_{m},\ell+\fkL_{m})}-\eta_{{\beta_m},\fks_{\beta_m}(k+\fkK_{m},\ell+\fkL_{m})}t_m)\\
				&\quad\Big\{\widehat{V}_{{\alpha_1}{\beta_1}}(-\eta_{{\beta_1},\fks_{\beta_1}(k+\fkK_1,\ell+\fkL_1)})\,\eta_{{\beta_1},\fks_{\beta_1}(k+\fkK_1,\ell+\fkL_1)}(\xi_{{\alpha_1},{j_1}}-\eta_{{\alpha_1},{j_1}}t_1)\theta_{{\alpha_1},{j_1}}(\xi_{{\beta_1},\fks_{\beta_1}(k+\fkK_1,\ell+\fkL_1)},\eta_{{\beta_1},\fks_{\beta_1}(k+\fkK_1,\ell+\fkL_1)})\\
				&\cdots\widehat{V}_{{\alpha_L}{\beta_L}}(-\eta_{{\beta_L},\fks_{\beta_L}(k+\fkK_L,\ell+\fkL_L)})\,\eta_{{\beta_L},\fks_{\beta_L}(k+\fkK_L,\ell+\fkL_L)}(\xi_{{\alpha_L},{j_L}}-\eta_{{\alpha_L},{j_L}}t_L)\theta_{{\alpha_L},{j_L}}(\xi_{{\beta_L},\fks_{\beta_L}(k+\fkK_L,\ell+\fkL_L)},\eta_{{\beta_L},\fks_{\beta_L}(k+\fkK_L,\ell+\fkL_L)})\\
				&\quad\bar{\mu}_{t}^{(k+\fkK_L,\ell+\fkL_L)}(\xi_{1},\eta_{1},\dots,\xi_{k+\fkK_L},\eta_{k+\fkK_L},\txi_{1},\teta_{1},\dots,\txi_{\ell+\fkL_L},\teta_{\ell+\fkL_L})\Big\}.
			\end{aligned}
		\end{equation}
		Due to the complexity of the symbols, for a better understanding of the above expression, we will select the first branch, $\alpha_m=\beta_m=1$, at each step to obtain a sub-branch, which will be demonstrated as a special case as follows:
		\begin{equation}
			\begin{aligned}
				&\left(-1\right)^{L}\sum_{j_1=1}^{k}\cdots\sum_{j_L=1}^{k+L-1}\int\limits_{0}^{t}\dd t_{1}\dots\int\limits_{0}^{t_{L}-1}\dd t_{L}\prod_{m=1}^L \dd\eta_{k+m}\dd\xi_{k+m}\delta(\xi_{k+m}-\eta_{k+m}t_m)\\
				&\quad\Big\{\widehat{V}_{11}(-\eta_{k+1})\,\eta_{k+1}(\xi_{{j_1}}-\eta_{{j_1}}t_1)\theta_{1,{j_1}}(\xi_{k+1},\eta_{k+1})
				\cdots\widehat{V}_{11}(-\eta_{k+L})\,\eta_{k+L}(\xi_{{j_L}}-\eta_{{j_L}}t_L)\theta_{1,{j_L}}(\xi_{k+L},\eta_{k+L})\\
				&\quad\bar{\mu}_{t}^{(k+L,\ell)}(\xi_{1},\eta_{1},\dots,\xi_{k+L},\eta_{k+L},\txi_{1},\teta_{1},\dots,\txi_{\ell},\teta_{\ell})\Big\}.
			\end{aligned}
		\end{equation}
		This term is the same as the results obtained for the case with only one species of particles \cite{Narnhofer1981}. Using equations \eqref{eq:theta_1} and \eqref{eq:theta_2}, it can be obtained that the quantity inside the braces in equation \eqref{eq:delta_Lkl} is equal to
		\begin{align}
			&\widehat{V}_{{\alpha_1}{\beta_1}}(-\eta_{{\beta_1},\fks_{\beta_1}(k+\fkK_1,\ell+\fkL_1)})\,\eta_{{\beta_1},\fks_{\beta_1}(k+\fkK_1,\ell+\fkL_1)}(\xi_{{\alpha_1},{j_1}}-\eta_{{\alpha_1},{j_1}}t_1)\nonumber\\
			&\widehat{V}_{{\alpha_2}{\beta_2}}(-\eta_{{\beta_2},\fks_{\beta_2}(k+\fkK_2,\ell+\fkL_2)})\,\eta_{{\beta_2},\fks_{\beta_2}(k+\fkK_2,\ell+\fkL_2)}(\xi_{{\alpha_2},{j_2}}-\eta_{{\alpha_2},{j_2}}t_2-\delta_{\alpha_1\alpha_2}\delta_{j_1 j_2}(\xi_{{\beta_1},\fks_{\beta_1}(k+\fkK_1,\ell+\fkL_1)}-\eta_{{\beta_1},\fks_{\beta_1}(k+\fkK_1,\ell+\fkL_1)}t_2))\nonumber\\
			&\cdots\widehat{V}_{{\alpha_L}{\beta_L}}(-\eta_{{\beta_L},\fks_{\beta_L}(k+\fkK_L,\ell+\fkL_L)})\,\eta_{{\beta_L},\fks_{\beta_L}(k+\fkK_L,\ell+\fkL_L)}\nonumber\\
			&\qquad(\xi_{{\alpha_L},{j_L}}-\eta_{{\alpha_L},{j_L}}t_L-\sum_{m=1}^{L-1}\delta_{\alpha_m\alpha_L}\delta_{j_m j_L}(\xi_{{\beta_m},\fks_{\beta_m}(k+\fkK_m,\ell+\fkL_m)}-\eta_{{\beta_m},\fks_{\beta_m}(k+\fkK_m,\ell+\fkL_m)}t_L))\nonumber\\
			&\big[\theta_{{\alpha_1},{j_1}}(\xi_{{\beta_1},\fks_{\beta_1}(k+\fkK_1,\ell+\fkL_1)},\eta_{{\beta_1},\fks_{\beta_1}(k+\fkK_1,\ell+\fkL_1)})
			\cdots
			\theta_{{\alpha_L},{j_L}}(\xi_{{\beta_L},\fks_{\beta_L}(k+\fkK_L,\ell+\fkL_L)},\eta_{{\beta_L},\fks_{\beta_L}(k+\fkK_L,\ell+\fkL_L)})\nonumber\\
			&\bar{\mu}_{t}^{(k+\fkK_L,\ell+\fkL_L)}(\xi_{1},\eta_{1},\dots,\xi_{k+\fkK_L},\eta_{k+\fkK_L},\txi_{1},\teta_{1},\dots,\txi_{\ell+\fkL_L},\teta_{\ell+\fkL_L})\big].\label{eq:term_in_braces}
		\end{align}
		Since $\bar{\mu}_{t}^{(k+\fkK_L,\ell+\fkL_L)}$ is a characteristic function, the absolute value of the expression inside the brackets in \eqref{eq:term_in_braces} cannot exceed 1. Additionally, the presence of the Dirac delta functions in \eqref{eq:delta_Lkl} enables us to substitute $\xi_{{\beta_m},\fks_{\beta_m}(k+\fkK_{m},\ell+\fkL_{m})}$ with $\eta_{{\beta_m},\fks_{\beta_m}(k+\fkK_{m},\ell+\fkL_{m})}t_m$, for $m\leq L$, within \eqref{eq:term_in_braces}. As a result, the modulus of the expression in \eqref{eq:term_in_braces} is restricted by
		\begin{equation}\label{eq:term_in_braces_bdd}
			\begin{aligned}
				& \Big|\widehat{V}_{{\alpha_1}{\beta_1}}(-\eta_{{\beta_1},\fks_{\beta_1}(k+\fkK_1,\ell+\fkL_1)})\,\eta_{{\beta_1},\fks_{\beta_1}(k+\fkK_1,\ell+\fkL_1)}(\xi_{{\alpha_1},{j_1}}-\eta_{{\alpha_1},{j_1}}t_1)\Big|\\
				&\Big|\widehat{V}_{{\alpha_2}{\beta_2}}(-\eta_{{\beta_2},\fks_{\beta_2}(k+\fkK_2,\ell+\fkL_2)})\,\eta_{{\beta_2},\fks_{\beta_2}(k+\fkK_2,\ell+\fkL_2)}(\xi_{{\alpha_2},{j_2}}-\eta_{{\alpha_2},{j_2}}t_2+\delta_{\alpha_1\alpha_2}\delta_{j_1 j_2}\eta_{{\beta_1},\fks_{\beta_1}(k+\fkK_1,\ell+\fkL_1)}(t_2-t_1))\Big|\\
				&\cdots\Big|\widehat{V}_{{\alpha_L}{\beta_L}}(-\eta_{{\beta_L},\fks_{\beta_L}(k+\fkK_L,\ell+\fkL_L)})\,\eta_{{\beta_L},\fks_{\beta_L}(k+\fkK_L,\ell+\fkL_L)}\\
				&\qquad(\xi_{{\alpha_L},{j_L}}-\eta_{{\alpha_L},{j_L}}t_L+\sum_{m=1}^{L-1}\delta_{\alpha_m\alpha_L}\delta_{j_m j_L}\eta_{{\beta_m},\fks_{\beta_m}(k+\fkK_m,\ell+\fkL_m)}(t_L-t_m))\Big|.
			\end{aligned}
		\end{equation}
		Given the assumptions on $\widehat{V}_{\alpha\beta}$, we can define finite constants as $A:=\sup|\eta\widehat{V}_{\alpha\beta}(\eta)|$, $B:=\sup\{|\eta||\eta\in\supp\widehat{V}_{\alpha\beta}\}$, $C:=\max\Big\{|\xi_1|,\dots,|\xi_k|,|\txi_1|,\dots,|\txi_\ell|\Big\}$, and $D:=\max\Big\{|\eta_1|,\dots,|\eta_k|,|\teta_1|,\dots,|\teta_\ell|\Big\}$. Considering that the volume of $\supp\widehat{V}$ is constrained by $\frac43\pi B^3$ and employing \eqref{eq:term_in_braces_bdd}, following some straightforward computations, we obtain:
		\begin{align*}
			&\left|\Delta_{L}\bar{\mu}_{t}^{(k,\ell)}(\xi_{1},\eta_{1},\dots,\xi_{k},\eta_{k},\txi_{1},\teta_{1},\dots,\txi_{\ell},\teta_{\ell})\right|\\
			&\leq \left(4\cdot\frac43\pi B^3 A\right)^{L}((C+D|t|)(k+\ell))((C+D|t|)(k+\ell)+(k+\ell+1)B|t|)\\
			&\qquad\cdots((C+D|t|)(k+\ell)+(k+\ell+L-1)B|t|)\frac{|t|^{L}}{L!}.
		\end{align*}
		
		Therefore, we can establish that as $L$ tends to infinity, $\Delta_{L}\bar{\mu}_{t}^{(k,\ell)}(\xi_{1},\eta_{1},\dots,\xi_{k},\eta_{k},\txi_{1},\teta_{1},\dots,\txi_{\ell},\teta_{\ell})$ converges to zero. This holds for all $(\xi_{1},\eta_{1},\dots,\xi_{k},\eta_{k},\txi_{1},\teta_{1},\dots,\txi_{\ell},\teta_{\ell})\in(\bR^3)^{2k}\times(\bR^3)^{2\ell}$ and $|t|\leq\tau$, where $\tau$ is chosen to be $\left(\frac{2^5 }{3}\pi B^4 A\right)^{-1/2}$. As a result, for $|t|\leq \tau$, the equation \eqref{eq: equivalent to the equation for Vlasov hierarchy} admits a unique solution, represented by:
		
		\begin{equation}
			\bar{\mu}_t = \bar{\mu}_0 + \sum_{j=1}^{\infty} \int_{0}^{t} \dt_1 \dots \int_{0}^{t_j - 1} \dt_j \, K(t_1) \dots K(t_j) \, \bar{\mu}_{0}.
		\end{equation}
		
		Since the value of $\tau$ does not depend on $\bar{\mu}_0$, we can similarly establish the uniqueness of $\bar{\mu}_t$, and consequently, of $\mu_{\infty,\infty,t}$ (and $m_{\infty,\infty,t}$ simultaneously) over successive time intervals $[n\tau, (n + 1)\tau]$ and $[- (n + 1)\tau, -n\tau]$ for all $n\in \bN$. This, in turn, allows us to assert that the hierarchy \eqref{eq:BBGKY_limit} possesses a unique global solution.
	\end{proof}
	
	\section*{Acknowledgement}
	L.C. and J.L was partially funded by Center for Advanced Studies (CAS) at Ludwig-Maximilians-Universität (LMU) München. 
	J.L. was supported by the European Research Council (ERC CoG RAMBAS, Project Nr. 101044249), the Deutsche Forschungsgemeinschaft (DFG, German Research Foundation) – TRR 352 – Project-ID 470903074, 
	the Swiss National Science Foundation through the NCCR SwissMAP, the SNSF Eccellenza project PCEFP2\_181153, 
	by the Swiss State Secretariat for Research and Innovation through the project P.530.1016 (AEQUA), and
	Basic Science Research Program through the National Research Foundation of Korea(NRF) funded by the Ministry of Education (RS-2024-00411072).
	
	\section*{Declarations}
	\begin{itemize}
		\item Conflict of interest: The Authors have no conflicts of interest to declare that are relevant to the content of this article.
		\item Data availability: Data sharing is not applicable to this article as no datasets were generated or analysed during the current study.
	\end{itemize}
	
	\bibliographystyle{abbrv}
	\bibliography{refs}
	
\end{document}